\newtheorem{proposition}{Proposition}
\newtheorem{theorem}{Theorem}
\newtheorem{corollary}{Corollary}
\theoremstyle{definition}
\newtheorem{definition}{Definition}
\newtheorem{remark}{Remark}
\newcommand{\mc}[1]{\mathcal{#1}}
\newcommand{\ms}[1]{\mathsf{#1}}
\newcommand{\mf}[1]{\mathfrak{#1}}
\newcommand{\N}{\mathbb N}
\newcommand{\R}{\mathbb R}
\newcommand{\Z}{\mathbb Z}
\newcommand{\C}{\mathbb C}
\newcommand{\T}{\mathbb T}
\newcommand{\hil}{\mathcal{H}} 
\newcommand{\tr}[1]{\mathrm{tr}\left[#1\right]} 
\def\<{\langle}
\def\>{\rangle}
\newcommand{\id}{\mathbbm{1}} 
\newcommand{\fii}{\varphi}
\newcommand{\vN}{\otimes_{\rm vN}}
\begin{document}

\title{Joint measurements through quantum broadcasting}

\author{Erkka Haapasalo}
\address{Centre for Quantum Technologies, National University of Singapore, Science Drive 2, Block S15-03-18, Singapore 117543}
\email{cqteth@nus.edu.sg}

\maketitle

\begin{abstract}
We study generating joint measurements by operating on the input quantum state with a broadcasting channel followed by local measurements on the two outputs of the broadcasting channel. Although, due to perfect broadcasting or cloning being impossible, this scheme cannot generate perfect joint observables for all pairs of quantum observables, we study for what intended joint observables and local measurements this procedure can be carried out. For local sharp observables this is always possible as long as the intended joint observable is dominated by the tensor product of the local sharp observables. This result excludes the case of joint measurements of continuous sharp observables with themselves which we indeed show to be impossible to be generated through broadcasting. We also derive necessary and sufficient conditions for the success of the broadcasting scheme for local noisy observables and intended joint observables which are easily checked when the local observables are convex mixtures of sharp observables and white noise. We also study broadcasting of local fuzzy observables in the presence of symmetries. Finally we introduce the scenario of measuring joint observables where we have access to broadcasting, local measurements and post-processing and characterize the most resourceful local measurements in this setting.
\end{abstract}

\section{Introduction}

According to quantum theory, not all pairs of observables can be measured jointly. This {\it incompatibility} of quantum measurements clearly distinguishes quantum physics from classical physics where all physical quantities can be measured simultaneously and, indeed, have objective observer-independent values. Incompatibility or complementarity of quantum observables explains measurement uncertainty relations \cite{BaGrTo2017,BaGrTo2018,BuLaWe2014a,BuLaWe2014b} giving intrinsic lower bounds for the accuracy of joint or sequential measurements. On the other hand, incompatibility can also be seen as a resource as incompatible observables are required for detecting all quantum steering \cite{UoMoGu2014}.

No cloning or, more generally, no broadcasting is another well-known non-classical feature of quantum physics. This means that quantum states cannot be physically mapped into states of a compound system where the reduced states are copies of the original state. Were this possible, all pairs of quantum observables would actually be jointly measurable: just perfectly broadcast the initial state and then locally measure the observables which you wish to jointly measure. Although this method is not universally possible due to no broadcasting, this still provides a way to approximately measure quantum observables jointly. We know that even incompatible pairs of quantum observables can be approximately jointly measured in an optimal way, e.g.,\ with the least amount of noise introduced in the target observables measured by robustness measures \cite{DeFaKa2019,Haapasalo2015}. The question is, how general the broadcasting scheme for generating approximate joint observables for pairs of quantum observables is. Can we, e.g.,\ reach the lower bounds for the noise in joint measurements as characterized by the robustness measures?

We will see that we may generate any (approximate) joint observables for sharp quantum observables using the broadcasting method as long as the target joint observable is dominated by the tensor product of the sharp observables. A notable case where this absolute continuity condition does not hold is the exact joint observable for the joint measurement of a continuous sharp observable with itself. We will indeed see that the broadcasting scheme fails to generate these joint observables. This means that the broadcasting method is not a universal way of generating joint observables. Despite this no-go result (which only arises in the continuous case), broadcasting followed by local measurements remains an intuitive way of joint measuring. We illustrate this by describing broadcasting schemes for generating optimal joint measurements for position and momentum (finite and continuous) and mutually unbiased bases in general.

After introducing the basics of quantum measurements in Section \ref{sec:prel}, we describe how to generate joint measurements through broadcasting and local sharp measurements and derive the above no-go result in Section \ref{sec:jointPVM}. In Section \ref{sec:noisyPOVMs}, we derive necessary and sufficient conditions under which a given joint observable can be generated through broadcasting from given local (fuzzy) observables. These conditions greatly simplify when the local observables are convex combinations of sharp observables and white noise. Until now, we have concentrated only on the question whether the broadcasting scheme succeeds for given local observables and an intended joint observable, but in Section \ref{sec:examples} we study what the broadcasting channels generating optimal joint observables for relevant local measurements actually look like. As examples we consider finite and continuous position and momentum observables. In these cases, the broadcasting channels can be interpreted as parts of standard quantum measurements. As many physically relevant measurement settings reflect natural symmetries, we study broadcasting of local symmetric fuzzy observables into a symmetric joint observable in Section \ref{sec:symm}. We will see that in many situations, the broadcasting channel can also be assumed to be symmetric and that, especially in the case of finite symmetries, we can give simple necessary and sufficient conditions for broadcastability. In Section \ref{sec:BLMPP}, we study which (joint) observables can be reached from fixed local measurements when we have access to different broadcasting schemes and final classical data manipulation (post-processing). We see that rank-1 sharp observables reach the largest classes of observables in this setting and, finally see, in Subsection \ref{subsec:BLMPPMUB} that optimal joint measurements of mutually unbiased bases in particular can be reached from canonical rank-1 sharp observables using the broadcasting, local measurements, and post-processing protocol.

\section{Preliminaries and basic definitions}\label{sec:prel}

We use the convention $\N=\{1,\,2,\,3,\ldots\}$. For any (complex) Hilbert space $\hil$, we denote, respectively, by $\mc L(\hil)$, $\mc T(\hil)$, and $\mc U(\hil)$ the algebra of bounded linear operators on $\hil$, the set of trace-class operators on $\hil$, and the group of unitary operators on $\hil$. We denote the unit (identity operator) of $\mc L(\hil)$ by $\id_\hil$ and use notations $R,\,S,\,T.\ldots\in\mc L(\hil)$ for bounded operators. For trace-class operators, we use notations $\rho,\,\sigma,\,\tau,\ldots\in\mc T(\hil)$. We denote by $\mc S(\hil)$ the subset of positive $\rho\in\mc T(\hil)$ such that $\tr{\rho}=1$ (i.e.,\ $\rho$ is of unit trace).

We present here a quick overview of quantum measurement theory studied in more detail, e.g.,\ in \cite{kirja}. For a Hilbert space $\hil$ and a measurable space $(X,\mc A)$ (i.e.,\ $X$ is a non-empty set and $\mc A$ is a $\sigma$-algebra of subsets of $X$) we say that a map $\ms M:\mc A\to\mc L(\hil)$ is a {\it normalized positive-operator-valued measure (POVM)} if $\ms M(A)\geq0$ for all $A\in\mc A$, $\ms M(\emptyset)=0$, $\ms M(X)=\id_\hil$, and $\ms M\big(\cup_{i=1}^\infty A_i\big)=\sum_{i=1}^\infty\ms M(A_i)$, where the series is defined weakly, whenever $A_1,\,A_2,\ldots\in\mc A$ is a disjoint sequence. POVMs describe the state-dependent statistics of a quantum measurement and thus they are identified with observables. The probability of registering an outcome in a set $A\in\mc A$ in a measurement of a POVM $\ms M$ when the pre-measurement state is $\rho\in\mc S(\hil)$ is $p^{\ms M}_\rho(A):=\tr{\rho\ms M(A)}$. If $\ms M(A)$ is an orthogonal projection for all $A\in\mc A$, we say that the POVM $\ms M$ represents a {\it sharp observable} and we call $\ms M$ as a {\it normalized projection-valued measure (PVM)}.

Measurements of quantum observables (POVMs) cannot typically carried out simultaneously, but when this is possible (by measuring a third observable `containing' the two POVMs), we say that the observables are jointly measurable:

\begin{definition}
Let $\hil$ be a Hilbert space and $(X,\mc A)$ and $(Y,\mc B)$ be measurable spaces and define the measurable space $(X\times Y,\mc A\otimes\mc B)$ where $\mc A\otimes\mc B$ is the $\sigma$-algebra of $X\times Y$ generated by the product sets $A\times B$ where $A\in\mc A$ and $B\in\mc B$. We say that POVMs $\ms M:\mc A\to\mc L(\hil)$ and $\ms N:\mc B\to\mc L(\hil)$ are {\it jointly measurable} if they have a {\it joint observable} $\ms G:\mc A\otimes\mc B\to\mc L(\hil)$ which gives $\ms M$ and $\ms N$ as its margins $\ms G^{(1)}$ and $\ms G^{(2)}$, i.e.,\ for all $A\in\mc A$ and $B\in\mc B$,
$$
\ms M(A)=\ms G^{(1)}(A):=\ms G(A\times Y),\qquad\ms N(B)=\ms G^{(2)}(B):=\ms G(X\times B).
$$
\end{definition}

We will also need the concept of a discrete POVM. If the set $X$ is at most countably infinite, we say that a POVM $\ms M:2^X\to\mc L(\hil)$ is discrete. By defining $M_x:=\ms M(\{x\})$ for all $x\in X$, we can identify the discrete POVM $\ms M$ with a sequence $(M_x)_{x\in X}$ of positive operators such that $\sum_{x\in X}M_x=\id_\hil$ where the series converges weakly. Let $\ms M=(M_x)_{x\in X}$ and $\ms N=(N_y)_{y\in Y}$ be discrete POVMs in a Hilbert space $\hil$. A simple application of the above definition yields that $\ms M$ and $\ms N$ are jointly measurable if and only if there is a discrete POVM $\ms G=(G_{x,y})_{(x,y)\in X\times Y}$ in $\hil$ such that
$$
M_x=\sum_{y'\in Y}G_{x,y'},\qquad N_y=\sum_{x'\in X}G_{x',y}
$$
for all $x\in X$ and $y\in Y$ where the series converge weakly.

Let $\hil$ and $\mc K$ be Hilbert spaces. We say that a linear map $\Phi:\mc T(\hil)\to\mc T(\mc K)$ is an {\it operation} if its Heisenberg dual $\Phi^*:\mc L(\mc K)\to\mc L(\hil)$, defined through $\tr{\rho\Phi^*(R)}=\tr{\Phi(\rho)R}$ for all $\rho\in\mc T(\hil)$ and $R\in\mc L(\mc K)$, is completely positive, i.e.,\ for all $n\in\N$, $R_1,\ldots,\,R_n\in\mc L(\mc K)$, and $\fii_1,\ldots,\,\fii_n\in\hil$,
$$
\sum_{i,j=1}^n\<\fii_i|\Phi^*(R_i^*R_j)\fii_j\>\geq0.
$$
An operation is called a {\it channel} if it is trace preserving.

In this treatise, we are looking at the following procedure to approximately measure two POVMs $\ms M:\mc A\to\mc L(\hil)$ and $\ms N:\mc B\to\mc L(\hil)$ simultaneously: We pick a broadcasting channel $\Phi:\mc T(\hil)\to\mc T(\hil\otimes\hil)$ and broadcast the initial system state $\rho$ with it. Now we measure $\ms M$ on the first subsystem of this broadcasted state and $\ms N$ on the second. The outcome statistics $p:\mc A\otimes\mc B\to[0,1]$ are now given by $p(A\times B)=\tr{\Phi(\rho)\big(\ms M(A)\otimes\ms N(B)\big)}$ for all $A\in\mc A$ and $B\in\mc B$. Using the dual channel $\Phi^*:\mc L(\hil\otimes\hil)\to\mc L(\hil)$ this means that this procedure can be understood as the measurement of the POVM $\ms G:\mc A\otimes\mc B\to\mc L(\hil)$, $\ms G(A\times B)=\Phi^*\big(\ms M(A)\otimes\ms N(B)\big)$ for all $A\in\mc A$ and $B\in\mc B$. Scenario like this has been previously studied in conjunction with sequential measurements and disturbance in \cite{Heinosaari2016}.

\begin{definition}
Let $(X,\mc A)$ and $(Y,\mc B)$ be measurable spaces and $\mc K_1$ and $\mc K_2$ be a Hilbert spaces. For POVMs $\ms M:\mc A\to\mc L(\mc K_1)$ and $\ms N:\mc B\to\mc L(\mc K_2)$, define the POVM $\ms M\otimes\ms N:\mc A\otimes\mc B\to\mc L(\mc K_1\otimes\mc K_2)$ through $(\ms M\otimes\ms N)(A\times B)=\ms M(A)\otimes\ms N(B)$ for all $A\in\mc A$ and $B\in\mc B$. We say that, for another Hilbert space $\hil$, a POVM $\ms G:\mc A\otimes\mc B\to\mc L(\hil)$ {\it can be generated through broadcasting from $\ms M$ and $\ms N$} if there is a quantum channel $\Phi:\mc T(\hil)\to\mc T(\mc K_1\otimes\mc K_2)$ such that $\ms G=\Phi^*\circ(\ms M\otimes\ms N)$, i.e.,
$$
\ms G(A\times B)=\Phi^*\big(\ms M(A)\otimes\ms N(B)\big),\qquad A\in\mc A,\quad B\in\mc B.
$$
\end{definition}

\begin{figure}
\begin{center}
\begin{overpic}[scale=0.25,unit=1mm]{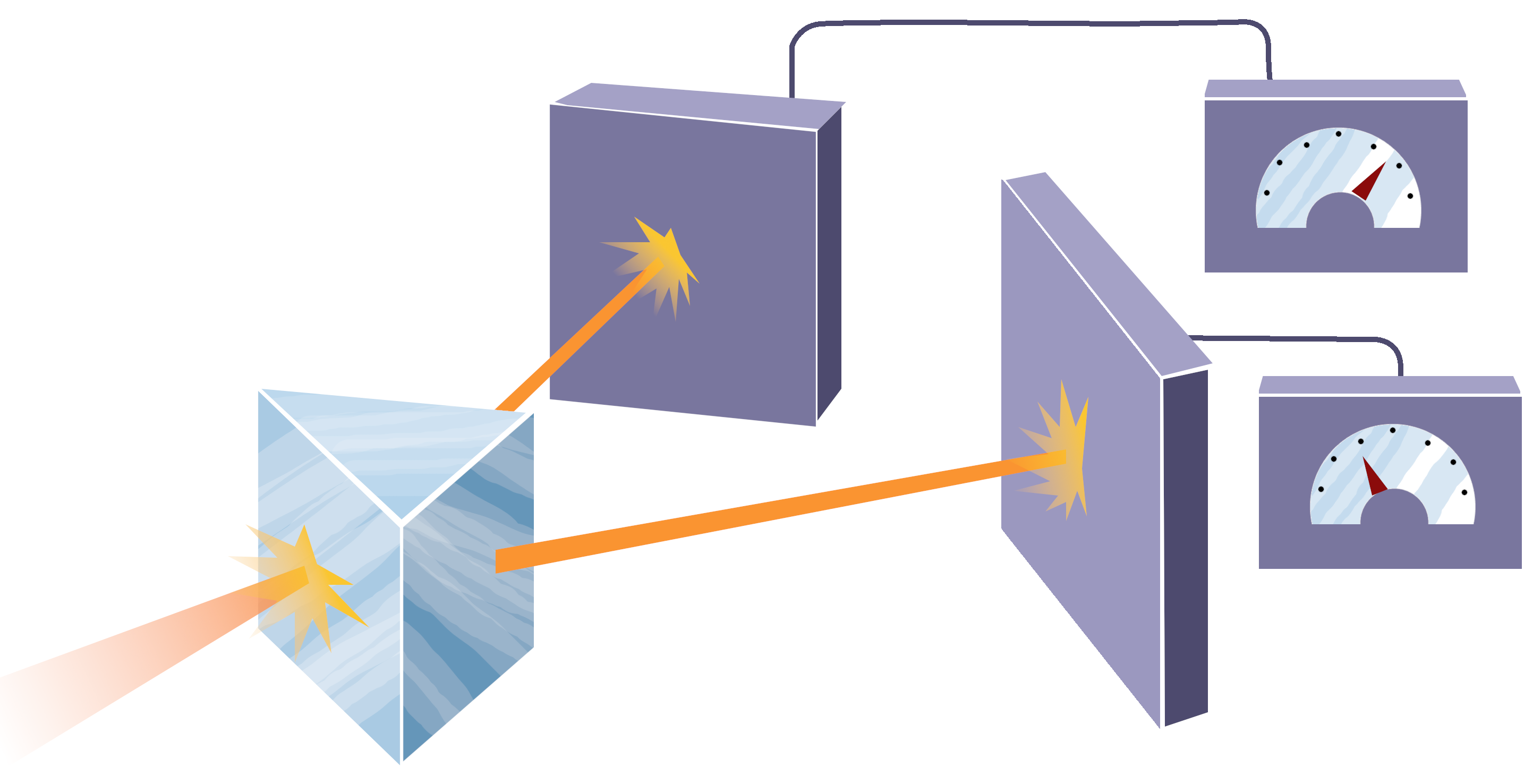}
\put(0,3){\begin{Large}
$\rho$
\end{Large}}
\put(6,13){\begin{Huge}
$\Phi$
\end{Huge}}
\put(18,17){\begin{Large}
$\rho'$
\end{Large}}
\put(31,12){\begin{Large}
$\rho''$
\end{Large}}
\put(53,28){\begin{Large}
$\ms M$
\end{Large}}
\put(53,3){\begin{Large}
$\ms N$
\end{Large}}
\end{overpic}
\includegraphics[scale=0.35]{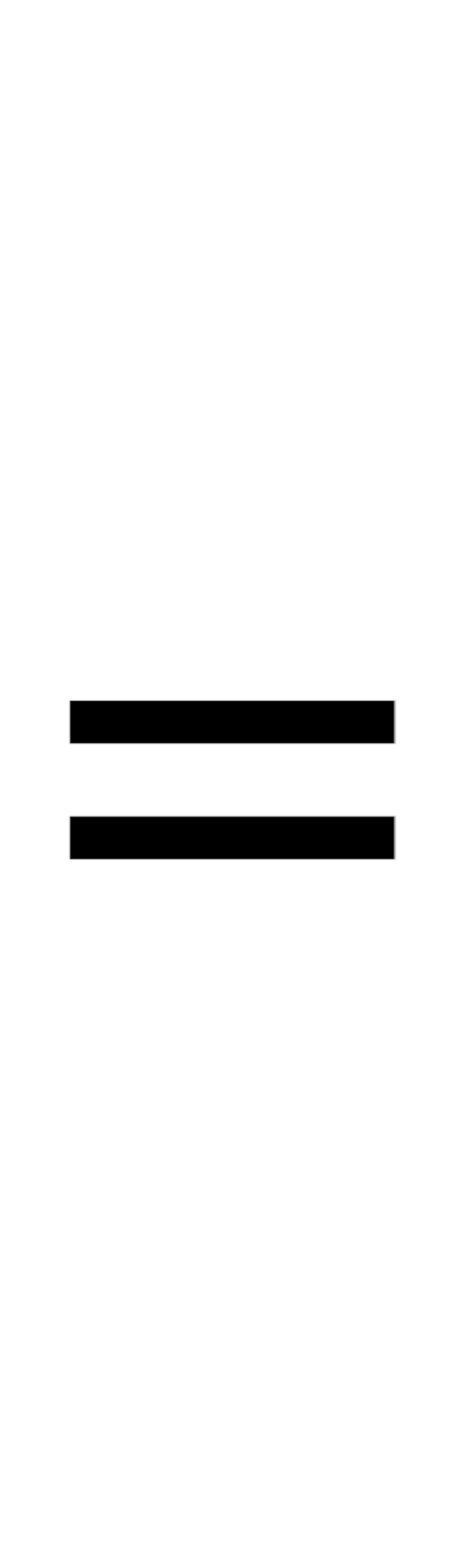}
\begin{overpic}[scale=0.25,unit=1mm]{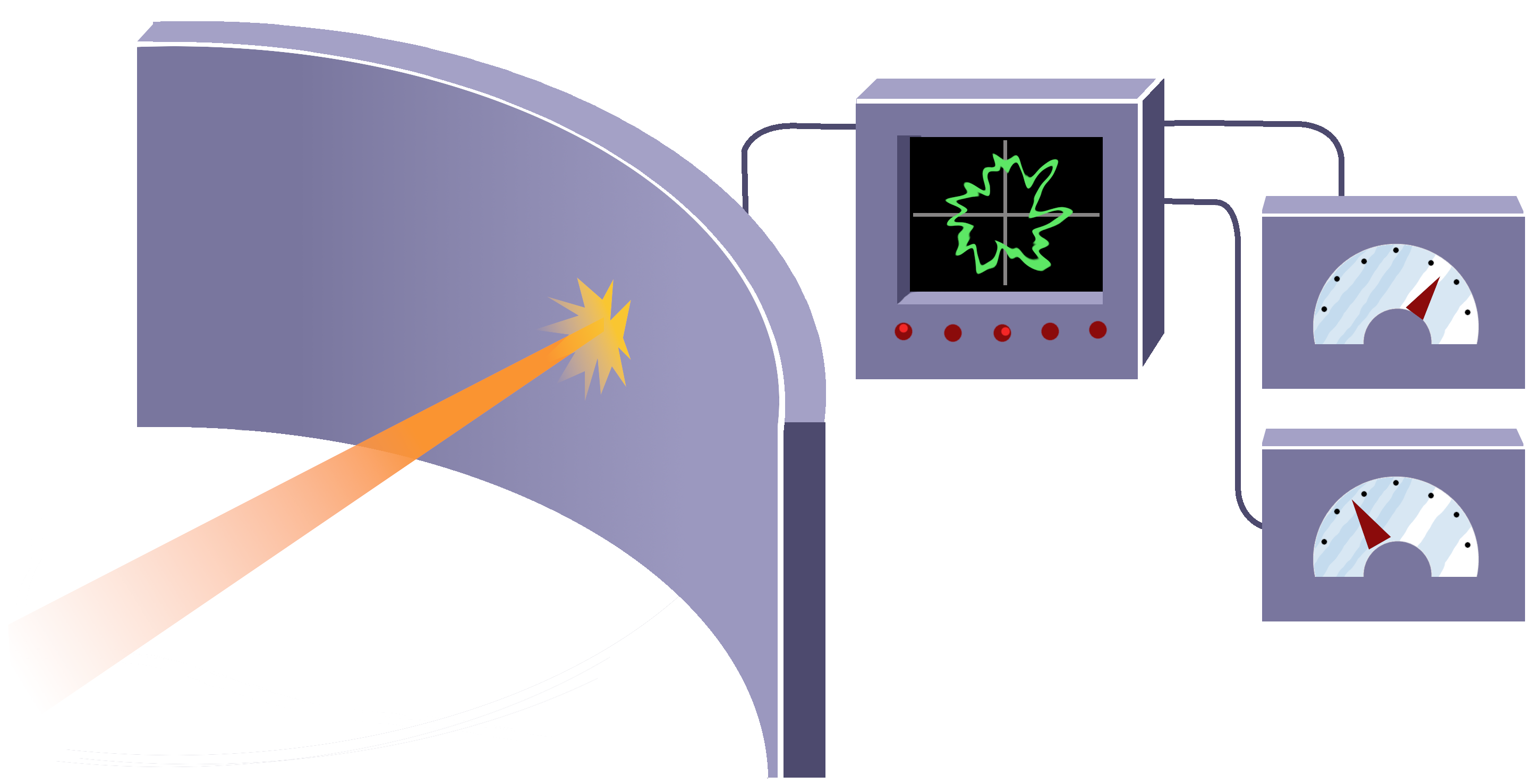}
\put(0,4){\begin{Large}
$\rho$
\end{Large}}
\put(38,10){\begin{Large}
$\ms G$
\end{Large}}
\put(48,26){\begin{large}
$\ms G^{(1)}\sim\ms M$
\end{large}}
\put(48,0){\begin{large}
$\ms G^{(2)}\sim\ms N$
\end{large}}
\end{overpic}
\caption{\label{fig:kloonimittaus} The input state $\rho$ is broadcast by the channel $\Phi$ (here represented by a beam-splitting crystal) and the approximate copies $\rho'$ and $\rho''$ are then measured separately: the POVM $\ms M$ is measured on $\rho'$ and $\ms N$ on $\rho''$. This procedure is equivalent to performing a joint measurement $\ms G$ whose margins $\ms G^{(1)}$ and $\ms G^{(2)}$ are intended as approximations of $\ms M$ and $\ms N$, respectively, or, formally, $\ms G(A\times B)=\Phi^*\big(\ms M(A)\otimes\ms N(B)\big)$ for all value sets $A$ for $\ms M$ and $B$ for $\ms N$. This is what we mean when we say that $\ms G$ is generated from $\ms M$ and $\ms N$ through broadcasting.}
\end{center}
\end{figure}

We typically have $\mc K_1=\mc K_2=\hil$ in the above definition. Let $\ms M$ and $\ms N$ be POVMs and $\Phi$ be a broadcasting channel as above. Since perfect quantum cloning or broadcasting is impossible, $\ms M$ typically differs from $\Phi^{(1)\,*}\circ\ms M$ and $\ms N$ differs from $\Phi^{(2)\,*}\circ\ms N$ where $\Phi^{(1)}$ is the composition of $\Phi$ and the partial trace over the second system and $\Phi^{(2)}$ is the composition of $\Phi$ and the partial trace over the first system. However, if $\Phi$ is a reasonably good cloner for the POVMs $\ms M$ and $\ms N$ in some sense, the POVM $\ms G=\Phi^*\circ(\ms M\otimes\ms N)$ might be a reasonably good approximate joint POVM for $\ms M$ and $\ms N$. The question is can any (approximate) joint POVM for $\ms M$ and $\ms N$ be realized through such a broadcasting scheme. We will see shortly that, if $\ms M$ and $\ms N$ are PVMs and $\hil$ is separable, this is true in many cases. However, in many canonical cases, joint measurements through cloning will always fail. These problems arise exclusively in the continuous case.

\section{Joint measurements for PVMs through broadcasting}\label{sec:jointPVM}

The next result basically tells us that any (approximate) joint measurement for two PVMs can be constructed through a broadcasting scheme provided that the joint measurement is suitably well behaved w.r.t.\ the two PVMs. Here and in sequel, we denote $\ms M\ll\ms N$ for any POVMs $\ms M:\mc A\to\mc L(\hil)$ and $\ms N:\mc A\to\mc L(\mc K)$, where $(X,\mc A)$ is some measurable space and $\hil$ and $\mc K$ are Hilbert spaces, if, whenever $A\in\mc A$ is such that $\ms N(A)=0$, then $\ms M(A)=0$ as well. Either one of $\ms M$ and $\ms N$ can also be a scalar measure.

\begin{theorem}\label{theor:PVMbroadcast}
Let $\hil$, $\mc K_1$, and $\mc K_2$ be separable Hilbert spaces, $(X,\mc A)$ and $(Y,\mc B)$ be measurable spaces, and $\ms P:\mc A\to\mc L(\mc K_1)$ and $\ms Q:\mc B\to\mc L(\mc K_2)$ be PVMs. Any POVM $\ms G:\mc A\otimes\mc B\to\mc L(\hil)$ such that $\ms G\ll\ms P\otimes\ms Q$ can be generated from $\ms P$ and $\ms Q$ through broadcasting.
\end{theorem}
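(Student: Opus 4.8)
The plan is to reduce the statement to a purely spectral fact about the single PVM $\ms E:=\ms P\otimes\ms Q$ on the separable space $\mc K_1\otimes\mc K_2$, and then to build the broadcasting channel by hand in the Schr\"odinger picture so that its normality comes for free. Note first that the tensor-product structure is inessential: $\ms E$ is itself a PVM, and I will only use that it is a PVM and that $\ms G\ll\ms E$, seeking a channel $\Phi$ with $\Phi^*\circ\ms E=\ms G$, since this gives $\Phi^*\big(\ms P(A)\otimes\ms Q(B)\big)=\ms G(A\times B)$. Using separability of $\mc K_1\otimes\mc K_2$, I would fix a total sequence $(\psi_n)$ of unit vectors and set $\mu:=\sum_n 2^{-n}\<\psi_n|\ms E(\cdot)\psi_n\>$, a probability measure on $(X\times Y,\mc A\otimes\mc B)$ that is a maximal spectral type for $\ms E$, i.e.\ $\ms E(C)=0\iff\mu(C)=0$. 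Then $\ms G\ll\ms E$ upgrades to $\ms G\ll\mu$, and a Radon--Nikodym argument applied to the scalar measures $\<\fii|\ms G(\cdot)\psi\>$ for $\fii,\psi$ in a countable dense set, together with a measurable-selection step using separability of $\hil$, yields a weakly measurable density $z\mapsto D(z)\in\mc L(\hil)$ with $D(z)\geq0$, $\ms G(C)=\int_C D(z)\,\d\mu(z)$ weakly, and $\int D\,\d\mu=\ms G(X\times Y)=\id_\hil$.

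Next I would invoke the direct-integral representation of the PVM $\ms E$ on the separable space $\mc K_1\otimes\mc K_2$: there is a unitary identification $\mc K_1\otimes\mc K_2\cong\int^\oplus_{X\times Y}\mc H_z\,\d\mu(z)$ under which $\ms E(C)$ becomes multiplication by $\chi_C$, and the fibres satisfy $\mc H_z\neq\{0\}$ for $\mu$-almost every $z$ because $\mu$ is a maximal spectral type. Choosing a measurable field of unit vectors $z\mapsto\xi_z\in\mc H_z$ and writing $\omega_z:=\kb{\xi_z}{\xi_z}$, I then define the channel directly by the fibrewise \emph{measure-and-prepare} formula
$$
\Phi(\rho):=\int^\oplus_{X\times Y}\tr{\rho D(z)}\,\omega_z\,\d\mu(z),\qquad\rho\in\mc T(\hil).
$$
Each block $\rho\mapsto\tr{\rho D(z)}\omega_z$ is completely positive, being the positive (hence completely positive) functional $\rho\mapsto\tr{\rho D(z)}$ followed by the preparation $c\mapsto c\,\omega_z$, so the direct integral $\Phi$ is completely positive; and $\tr{\Phi(\rho)}=\int\tr{\rho D(z)}\,\d\mu(z)=\tr{\rho\int D\,\d\mu}=\tr{\rho}$, so $\Phi$ is trace preserving. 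Pairing with $\ms E(C)$, which acts as multiplication by $\chi_C$, then gives $\tr{\Phi(\rho)\ms E(C)}=\int_C\tr{\rho D(z)}\,\d\mu(z)=\tr{\rho\ms G(C)}$, i.e.\ $\Phi^*\circ\ms E=\ms G$, as required.

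The step I expect to be the genuine obstacle is \emph{normality}. The natural temptation is to define the normal completely positive unital map $\Psi(f):=\int f(z)D(z)\,\d\mu(z)$ on the commutative von Neumann algebra generated by $\ms E$ and to extend it to all of $\mc L(\mc K_1\otimes\mc K_2)$ via Arveson's extension theorem; but Arveson's theorem gives no control on ultraweak continuity, and in the continuous case there is in general no normal conditional expectation onto the (maximal abelian) range of $\ms E$, so this route need not produce the dual of a genuine channel. Constructing $\Phi$ directly on the trace class as above sidesteps the difficulty completely, since the Banach-space adjoint $\Phi^*:\mc L(\mc K_1\otimes\mc K_2)\to\mc L(\hil)$ of any bounded map $\Phi:\mc T(\hil)\to\mc T(\mc K_1\otimes\mc K_2)$ is automatically normal. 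What remains is the routine measure-theoretic bookkeeping — existence and measurability of the density $D$, of the fibre field $\mc H_z$, and of the vector field $\xi_z$ — and this is precisely where separability of the Hilbert spaces is spent.
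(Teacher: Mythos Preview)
Your reduction is sound and the instinct to build $\Phi$ directly on the trace class is correct, but the formula you propose does not define a trace-class operator. Under the identification $\mc K_1\otimes\mc K_2\cong\int^\oplus\mc H_z\,\d\mu(z)$ your $\Phi(\rho)=\int^\oplus g_\rho(z)\,\omega_z\,\d\mu(z)$ (with $g_\rho(z)=\tr{\rho D(z)}$) is a \emph{decomposable} operator, and on a non-atomic part of $\mu$ a nonzero decomposable operator is never compact, let alone trace class. Concretely, if $\ms E$ has simple continuous spectrum then $\mc H_z=\C$, $\omega_z=1$, and your $\Phi(\rho)$ is the multiplication operator $M_{g_\rho}$ on $L^2(\mu)$, which is not trace class; the identity $\tr{\Phi(\rho)}=\int g_\rho\,\d\mu$ therefore fails. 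The deeper obstruction is that a measure-and-prepare channel with $\Phi^*\circ\ms E=\ms G$ would require states $\sigma_z$ on $\mc K_1\otimes\mc K_2$ with $\tr{\sigma_z\ms E(\cdot)}=\delta_z$, but every such distribution is absolutely continuous with respect to $\mu$, so no Dirac mass is available at non-atomic points. In short, continuous PVMs have no eigenstates to ``prepare''. (Your construction \emph{does} work when $\ms E$ is discrete, and this is exactly the content of the paper's Remark~\ref{rem:measandprep}.)

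The paper resolves the normality problem you correctly flag, but not by building $\Phi$ in the Schr\"odinger picture. Instead it works on the state side of a Choi--Jamio\l kowski correspondence: the normal completely positive unital map $\Gamma_{\ms G}\circ\pi$ on the commutative algebra $({\rm ran}\,\ms P)''\vN({\rm ran}\,\ms Q)''$ is encoded as a normal positive functional on $\mc L(\hil)\vN({\rm ran}\,\ms P)''\vN({\rm ran}\,\ms Q)''$; a theorem of Dixmier then extends this to a \emph{normal} state $\tau$ on all of $\mc L(\hil\otimes\mc K_1\otimes\mc K_2)$, and finally the Choi--Jamio\l kowski inverse of $\tau$ (with respect to a faithful reference state) produces the channel $\Phi$. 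The point is that extending normal \emph{states} from a von~Neumann subalgebra always succeeds, whereas extending normal completely positive \emph{maps} (Arveson) need not; the paper transports the problem to the setting where the extension is guaranteed.
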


\begin{proof}
We start by fixing faithful states $\sigma_i\in\mc S(\mc K_i)$ for $i=1,\,2$ and defining the probability measures $\mu:\mc A\to[0,1]$ and $\nu:\mc B\to[0,1]$ through $\mu(A)=\tr{\sigma_1\ms P(A)}$ for all $A\in\mc A$ and $\nu(B)=\tr{\sigma_2\ms Q(B)}$ for all $B\in\mc B$. It follows that $\ms P$ and $\mu$ are mutually absolutely continuous as are $\ms Q$ and $\nu$. This means that we may extend $\ms P$ into a normal *-isomorphism $\Gamma_{\ms P}:L^\infty_\mu\to({\rm ran}\,\ms P)''$ where $\Gamma_{\ms P}(f)=\int_X f\,d\ms P$ for all $f\in L^\infty_\mu$ and $({\rm ran}\,\ms P)''\subset\mc L(\mc K_1)$ is the double commutant of the range ${\rm ran}\,\ms P:=\{\ms P(A)\,|\,A\in\mc A\}$ of $\ms P$ and, the similar extension $\Gamma_{\ms Q}$ of $\ms Q$ mediates a normal *-isomorphism between $L^\infty_\nu$ and $({\rm ran}\,\ms Q)''$. Let us denote by $\pi:({\rm ran}\,\ms P)''\vN({\rm ran}\,\ms Q)''\to L^\infty_\mu\vN L^\infty_\nu\simeq L^\infty_{\mu\times\nu}$ the inverse of $\Gamma_{\ms M}\otimes\Gamma_{\ms N}$; as a *-isomorphism between von Neumann algebras, $\pi$ is normal. Above, we have denoted by $\mf M\vN\mf N$ the von Neumann tensor product of von Neumann algebras $\mf M$ and $\mf N$. Note that $\ms P\otimes\ms Q\ll\mu\times\nu$ as $\mu\times\nu=\tr{(\sigma_1\otimes\sigma_2)(\ms P\otimes\ms Q)(\cdot)}$ and $\sigma_1\otimes\sigma_2$ is faithful on $\mc K_1\otimes\mc K_2$. Let us assume that $\ms G:\mc A\otimes\ms B\to\mc L(\hil)$ is a POVM such that $\ms G\ll\ms P\otimes\ms Q$. It follows that $\ms G\ll\mu\times\nu$. We may now extend $\ms G$ into the normal positive linear map $\Gamma_{\ms G}:L^\infty_{\mu\times\nu}\to\mc L(\hil)$ defined through $\Gamma_{\ms G}(f\otimes g)=\int_{X\times Y}f(x)g(y)\,d\ms G(x,y)$ for all $f\in L^\infty_\mu$ and $g\in L^\infty_\nu$.

Let $\{\fii_i\}_i\subset\hil$ be an orthonormal basis and $t_i>0$ be non-vanishing numbers such that $\sum_i t_i=1$. Let us define $\rho_0:=\sum_i t_i|\fii_i\>\<\fii_i|\in\mc S(\hil)$ and $\Omega_0:=\sum_i\sqrt{t_i}\fii_i\otimes\fii_i\in\hil\otimes\hil$. It follows that $\rho_0$ is faithful. Denote $\mc L(\hil)\vN({\rm ran}\,\ms P)''\vN({\rm ran}\,\ms Q)''$ by $\mf M$. Using the fact that $\Gamma_{\ms G}\circ\pi$ is normal and unital, we may define the normal positive functional $\rho$ on $\mf M$ through $\rho(S)=\<\Omega_0|[{\rm id}\otimes(\Gamma_{\ms G}\circ\pi)](S)\Omega_0\>$ for all $S\in\mf M$; here ${\rm id}$ is the identity map on $\mc L(\hil)$. Define the transpose $\mc L(\hil)\ni R\mapsto R^T\in\mc L(\hil)$ w.r.t. the basis $\{\fii_i\}_i$, i.e.,\ $\<\fii_i|R\fii_j\>=\<\fii_j|R^T\fii_i\>$ for all $R\in\mc L(\hil)$ and $i,\,j$. Since $(\Gamma_{\ms G}\circ\pi)\big(\Gamma_{\ms P}(f)\otimes\Gamma_{\ms Q}(g)\big)=\Gamma_{\ms G}(f\otimes g)$ for all $f\in L^\infty_\mu$ and $g\in L^\infty_\nu$, we have
\begin{align*}
\rho\big(R\otimes\Gamma_{\ms P}(f)\otimes\Gamma_{\ms Q}(g)\big)=&\big\<\Omega_0\big|\big(R\otimes\Gamma_{\ms G}(f\otimes g)\big)\Omega_0\big\>=\sum_{i,j}\sqrt{t_it_j}\<\fii_i|R\fii_j\>\<\fii_i|\Gamma_{\ms G}(f\otimes g)\fii_j\>\\
=&\sum_{i,j}\sqrt{t_it_j}\<\fii_j|R^T\fii_i\>\<\fii_i|\Gamma_{\ms G}(f\otimes g)\fii_j\>=\tr{\rho_0^{1/2}R^T\rho_0^{1/2}\Gamma_{\ms G}(f\otimes g)}
\end{align*}
for all $R\in\mc L(\hil)$, $f\in L^\infty_\mu$, and $g\in L^\infty_\nu$.

According to Theorem 1 of Chapter 3 and Theorem 1 of Chapter 4 of Part I of \cite{Dixmier}, there is a state $\tau\in\mc S(\hil\otimes\mc K_1\otimes\mc K_2)$ (which is not typically unique) such that $\rho(S)=\tr{\tau S}$ for all $S\in\mf M$. According to our earlier calculation, we have $\tr{\tau(R\otimes\id_{\mc K_1\otimes\mc K_2})}=\rho(R\otimes\id_{\mc K_1\otimes\mc K_2})=\tr{\rho_0^{1/2}R^T\rho_0^{1/2}}=\tr{\rho_0 R}$ for all $R\in\mc L(\hil)$, implying that the reduced state of $\tau$ on $\hil$ is the faithful $\rho_0$. According to Theorem 1 and the beginning of Section 3.1 of \cite{Haapasalo2019a}, there is a unique channel $\Phi:\mc T(\hil)\to\mc T(\mc K_1\otimes\mc K_2)$ such that $\tr{\tau(R\otimes R')}=\tr{\Phi(\rho_0^{1/2}R^T\rho_0^{1/2})R'}$ for all $R\in\mc L(\hil)$ and $R'\in\mc L(\mc K_1\otimes\mc K_2)$. Thus, for all $R\in\mc L(\hil)$, $A\in\mc A$, and $B\in\mc B$,
\begin{align*}
\tr{\Phi(\rho_0^{1/2}R^T\rho_0^{1/2})\big(\ms P(A)\otimes\ms Q(B)\big)}=&\tr{\tau\big(R\otimes\ms P(A)\otimes\ms Q(B)\big)}\\
=&\rho\big(R\otimes\ms P(A)\otimes\ms Q(B)\big)=\tr{\rho_0^{1/2}R^T\rho_0^{1/2}\ms G(A\times B)},
\end{align*}
where we have used the fact that $\ms P(A)=\Gamma_{\ms P}(\chi_A)$, $\ms Q(B)=\Gamma_{\ms Q}(\chi_B)$, and $\ms G(A\times B)=\Gamma_{\ms G}(\chi_A\otimes\chi_B)$ where $\chi_C$ is the characteristic function of the set $C$. Since operators of the form $\rho_0^{1/2}R^T\rho_0^{1/2}$ with $R\in\mc L(\hil)$ constitute a trace-norm-dense subset of $\mc T(\hil)$ (a fact which is easily verified), we now have $\tr{\Phi(\rho)\big(\ms P(A)\otimes\ms Q(B)\big)}=\tr{\rho\ms G(A\times B)}$ for all $\rho\in\mc T(\hil)$, $A\in\mc A$, and $B\in\mc B$, implying that $\Phi^*\big(\ms P(A)\otimes\ms Q(B)\big)=\ms G(A\times B)$ for all $A\in\mc A$ and $B\in\mc B$.
\end{proof}

As already mentioned, there are some cases where joint measurements through broadcasting are bound to fail. The absolute continuity requirement in the preceding theorem is crucial in barring these cases off. To better understand the problematic cases, let us make things formal. Fix a Hilbert space $\hil$, measurable spaces $(X,\ms A)$ and $(Y,\mc B)$, and PVMs $\ms P:\mc A\to\mc L(\hil)$ and $\ms Q:\mc B\to\mc L(\hil)$. For the joint measurability of $\ms P$ and $\ms Q$, it is necessary that $\ms P$ and $\ms Q$ commute, i.e.,\ $\ms P(A)\ms Q(B)=\ms Q(B)\ms P(A)$ for all $A\in\mc A$ and $B\in\mc B$ \cite{HeReSt2008}. This is also a sufficient condition for joint measurability if the value spaces $(X,\mc A)$ and $(Y,\mc B)$ are standard Borel. Whenever $\ms P$ and $\ms Q$ are jointly measurable their joint POVM $\ms G$ is unique and given by $\ms G(A\times B)=\ms P(A)\ms Q(B)$ for all $A\in\mc A$ and $B\in\mc B$ \cite{HaHePe2014}. We call this as the {\it canonical joint measurement of $\ms P$ and $\ms Q$}. As an example consider the canonical PVM $\ms Q:\mc B(\R)\to\mc L\big(L^2(\R)\big)$ defined by $\big(\ms Q(A)\fii\big)(x)=\chi_A(x)\fii(x)$ for all $A\in\mc B(\R)$, $\fii\in L^2(\R)$, and $x\in\R$. The canonical joint PVM for $\ms Q$ with itself, i.e.,\ the product defined by $\ms G(A\times B)=\ms Q(A)\ms Q(B)=\ms Q(A\cap B)$ for all $A,\,B\in\mc B(\R)$ is supported by the diagonal $\{(x,x)\,|\,x\in\R\}$ and clearly not absolutely continuous w.r.t. $\ms Q\otimes\ms Q$. Thus, the conditions for Theorem \ref{theor:PVMbroadcast} do not hold, and the following result, which is largely a consequence of Proposition 3.6 of \cite{KaLuLu2015}, tells us that, e.g.,\ in this case, joint measurement through broadcasting will fail.

\begin{proposition}
Let $\hil$ be a Hilbert space, $(X,\mc A)$ be a measurable space, and $\ms Q:\mc A\to\mc L(\hil)$ be a PVM. The canonical joint measurement of $\ms Q$ with itself can be realized by broadcasting, i.e.,\ there is a channel $\Phi:\mc T(\hil)\to\mc T(\hil\otimes\hil)$ such that $\Phi^*\big(\ms Q(A)\otimes\ms Q(B)\big)=\ms Q(A)\ms Q(B)=\ms Q(A\cap B)$ for all $A,\,B\in\mc A$ if and only if there is a discrete PVM $(Q_i)_{i\in I}\subseteq\mc L(\hil)$ such that, for all $A\in\mc A$,
\begin{equation}\label{eq:Qpp}
\ms Q(A)=\sum_{i\in I}\delta_i(A)Q_i
\end{equation}
where $\delta_i:\mc A\to\{0,1\}$ is a probability measure for all $i\in I$.
\end{proposition}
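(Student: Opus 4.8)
The plan is to treat the two implications separately: the easy sufficiency by an explicit measure-and-prepare channel, and the harder necessity by reducing perfect broadcasting to a measure-theoretic statement about the diagonal.

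For sufficiency, assume $\ms Q(A)=\sum_{i\in I}\delta_i(A)Q_i$ with $(Q_i)_{i\in I}$ a discrete PVM and each $\delta_i$ a $\{0,1\}$-valued probability measure. I would first observe that, since $\mc U_i:=\{A\in\mc A\,|\,\delta_i(A)=1\}$ is an ultrafilter on $\mc A$, each $\delta_i$ is multiplicative, $\delta_i(A\cap B)=\delta_i(A)\delta_i(B)$, whence $\ms Q(A\cap B)=\sum_i\delta_i(A)\delta_i(B)Q_i$. Discarding the zero terms and fixing, for each $i$, a state $\omega_i$ whose range lies in that of $Q_i$ (so that $\tr{\omega_i Q_j}=\delta_{ij}$ and thus $\tr{\omega_i\ms Q(A)}=\delta_i(A)$), I would define the measure-and-prepare channel $\Phi(\rho)=\sum_{i\in I}\tr{Q_i\rho}\,\omega_i\otimes\omega_i$ and compute $\Phi^*\big(\ms Q(A)\otimes\ms Q(B)\big)=\sum_i Q_i\,\tr{\omega_i\ms Q(A)}\tr{\omega_i\ms Q(B)}=\sum_i\delta_i(A)\delta_i(B)Q_i=\ms Q(A\cap B)$. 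This is the desired channel, and its entanglement-breaking form reflects that perfectly correlated copies of a discrete observable are only classically correlated.

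For necessity, suppose such a $\Phi$ exists. Setting $B=X$ gives $\Phi^*(\ms Q(A)\otimes\id)=\ms Q(A)$, so $\ms Q$ is reproduced on each factor, while $B=X\setminus A$ gives $\Phi^*\big(\ms Q(A)\otimes\ms Q(X\setminus A)\big)=\ms Q(\emptyset)=0$, expressing perfect correlation of the two copies. I would then fix a faithful state $\omega$ (working, if $\hil$ is non-separable, on the separable part carrying $\ms Q$) and set $\tau:=\Phi(\omega)$, a normal state on $\hil\otimes\hil$ whose $\ms Q$-marginals both equal $\mu:=\tr{\omega\ms Q(\cdot)}$, a measure equivalent to $\ms Q$. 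The crucial move is to restrict $\tau$ to the abelian von Neumann algebra $({\rm ran}\,\ms Q)''\vN({\rm ran}\,\ms Q)''\simeq L^\infty_{\mu\times\mu}$; being the restriction of a normal state it is normal, hence integration against a density $h\in L^1_{\mu\times\mu}$, i.e.\ a probability measure $\lambda\ll\mu\times\mu$ on $X\times X$ with both marginals equal to $\mu$. Perfect correlation now reads $\lambda\big(A\times(X\setminus A)\big)=\tr{\omega\ms Q\big(A\cap(X\setminus A)\big)}=0$ for every $A$, so $\lambda$ is concentrated on the diagonal.

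The main obstacle — and the whole content of the no-go phenomenon — is to extract atomicity from the coexistence of $\lambda\ll\mu\times\mu$ with $\lambda$ being carried by the diagonal. Since $\mu\times\mu$ charges the diagonal only through the atoms of $\mu$, a measure absolutely continuous with respect to $\mu\times\mu$ and concentrated on the diagonal must live on the atoms, and its marginals being $\mu$ then forces $\mu$ to be purely atomic. This is precisely where I expect to lean on Proposition 3.6 of \cite{KaLuLu2015}, both to make ``concentrated on the diagonal'' precise for a general measurable space — the separating-sets argument that, in the absence of point masses, yields $\{0,1\}$-valued rather than Dirac measures — and to carry out the reconciliation cleanly. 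Once $\mu$, and hence $\ms Q$, is purely atomic, I would index the atoms $E_i$ by $I$, set $Q_i:=\ms Q(E_i)$ and let $\delta_i$ be the associated $\{0,1\}$-valued measures; mutual orthogonality of the $Q_i$ together with $\sum_i Q_i=\ms Q(\cup_i E_i)=\ms Q(X)=\id$ (the complement being $\mu$-null, hence $\ms Q$-null) exhibits a discrete PVM with $\ms Q(A)=\sum_i\delta_i(A)Q_i$, as required.
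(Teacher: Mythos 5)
Your proof is sound, and its two halves relate to the paper's quite differently. The sufficiency direction is essentially the paper's argument: both build a measure-and-prepare channel, the paper measuring a rank-one refinement of the atoms $Q_i$ and preparing the perfectly correlated vectors $e_{i,k}\otimes e_{i,k}$, you measuring $(Q_i)_{i\in I}$ itself and preparing $\omega_i\otimes\omega_i$; the verification is the same computation, yours routed through multiplicativity of the $\{0,1\}$-valued measures $\delta_i$ (the ultrafilter observation), the paper's through commutation of the rank-one projections with ${\rm ran}\,\ms Q$. The necessity direction is where you genuinely diverge. The paper works at the level of the map: normality lets it restrict $\Phi^*$ to a normal positive unital map $\mf A\vN\mf A\to\mf A$, $\mf A=({\rm ran}\,\ms Q)''$, satisfying $\Phi^*(R\otimes S)=RS$, and Proposition 3.6 of \cite{KaLuLu2015} is invoked in exactly that operator-algebraic form to conclude $\mf A$ is atomic, after which the $Q_i$ and $\delta_i$ are read off by expanding $\ms Q(A)$ over the minimal projections. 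You instead use only a single broadcast state $\tau=\Phi(\omega)$ with $\omega$ faithful, and reduce everything to classical measure theory: $\tau$ restricted to $\mf A\vN\mf A\simeq L^\infty_{\mu\times\mu}$ is a coupling $\lambda\ll\mu\times\mu$ with $\lambda(A\times B)=\mu(A\cap B)$, and such a diagonal, absolutely continuous self-coupling forces $\mu$ to be purely atomic. This buys something real: it shows the no-go consumes far less than the full broadcasting map (the mere existence of a normal state on $\hil\otimes\hil$ whose restriction to the abelian product algebra is the diagonal coupling of $\mu$ with itself already forces essential discreteness), and it isolates the obstruction as a purely classical fact. Two caveats. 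First, the faithful $\omega$ requires $\hil$ separable; the paper's assertion that $I$ is at most countable is implicitly in the same boat, and you flag the issue, so nothing is lost relative to the paper. Second, your citation of Proposition 3.6 of \cite{KaLuLu2015} for the classical step is misplaced: that proposition is precisely the operator-algebraic statement the paper invokes, not a statement about measures on $X\times X$, so as written your key step rests on a reference that does not literally contain it. Fortunately the lemma you need is elementary and you could prove it in three lines: if $\mu$ had a nonzero non-atomic part, restrict $\lambda$ and $\mu$ to it and, for each $n$, use non-atomicity to partition $X$ into finitely many sets $A^{(n)}_1,\ldots,A^{(n)}_{m_n}$ with $\mu\big(A^{(n)}_k\big)<2^{-n}$; since $\lambda\big(A^{(n)}_k\times A^{(n)}_{k'}\big)=\mu\big(A^{(n)}_k\cap A^{(n)}_{k'}\big)=0$ for $k\neq k'$, the set $D_n=\bigcup_k A^{(n)}_k\times A^{(n)}_k$ carries all of $\lambda$, hence so does $D=\bigcap_n D_n$, while $(\mu\times\mu)(D)\leq\inf_n 2^{-n}\mu(X)=0$, contradicting $\lambda\ll\mu\times\mu$. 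With that substitution your necessity argument is complete and, unlike the paper's, self-contained.
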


Before going into the proof of the above claim, let us note that the sequence of maps $\delta_i:\mc A\to\{0,1\}$ above constitutes a Markov kernel which, according to Equation \eqref{eq:Qpp} transforms the discrete PVM $(Q_i)_{i\in I}$ into $\ms Q$. This means that the canonical joint measurement of $\ms Q$ with itself can be realized through broadcasting if and only if $\ms Q$ can be post-processed from a discrete PVM, in which case we could say that $\ms Q$ is essentially discrete.

\begin{proof}
Assume first that the canonical joint measurement of $\ms Q$ with itself can be realized through broadcasting and let $\Phi:\mc T(\hil)\to\mc T(\hil\otimes\hil)$ be the associated broadcasting channel. Denote $\mf A:=({\rm ran}\,\ms Q)''$. Since $\Phi^*$ is normal, we have $\Phi^*(R\otimes S)=RS$ for all $R,\,S\in\mf A$. This means that, by restriction, we have a normal positive unital map $\Phi^*:\mf A\vN\mf A\to\mf A$ such that $\Phi^*(R\otimes S)=RS$ for all $R,\,S\in\mf A$. Proposition 3.6 of \cite{KaLuLu2015} now tells us that $\mf A$ is generated by minimal projections, i.e.,\ there is an at most countably infinite set $\{Q_i\}_{i\in I}\subset\mf A$ of mutually orthogonal projections which generate $\mf A$. Thus, $\sum_{i\in I}Q_i=\id_\hil$. Since the orthogonal set $\{Q_i\}_{i\in I}$ generates $\mf A$, for all $A\in\mc A$, there are unique $\delta_i(A)\in\C$ ($i\in I$) such that $\ms Q(A)=\sum_{i\in I}\delta_i(A)Q_i$. Using $Q(A)^2=\ms Q(A)^*=\ms Q(A)$ for all $A\in\mc A$ and the orthogonality of $\{Q_i\}_{i\in I}$, we easily have $\delta_i(A)\in\{0,1\}$ for all $i\in I$ and $A\in\mc A$. Similarly, using $\ms Q(X)=\id_\hil$ and $\ms Q(\emptyset)=0$, we have $\delta_i(X)=1$ and $\delta_i(\emptyset)=0$ for all $i\in I$. Let now $A_1,\,A_2,\ldots\in\mc A$ be a disjoint sequence. We have
\begin{align*}
\sum_{i\in I}\delta_i\big(\cup_{k=1}^\infty A_k\big)Q_i=&\ms Q\big(\cup_{k=1}^\infty A_k\big)=\sum_{k=1}^\infty\ms Q(A_k)=\sum_{k=1}^\infty\sum_{i\in I}\delta_i(A_k)Q_i=\sum_{i\in I}\left(\sum_{k=1}^\infty\delta_i(A_k)\right)Q_i,
\end{align*}
which, together with the orthogonality of $\{Q_i\}_{i\in I}$, implies that $\delta_i\big(\cup_{k=1}^\infty A_k\big)=\sum_{k=1}^\infty\delta_i(A_k)$ for all $i\in I$. Thus, $\delta_i:\mc A\to\{0,1\}$ is a probability measure for all $i\in I$.

Let now $\ms Q$ be such that there is an orthogonal set $\{Q_i\}_{i\in I}\subset\mc L(\hil)$ of projections where $I$ is at most countably infinite so that $\ms Q$ can be post-processed with the Markov kernel $(\delta_i)_{i\in I}$ from the discrete PVM $(Q_i)_{i\in I}$. For each $i\in I$, let $\{e_{i,k}\}_{k\in K_i}$ be an orthonormal basis for the support of $Q_i$. Thus, $\bigcup_{i\in I}\{e_{i,k}\}_{k\in K_i}$ is an orthonormal basis for $\hil$. Let us define the channel $\Phi:\mc T(\hil)\to\mc T(\hil\otimes\hil)$ through
$$
\Phi(\rho)=\sum_{i\in I}\sum_{k\in K_i}\<e_{i,k}|\rho e_{i,k}\>|e_{i,k}\otimes e_{i,k}\>\<e_{i,k}\otimes e_{i,k}|,\qquad\rho\in\mc T(\hil).
$$
It easily follows that the dual $\Phi^*$ can be defined through
$$
\Phi^*(R\otimes S)=\sum_{i\in I}\sum_{k\in K_i}\<e_{i,k}|Re_{i,k}\>\<e_{i,k}|Se_{i,k}\>|e_{i,k}\>\<e_{i,k}|,\qquad R,\,S\in\mc L(\hil).
$$
Since, for all $i\in I$ and $k\in K_i$, $P_{i,k}:=|e_{i,k}\>\<e_{i,k}|$ commutes with $Q_j$ for all $j\in I$ and $\{Q_n\}_{n\in I}$ generates the range of $\ms Q$, we find that $|e_{i,k}\>\<e_{i,k}|$ commutes with $\ms Q(A)$ for all $i\in I$, $k\in K_i$, and $A\in\mc A$. Thus, for all $A,\,B\in\mc A$,
\begin{align*}
\Phi^*\big(\ms Q(A)\otimes\ms Q(B)\big)=&\sum_{i\in I}\sum_{k\in K_i}\<e_{i,k}|\ms Q(A)e_{i,k}\>\<e_{i,k}|\ms Q(B)e_{i,k}\>|e_{i,k}\>\<e_{i,k}|\\
=&\sum_{i\in I}\sum_{k\in K_i}P_{i,k}\ms Q(A)P_{i,k}\ms Q(B)P_{i,k}=\sum_{i\in I}\sum_{k\in K_i}P_{i,k}\ms Q(A\cap B)P_{i,k}\\
=&\ms Q(A\cap B)\sum_{i\in I}\sum_{k\in K_i}P_{i,k}=\ms Q(A\cap B).
\end{align*}
This shows that the canonical joint measurement of $\ms Q$ with itself can be realized by broadcasting mediated by the channel $\Phi$.
\end{proof}

\begin{remark}\label{rem:measandprep}
Since the conditions of Theorem \ref{theor:PVMbroadcast} essentially always hold for discrete PVMs, we may always realize (approximate) joint measurements of discrete PVMs via broadcasting. The only requirement is that the (approximate) joint POVM be supported on the product of the supports of the PVMs. In fact, measure-and-prepare channels are enough for this task. To see this, let $\ms P=(P_x)_{x\in X}$ and $\ms Q=(Q_y)_{y\in Y}$ be PVMs in a Hilbert space $\hil$. Let us assume that $X$ is the support of $\ms P$, i.e.,\ $P_x\neq0$ for all $x\in X$, and that $Y$ is the support of $\ms Q$; we naturally have this liberty. Let $\ms G=(G_{x,y})_{(x,y)\in X\times Y}$ be any POVM in $\hil$. To construct the measure-and-prepare broadcasting channel, let us pick states $\sigma^{(1)}_x\in\mc S(\hil)$ and $\sigma^{(2)}_y\in\mc S(\hil)$ such that $\sigma^{(1)}_x\leq P_x$ for all $x\in X$ and $\sigma^{(2)}_y\leq Q_y$ for all $y\in Y$. We may now define the channel $\Phi:\mc T(\hil)\to\mc T(\hil\otimes\hil)$ through
$$
\Phi(\rho)=\sum_{x\in X}\sum_{y\in Y}\tr{\rho G_{x,y}}\,\sigma^{(1)}_x\otimes\sigma^{(2)}_y,\qquad\rho\in\mc S(\hil).
$$
Using the fact that $\tr{\sigma^{(1)}_xP_{x'}\otimes\sigma^{(2)}_yQ_{y'}}=\delta_{x,x'}\delta_{y,y'}$ (where $\delta_{a,b}$ are the Kronecker symbols), it can be easily shown that $\Phi^*(P_x\otimes Q_y)=G_{x,y}$ for all $x\in X$ and $y\in Y$. Moreover, $\Phi$ can be interpreted as a measure-and-prepare channel as it can be realized as a process where we first measure $\ms G$ and, depending on the outcome $(x,y)\in X\times Y$ of this measurement, we prepare the state $\sigma_{x,y}:=\sigma^{(1)}_x\otimes\sigma^{(2)}_y$. Naturally, any other post-measurement states such that $\sigma_{x,y}\leq P_x\otimes Q_y$ will also do.
\end{remark}

\section{Conditions for the existence of broadcasting channels for joint measurements for fuzzy POVMs}\label{sec:noisyPOVMs}

We have seen that broadcasting channels generating a joint POVM $\ms G$ for PVMs $\ms P$ and $\ms Q$ exit given that the minimal condition $\ms G\ll\ms P\otimes\ms Q$ is satisfied. However, when the above PVMs $\ms P$ and $\ms Q$ are replaced by more general POVMs, the situation is more complicated. However, we can provide some necessary and sufficient conditions which greatly simplify in some special cases. In this section, we concentrate on discrete POVMs in a finite-dimensional Hilbert space. Although $\mc L(\hil)$ and $\mc T(\hil)$, for a finite-dimensional Hilbert space $\hil$, are, as sets, identical, we continue to use these differentiating notations since their physical interpretations are completely different. First, using the finite-dimensional version of Arveson's extension theorem, we can give a general characterization for the existence of a broadcasting channel for a pair of POVMs and their intended joint POVM.

\begin{theorem}\label{theor:hankalaehto}
Suppose that $\hil$ is a $D$-dimensional Hilbert space ($D<\infty$), $X,\,Y\neq\emptyset$ are finite sets, and $\ms M=(M_x)_{x\in X}$, $\ms N=(N_y)_{y\in Y}$, and $\ms G=(G_{x,y})_{(x,y)\in X\times Y}$ are POVMs. The POVM $\ms G$ can be generated from $\ms M$ and $\ms N$ through broadcasting if and only if the implications
$$
\begin{array}{rrcl}
\alpha_{x,y}\in\C:&\sum_{x\in X}\sum_{y\in Y}\alpha_{x,y}M_x\otimes N_y=0&\Rightarrow&\sum_{x\in X}\sum_{y\in Y}\alpha_{x,y}G_{x,y}=0,\\
H_{x,y}\in\mc M_{D^2}(\C):&\sum_{x\in X}\sum_{y\in Y}H_{x,y}\otimes M_x\otimes N_y\geq0&\Rightarrow&\sum_{x\in X}\sum_{y\in Y}H_{x,y}\otimes G_{x,y}\geq0
\end{array}
$$
hold.
\end{theorem}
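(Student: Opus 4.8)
The plan is to recast broadcastability as the existence of a unital completely positive (UCP) extension and then apply Arveson's extension theorem. In finite dimensions a channel $\Phi:\mc T(\hil)\to\mc T(\hil\otimes\hil)$ is exactly a map whose dual $\Phi^*:\mc L(\hil\otimes\hil)\to\mc L(\hil)$ is UCP (complete positivity of $\Phi^*$ is dual to that of $\Phi$, and unitality of $\Phi^*$ is dual to trace preservation of $\Phi$). Hence $\ms G$ is generated from $\ms M$ and $\ms N$ through broadcasting if and only if there is a UCP map $\Psi:\mc L(\hil\otimes\hil)\to\mc L(\hil)$ with $\Psi(M_x\otimes N_y)=G_{x,y}$ for all $x\in X$ and $y\in Y$, and I would work throughout with $\Psi=\Phi^*$.

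First I would set up the operator system $V:=\mathrm{span}_\C\{M_x\otimes N_y\,|\,x\in X,\,y\in Y\}\subseteq\mc L(\hil\otimes\hil)$. Each $M_x\otimes N_y$ is self-adjoint, so $V=V^*$, and $\sum_{x,y}M_x\otimes N_y=\big(\sum_xM_x\big)\otimes\big(\sum_yN_y\big)=\id_{\hil\otimes\hil}$, so $V$ contains the identity; thus $V$ is an operator system. I would then try to define $\Psi_0:V\to\mc L(\hil)$ on spanning elements by $\Psi_0\big(\sum_{x,y}\alpha_{x,y}M_x\otimes N_y\big)=\sum_{x,y}\alpha_{x,y}G_{x,y}$. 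Because $\{M_x\otimes N_y\}$ need not be linearly independent, this is genuinely a well-definedness question, and the first displayed implication is precisely the condition guaranteeing that the prescription does not depend on the chosen representation. The identity computation above shows $\Psi_0$ is unital.

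The crux is to match the second implication with complete positivity of $\Psi_0$. Amplifying by $\mc M_{D^2}(\C)$, the map $\mathrm{id}_{\mc M_{D^2}}\otimes\Psi_0$ sends a general element $\sum_{x,y}H_{x,y}\otimes(M_x\otimes N_y)$ of $\mc M_{D^2}(\C)\otimes V$ to $\sum_{x,y}H_{x,y}\otimes G_{x,y}$ (again well defined by the first implication applied at the amplified level). Therefore the second implication says exactly that $\mathrm{id}_{\mc M_{D^2}}\otimes\Psi_0\geq0$. For a map into $\mc M_D(\C)$ it is a standard finite-dimensional fact that positivity of the amplification by matrices of any dimension at least $D$ already forces complete positivity; since $D^2\geq D$, this shows the second implication is equivalent to $\Psi_0$ being completely positive. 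Together with the previous paragraph, the two implications hold if and only if $\Psi_0$ is a UCP map on $V$.

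With $\Psi_0$ UCP in hand, Arveson's extension theorem (its elementary finite-dimensional form, since $\mc L(\hil)\cong\mc M_D(\C)$ is injective) furnishes a completely positive $\Psi:\mc L(\hil\otimes\hil)\to\mc L(\hil)$ extending $\Psi_0$; as $\id_{\hil\otimes\hil}\in V$, we get $\Psi(\id_{\hil\otimes\hil})=\Psi_0(\id_{\hil\otimes\hil})=\id_\hil$, so $\Psi$ is unital, and $\Phi^*:=\Psi$ is the required broadcasting channel. The converse is direct: given such a $\Phi$, the UCP map $\Psi=\Phi^*$ sends a vanishing combination to $0$, yielding the first implication, and $\mathrm{id}_{\mc M_{D^2}}\otimes\Psi$ preserves positivity, yielding the second. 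I expect the genuine work to be concentrated in the middle step — confirming well-definedness of the amplified map and citing the correct level of positivity needed for complete positivity — whereas the extension step is essentially turnkey once $\Psi_0$ is known to be UCP.
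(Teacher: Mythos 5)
Your proposal is correct and follows essentially the same route as the paper's own proof: define $\Psi_0$ on the operator system spanned by $\{M_x\otimes N_y\}_{x,y}$ (well-definedness being exactly the first implication), identify the second implication with $D^2$-positivity and hence with complete positivity via the finite-dimensional Choi-type criterion, extend by Arveson's theorem, and dualize to obtain the channel. Your explicit appeal to the codomain criterion (positivity at level $D$ suffices for maps into $\mc M_D(\C)$, so level $D^2$ certainly does) is a slightly more careful justification of the step the paper attributes to Choi's theorem, but the argument is the same.
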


\begin{proof}
Let us denote by $\mc L$ the linear hull of $\{M_x\otimes N_y\}_{(x,y)\in X\times Y}$. Let us first assume that the first implication of the claim holds and na\"{\i}vely define a linear map $\Psi_0:\mc L\to\mc L(\hil)$ through $\Psi_0(M_x\otimes N_y)=G_{x,y}$ for all $x\in X$ and $y\in Y$; using the first implication of the claim, we now show that this linear map is, indeed, well defined. Assume that $L\in\mc L$ can be given the following forms
$$
\sum_{x\in X}\sum_{y\in Y}\alpha^+_{x,y}M_x\otimes N_y=L=\sum_{x\in X}\sum_{y\in Y}\alpha^-_{x,y}M_x\otimes N_y
$$
where $\alpha^\pm_{x,y}\in\C$ for all $x\in X$ and $y\in Y$. This means that
$$
\begin{array}{rrcl}
&\sum_{x\in X}\sum_{y\in Y}(\alpha^+_{x,y}-\alpha^-_{x,y})M_x\otimes N_y=0&\Rightarrow&\sum_{x\in X}\sum_{y\in Y}(\alpha^+_{x,y}-\alpha^-_{x,y})G_{x,y}=0\\
\Leftrightarrow&\sum_{x\in X}\sum_{y\in Y}\alpha^+_{x,y}G_{x,y}&=&\sum_{x\in X}\sum_{y\in Y}\alpha^-_{x,y}G_{x,y}\\
\Leftrightarrow&\sum_{x\in X}\sum_{y\in Y}\alpha^+_{x,y}\Psi_0(M_x\otimes N_y)&=&\sum_{x\in X}\sum_{y\in Y}\alpha^-_{x,y}\Psi_0(M_x\otimes N_y),
\end{array}
$$
implying that $\Psi_0$ is well defined.

Before applying Arveson's extension theorem, we have to find the necessary and sufficient conditions under which $\Psi_0$ is completely positive on the operator system $\mc L$. Due to Choi's theorem, $\Psi_0$ is completely positive if (and only if) it is $D^2$-positive. We can thus concentrate on characterizing the $D^2$-positivity of $\Psi_0$. Since every element $L\in\mc M_{D^2}(\C)\otimes\mc L$ can be written in the form $L=\sum_{x\in X}\sum_{y\in Y}H_{x,y}\otimes M_x\otimes N_y$ for some $H_{x,y}\in\mc M_{D^2}(\C)$ ($x\in X$, $y\in Y$), we find that complete positivity of $\Psi_0$ is equivalent with the implication of the claim. Indeed, whenever $L\in\mc M_{D^2}(\C)\otimes\mc L$ is like that above, we have
$$
({\rm id}_{D^2}\otimes\Psi_0)(L)=\sum_{x\in X}\sum_{y\in Y}H_{x,y}\otimes\Psi_0(M_x\otimes N_y)=\sum_{x\in X}\sum_{y\in Y}H_{x,y}\otimes G_{x,y}
$$
where ${\rm id}_{D^2}$ is the identity map within the matrix algebra $\mc M_{D^2}(\C)$. Whenever this complete positivity condition holds, Arveson's extension theorem gives us a completely positive linear unital map $\Psi:\mc L(\hil\otimes\hil)\to\mc L(\hil)$ such that $\Psi|_{\mc L}=\Psi_0$, i.e.,\ $\Psi(M_x\otimes N_y)=G_{x,y}$ for all $x\in X$ and $y\in Y$. Since $\hil$ is finite dimensional (whence $\Psi$ is normal) there is a channel $\Phi:\mc T(\hil)\to\mc T(\hil\otimes\hil)$ such that $\Psi=\Phi^*$ and the `if' part of the claim is proven.

On the other hand, if the broadcasting channel $\Phi$ of the claim exists, its dual $\Phi^*$ must be a completely positive linear map on $\mc L$ such that $\Phi^*(M_x\otimes N_y)=G_{x,y}$ for all $x\in X$ and $y\in Y$. The linearity of $\Phi^*$ easily implies that the first implication of the claim holds. Moreover, the complete positivity of $\Phi^*|_{\mc L}$ exactly corresponds to the second implication stated in the claim.
\end{proof}

Let us note that the first implication from the conditions for broadcastability above can be lifted if $\ms M$ and $\ms N$ are linearly independent, i.e.,\ the sets $\{M_x\}_{x\in X}$ and $\{N_y\}_{y\in Y}$ are linearly independent. This is relevant as the class of linearly independent (discrete) POVMs contains several important POVMs. Especially, extreme POVMs are linearly independent. We say that a POVM $\ms M:\mc A\to\mc L(\hil)$ (where $(X,\mc A)$ is a measurable space and $\hil$ is a (not necessarily finite-dimensional) Hilbert space) is extreme if it is a convex extreme point of the convex set of POVMs in $\hil$ with the value space $(X,\mc A)$. The extreme POVMs can, in general, be defined by their minimal Na\u{\i}mark dilations \cite{Pellonpaa2011}; also equivalent characterizations for discrete POVMs in finite dimensions exist \cite{DaLoPe2005,Parthasarathy99}. In particular, these conditions imply that, if a discrete POVM is extreme, then that POVM is linearly independent. If a discrete POVM is of rank 1, it is extreme if and only if it is linearly independent.

\begin{remark}\label{rem:positivitycond}
Let us denote the dimension of $\hil$ in Theorem \ref{theor:hankalaehto} by $D$. We should note that, in the claim of Theorem \ref{theor:hankalaehto}, the number $n$ can be set as $D^2$. This is because a channel $\Phi:\mc T(\hil)\to\mc T(\hil\otimes\hil)$ is completely positive if and only if it is $D^2$-positive. Moreover, we can recast the second implication in the statement of Theorem \ref{theor:hankalaehto} in the form
\begin{equation}\label{eq:secondform}
H_{x,y}\in\mc M_{D^2}(\C):\quad\sum_{x\in X}\sum_{y\in Y}\tr{H_{x,y}(M_x\otimes N_y)}\geq0\ \Rightarrow\ \sum_{x\in X}\sum_{y\in Y}\tr{H_{x,y}(G_{x,y}\otimes\id_\hil)}\geq0
\end{equation}
where we identify $\mc M_{D^2}(\C)$ with $\mc L(\hil\otimes\hil)$.

To prove this, let us define $\Omega:=\sum_{i,j=1}^D\fii_i\otimes\fii_j\otimes\fii_i\otimes\fii_j$ where $\{\fii_i\}_{i=1}^D$ is a fixed orthonormal basis of $\hil$. It is easy to show that any $w\in\hil^{\otimes 4}$ can be written in the form $w=(K\otimes\id_{\hil\otimes\hil})\Omega$ for some $K\in\mc L(\hil\otimes\hil)\simeq\mc M_{D^2}(\C)$. Fix $H_{x,y}\in\mc M_{D^2}(\C)$ ($x\in X$, $y\in Y$). Clearly $\sum_{x\in X}\sum_{y\in Y}H_{x,y}\otimes G_{x,y}\geq0$ if and only if $\sum_{x\in X}\sum_{y\in Y}H_{x,y}\otimes G_{x,y}\otimes\id_\hil\geq0$, i.e.,
\begin{align*}
0\leq&\sum_{x\in X}\sum_{y\in Y}\<(K\otimes\id_{\hil\otimes\hil})\Omega|(H_{x,y}\otimes G_{x,y}\otimes\id_\hil)(K\otimes\id_{\hil\otimes\hil})\Omega\>\\
=&\sum_{x\in X}\sum_{y\in Y}\<\Omega|(K^*H_{x,y}K\otimes G_{x,y}\otimes\id_\hil)\Omega\>=\sum_{x\in X}\sum_{y\in Y}\tr{K^TH_{x,y}^TK^{*\,T}(G_{x,y}\otimes\id_\hil)}
\end{align*}
for all $K\in\mc L(\hil\otimes\hil)$ where the final form is obtained through direct calculation and $\mc L(\hil\otimes\hil)\ni R\mapsto R^T\in\mc L(\hil\otimes\hil)$ is the transpose defined by the basis $\{\fii_i\otimes\fii_j\}_{i,j=1}^D$. Similarly, $\sum_{x\in X}\sum_{y\in Y}H_{x,y}\otimes M_x\otimes N_y\geq0$ if and only if $\sum_{x\in X}\sum_{y\in Y}\tr{K^TH_{x,y}^TK^{*\,T}(M_x\otimes N_y)}\geq0$. Since $\sum_{x\in X}\sum_{y\in Y}K^*H_{x,y}K\otimes M_x\otimes N_y\geq0$ whenever $H_{x,y}\in\mc M_{D^2}(\C)$ are such that $\sum_{x\in X}\sum_{y\in Y}H_{x,y}\otimes M_x\otimes N_y\geq0$ and $K\in\mc M_{D^2}(\C)$ is any matrix, we now see that the implication \eqref{eq:secondform} holds upon assuming the second implication of Theorem \ref{theor:hankalaehto}.
\end{remark}

Using Theorem \ref{theor:hankalaehto}, we may derive easily checked necessary and sufficient conditions for (approximate) joint POVMs under which this POVM can be generated through broadcasting from POVMs which are convex combinations of PVMs and (coloured) noise.

\begin{corollary}\label{cor:fuzzyPVM}
Let $\hil$ be a finite-dimensional Hilbert space, $\ms P=(P_x)_{x\in X}$ and $\ms Q=(Q_y)_{y\in Y}$ be PVMs in $\hil$, and $(p_x)_{x\in X}$ and $(q_y)_{y\in Y}$ be probability vectors. Assume that $P_x\neq0\neq Q_y$ for all $(x,y)\in X\times Y$. Define, for any $\lambda,\,\mu\in[0,1]$ the POVMs $\ms P^\lambda=(P^\lambda_x)_{x\in X}$ and $\ms Q^\mu=(Q^\mu_y)_{y\in Y}$ through
$$
P^\lambda_x=\lambda P_x+(1-\lambda)p_x\id_\hil,\qquad Q^\mu_y=\mu Q_y+(1-\mu)q_y\id_\hil
$$
for all $x\in X$ and $y\in Y$. For $\lambda,\,\mu\in(0,1]$, a POVM $\ms G=(G_{x,y})_{(x,y)\in X\times Y}$ can be generated through broadcasting from $\ms P^\lambda$ and $\ms Q^\mu$ if and only if, for all $x\in X$ and $y\in Y$,
\begin{equation}\label{eq:quditcond}
G_{x,y}-(1-\mu)q_y G^{(1)}_x-(1-\lambda)p_x G^{(2)}_y+(1-\lambda)(1-\mu)p_x q_y\id_\hil\geq0
\end{equation}
where $\ms G^{(1)}=(G^{(1)}_x)_{x\in X}$ and $\ms G^{(2)}=(G^{(2)}_y)_{y\in Y}$ are the margins of $\ms G$, i.e., $G^{(1)}_x=\sum_{y'\in Y}G_{x,y'}$ and $G^{(2)}_y=\sum_{x'\in X}G_{x',y}$ for all $x\in X$ and $y\in Y$.
\end{corollary}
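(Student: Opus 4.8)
The plan is to invoke Theorem~\ref{theor:hankalaehto} and show that, for these noisy POVMs, its two implications collapse to the single positivity requirement \eqref{eq:quditcond}. First I would record that the families $\{P^\lambda_x\}_{x\in X}$ and $\{Q^\mu_y\}_{y\in Y}$ are linearly independent whenever $\lambda,\mu\in(0,1]$: writing $\sum_x c_xP^\lambda_x=\sum_x\big[\lambda c_x+(1-\lambda)\sum_{x'}c_{x'}p_{x'}\big]P_x$ and using that the nonzero orthogonal projections $P_x$ are linearly independent forces each coefficient $\lambda c_x+(1-\lambda)S$ (with $S=\sum_{x'}c_{x'}p_{x'}$) to vanish; pairing with $p_x$ and summing gives $S=0$, hence every $c_x=0$. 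This is where $\lambda>0$ (and likewise $\mu>0$) is essential. By the remark following Theorem~\ref{theor:hankalaehto}, linear independence of the two POVMs lets us drop the first implication, so broadcastability of $\ms G$ from $\ms P^\lambda$ and $\ms Q^\mu$ is equivalent to complete positivity of the well-defined linear map $\Psi_0(P^\lambda_x\otimes Q^\mu_y)=G_{x,y}$ on $\mc L:=\mathrm{span}\{P^\lambda_x\otimes Q^\mu_y\}$.

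The structural observation is that $\mc L$ is a commutative algebra with a very simple minimal-projection structure. The expansion $P^\lambda_x\otimes Q^\mu_y=\lambda\mu\,P_x\otimes Q_y+\cdots$ shows $\mc L\subseteq\mathrm{span}\{P_x\otimes Q_y\}$, and since both spanning sets consist of $|X|\,|Y|$ linearly independent operators, we get $\mc L=\mathrm{span}\{P_x\otimes Q_y\}$. The operators $P_x\otimes Q_y$ are mutually orthogonal nonzero projections (as $P_x\neq0\neq Q_y$) summing to $\id_\hil\otimes\id_\hil$, so $\mc L$ is the finite-dimensional commutative $C^*$-algebra $\cong\C^{|X||Y|}$ whose minimal projections are exactly the $P_x\otimes Q_y$, and whose positive elements are precisely the nonnegative combinations of these projections.

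Next I would evaluate $\Psi_0$ on the minimal projections by inverting the noise. From $P_x\otimes Q_y=(\lambda\mu)^{-1}\big(P^\lambda_x-(1-\lambda)p_x\id_\hil\big)\otimes\big(Q^\mu_y-(1-\mu)q_y\id_\hil\big)$, linearity together with $\Psi_0(P^\lambda_x\otimes\id_\hil)=G^{(1)}_x$, $\Psi_0(\id_\hil\otimes Q^\mu_y)=G^{(2)}_y$ and $\Psi_0(\id_\hil\otimes\id_\hil)=\id_\hil$ (all following from $\sum_yQ^\mu_y=\sum_xP^\lambda_x=\id_\hil$ and the normalization of $\ms G$) yields
$$
\Psi_0(P_x\otimes Q_y)=\frac{1}{\lambda\mu}\Big(G_{x,y}-(1-\mu)q_y G^{(1)}_x-(1-\lambda)p_x G^{(2)}_y+(1-\lambda)(1-\mu)p_xq_y\id_\hil\Big),
$$
so the left-hand side of \eqref{eq:quditcond} is exactly $\lambda\mu\,\Psi_0(P_x\otimes Q_y)$, and since $\lambda\mu>0$ condition \eqref{eq:quditcond} is equivalent to $\Psi_0(P_x\otimes Q_y)\geq0$ for all $x,y$.

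Finally I would reduce complete positivity of $\Psi_0$ to positivity on these minimal projections. Because $\mc L$ is commutative, every positive linear map out of it is automatically completely positive, so $\Psi_0$ is completely positive iff it is positive, and positivity on the commutative $\mc L$ amounts precisely to $\Psi_0(P_x\otimes Q_y)\geq0$ for all $x,y$, i.e.\ to \eqref{eq:quditcond}. One can also avoid citing this principle and argue directly within Theorem~\ref{theor:hankalaehto}: rewriting its second implication in the basis $\{P_x\otimes Q_y\}$ via the invertible change of coefficients $H\mapsto\tilde H$ turns the hypothesis $\sum_{x,y}H_{x,y}\otimes P^\lambda_x\otimes Q^\mu_y\geq0$ into $\sum_{x,y}\tilde H_{x,y}\otimes P_x\otimes Q_y\geq0$, which, as the $P_x\otimes Q_y$ are orthogonal projections summing to the identity, holds iff each $\tilde H_{x,y}\geq0$; the conclusion becomes $\sum_{x,y}\tilde H_{x,y}\otimes\Psi_0(P_x\otimes Q_y)\geq0$, so the implication holds for all $\tilde H$ iff each $\Psi_0(P_x\otimes Q_y)\geq0$. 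I expect this last reduction to be the main obstacle, namely justifying that the full $D^2$-positivity demanded by Theorem~\ref{theor:hankalaehto} is no stronger than ordinary positivity on the commutative operator system $\mc L$, together with the bookkeeping confirming $\mc L=\mathrm{span}\{P_x\otimes Q_y\}$ and the correct evaluation of $\Psi_0$ on its minimal projections.
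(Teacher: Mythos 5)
Your proposal is correct, and its primary route is genuinely different from the paper's at the key step. Both proofs begin identically: linear independence of $\{P^\lambda_x\otimes Q^\mu_y\}_{(x,y)\in X\times Y}$ removes the first implication of Theorem \ref{theor:hankalaehto} (you verify the independence explicitly; the paper merely asserts it), and both perform the same noise inversion --- you express $P_x\otimes Q_y$ in terms of $P^\lambda_x\otimes Q^\mu_y$ and evaluate $\Psi_0$ there, while the paper inverts on the dual side by writing the coefficient matrices $H_{x,y}$ in terms of positive $K_{x,y}$ in \eqref{eq:KstaHksi}; both correctly identify the left-hand side of \eqref{eq:quditcond} with $\lambda\mu\,\Psi_0(P_x\otimes Q_y)$. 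The divergence is in how complete positivity is certified. The paper checks the $D^2$-positivity implication by hand, routing the conclusion through the trace reformulation of Remark \ref{rem:positivitycond}. You instead observe that $\mc L=\mathrm{span}\{P_x\otimes Q_y\}_{(x,y)\in X\times Y}$ is a unital commutative $C^*$-algebra whose minimal projections are exactly the $P_x\otimes Q_y$, and invoke the standard theorem (going back to Stinespring) that a positive map on a commutative $C^*$-algebra is automatically completely positive, so positivity on the minimal projections already yields everything. This is shorter and more conceptual --- it explains structurally why \eqref{eq:quditcond}, a condition on finitely many minimal projections, is the whole story --- at the cost of citing an external theorem and of some standard but worth-stating care: $\mc L$ must be multiplicatively closed (it is, since the $P_x\otimes Q_y$ are mutually orthogonal projections summing to $\id_\hil\otimes\id_\hil$), and operator positivity inherited from $\mc L(\hil\otimes\hil)$ must agree with abstract positivity in the subalgebra $\mc L$ at every matrix level, so that the operator-system complete positivity demanded in Theorem \ref{theor:hankalaehto} coincides with complete positivity of a map on $\C^{|X||Y|}$; this holds precisely because $\mc L$ is a unital $C^*$-subalgebra rather than a mere operator system. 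Your fallback argument --- characterizing the positive elements of $\mc M_{D^2}(\C)\otimes\mc L$ via the invertible change of coefficients and the orthogonality of the $P_x\otimes Q_y$ --- is essentially the paper's own proof, stated a bit more directly since it bypasses the trace computation of Remark \ref{rem:positivitycond}.
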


Before proving this claim, note that, in the case $\lambda=1=\mu$, the condition \eqref{eq:quditcond} is trivial. This is in line with Theorem \ref{theor:PVMbroadcast} since the condition of this theorem is satisfied in the above situation.

\begin{proof}
Pick $\lambda,\,\mu\in(0,1]$ and a POVM $\ms G=(G_{x,y})_{(x,y)\in X\times Y}$. Note that, in our situation, the set $\{P^\lambda_x\otimes Q^\mu_y\}_{(x,y)\in X\times Y}$ is linearly independent, so we can concentrate on the latter implication of Theorem \ref{theor:hankalaehto}. Let $D:={\rm dim}\,\hil$. We first characterize those $H_{x,y}\in\mc M_{D^2}(\C)$ ($x\in X$, $y\in Y$) such that $\sum_{x\in X}\sum_{y\in Y}H_{x,y}\otimes P^\lambda_x\otimes Q^\mu_y\geq0$ and then use the methods introduced in Remark \ref{rem:positivitycond} to derive the condition of the claim.

Let $H_{x,y}\in\mc M_{D^2}(\C)$ ($x\in X$, $y\in Y$) be such that $\sum_{x\in X}\sum_{y\in Y}H_{x,y}\otimes P^\lambda_x\otimes Q^\mu_y\geq0$. Using the fact that $P^\lambda_x=\sum_{x'\in X}\big(\lambda\delta_{x,x'}+(1-\lambda)p_x\big)P_{x'}$ and $Q^\mu_y=\sum_{y'\in Y}\big(\mu\delta_{y,y'}+(1-\mu)q_y\big)Q_{y'}$ for all $x\in X$ and $y\in Y$, we have
\begin{align}
0\leq&\sum_{x\in X}\sum_{y\in Y}H_{x,y}\otimes P^\lambda_x\otimes Q^\mu_y\nonumber\\
=&\sum_{x,x'\in X}\sum_{y,y'\in Y}\big(\lambda\delta_{x,x'}+(1-\lambda)p_x\big)\big(\mu\delta_{y,y'}+(1-\mu)q_y\big)H_{x,y}\otimes P_{x'}\otimes Q_{y'}\nonumber\\
=&\lambda\mu\sum_{x\in X}\sum_{y\in Y}K_{x,y}\otimes P_x\otimes Q_y\label{eq:posehto}
\end{align}
where $K_{x,y}:=(\lambda\mu)^{-1}\sum_{x'\in X}\sum_{y'\in Y}\big(\lambda\delta_{x,x'}+(1-\lambda)p_{x'}\big)\big(\mu\delta_{y,y'}+(1-\mu)q_{y'}\big)H_{x',y'}$ for all $x\in X$ and $y\in Y$. Since $\ms P\otimes\ms Q$ is non-vanishing on $X\times Y$, Inequality \eqref{eq:posehto} is equivalent with $K_{x,y}\geq0$ for all $x\in X$ and $y\in Y$. Simple matrix inversion yields, for all $x\in X$ and $y\in Y$,
\begin{equation}\label{eq:KstaHksi}
H_{x,y}=\sum_{x'\in X}\sum_{y'\in Y}\left(\delta_{x,x'}-(1-\lambda)p_{x'}\right)\left(\delta_{y,y'}-(1-\mu)q_{y'}\right)K_{x',y'}.
\end{equation}
Thus, $H_{x,y}$ are such that $\sum_{x\in X}\sum_{y\in Y}H_{x,y}\otimes P^\lambda_x\otimes Q^\mu_y\geq0$ if and only if there are positive $K_{x,y}$ such that Equation \eqref{eq:KstaHksi} holds.

Let $K_{x,y}\in\mc M_{D^2}(\C)$ ($x\in X$, $y\in Y$) be positive matrices and define $H_{x,y}$ through Equation \eqref{eq:KstaHksi}. According to Remark \ref{rem:positivitycond}, $\sum_{x\in X}\sum_{y\in Y}H_{x,y}\otimes G_{x,y}\geq0$ if and only if
\begin{align*}
0\leq&\sum_{x\in X}\sum_{y\in Y}\tr{K^TH_{x,y}^TK^{*\,T}(G_{x,y}\otimes\id_\hil)}\\
=&\sum_{x,x'\in X}\sum_{y,y'\in Y}\left(\delta_{x,x'}-(1-\lambda)p_x\right)\left(\delta_{y,y'}-(1-\mu)q_y\right)\tr{K^TK_{x,y}^TK^{*\,T}(G_{x',y'}\otimes\id_\hil)}
\end{align*}
for all $K\in\mc M_{D^2}(\C)$. As $K^TK_{x,y}^TK^{*\,T}\geq0$ if $K_{x,y}\geq0$, we see that the second implication of Theorem \ref{theor:hankalaehto} holds (and, thus, the channel $\Phi$ of the claim exists) if and only if
$$
\sum_{x,x'\in X}\sum_{y,y'\in Y}\left(\delta_{x,x'}-(1-\lambda)p_x\right)\left(\delta_{y,y'}-(1-\mu)q_y\right)\tr{K_{x,y}(G_{x',y'}\otimes\id_\hil)}\geq0
$$
for all positive $K_{x,y}\in\mc M_{D^2}(\C)$ ($x\in X$, $y\in Y$). This is equivalent with
$$
\begin{array}{rrcl}
&\sum_{x'\in X}\sum_{y'\in Y}\left(\delta_{x,x'}-(1-\lambda)p_x\right)\left(\delta_{y,y'}-(1-\mu)q_y\right)\,G_{x',y'}\otimes\id_\hil&\geq&0\\
\Leftrightarrow&\sum_{x'\in X}\sum_{y'\in Y}\left(\delta_{x,x'}-(1-\lambda)p_x\right)\left(\delta_{y,y'}-(1-\mu)q_y\right)\,G_{x',y'}&\geq&0
\end{array}
$$
for all $x\in X$ and $y\in Y$. The final inequality above is easily seen to coincide with \eqref{eq:quditcond}, proving the claim.
\end{proof}

Let us study the consequences of Corollary \ref{cor:fuzzyPVM} for qubit observables. An unbiased qubit observable $\ms S=(S_+,S_-)$ is a binary POVM in a two-dimensional Hilbert space determined by a vector $\vec{a}\in\R^3$ such that $\|\vec{a}\|\leq1$ through $S_\pm=(1/2)(\id_2\pm\vec{a}\cdot\vec{\sigma})$ where $\vec{a}\cdot\vec{\sigma}:=a_x\sigma_x+a_y\sigma_y+a_z\sigma_z$ where, in turn, $\vec{a}=(a_x,a_y,a_z)$ and $\sigma_x$, $\sigma_y$, and $\sigma_z$ are the Pauli spin matrices. The {\it sharpness} of the above POVM $\ms S$ is $\sharp(\ms S):=\|\vec{a}\|$; $\ms S$ is sharp if and only if $\sharp(\ms S)=\|\vec{a}\|=1$. An unbiased qubit observable $\ms S=(S_+,S_-)$ with $S_\pm=(1/2)(\id_2\pm\vec{a}\cdot\vec{\sigma})$, $\sharp(\ms S)=\|\vec{a}\|=:s>0$ can be written as the convex combination $\ms S=s\ms Q+(1-s)\ms T$ where $\ms Q=(Q_+,Q_-)$ is the PVM with $Q_\pm=(1/2)(\id_2\pm s^{-1}\vec{a}\cdot\vec{\sigma})$ and $\ms T=\big((1/2)\id_2,(1/2)\id_2\big)$, i.e.,\ $S_\pm=sQ_\pm+(1-s)(1/2)\id_2$. This means that unbiased qubit observables are of the form as those noisy POVMs studied in Corollary \ref{cor:fuzzyPVM}.

Let $\ms R$ and $\ms S$ be unbiased qubit POVMs with $\sharp(\ms R)=:r>0$ and $\sharp(\ms S)=:s>0$. For any qubit POVM $\ms G=(G_{\sigma,\tau})_{\sigma,\tau\in\{+,-\}}$, the condition \eqref{eq:quditcond} (with $\lambda=r$ and $\mu=s$) easily becomes
\begin{equation}\label{eq:qubitGehto}
(r+s)G_{\sigma,\tau}-(1-s)G_{\sigma,-\tau}-(1-r)G_{-\sigma,\tau}+\frac{1}{2}(1-r)(1-s)\id_2\geq0
\end{equation}
for all $\sigma,\,\tau\in\{+,-\}$. Let us write $\ms G$ in the form $G_{\sigma,\tau}=a_{\sigma,\tau}(\id_2+\vec{a}_{\sigma,\tau}\cdot\vec{\sigma})$ where $a_{\sigma,\tau}\geq0$ are such that $\sum_{\sigma,\tau}a_{\sigma,\tau}=1$ and $\vec{a}_{\sigma,\tau}\in\R^3$ are such that $\|\vec{a}_{\sigma,\tau}\|\leq1$ and $\sum_{\sigma,\tau}a_{\sigma,\tau}\vec{a}_{\sigma,\tau}=0$; these conditions are easily seen to be the necessary and sufficient conditions for $\ms G$ to be a POVM. Using the fact that $a\id_2+\vec{a}\cdot\vec{\sigma}\geq0$ if and only if $\|\vec{a}\|\leq a$, Equation \eqref{eq:qubitGehto} is equivalent with
\begin{align}
&\|(r+s)a_{\sigma,\tau}\vec{a}_{\sigma,\tau}-(1-s)a_{\sigma,-\tau}\vec{a}_{\sigma,-\tau}-(1-r)a_{-\sigma,\tau}\vec{a}_{-\sigma,\tau}\|\nonumber\\
\leq&(r+s)a_{\sigma,\tau}-(1-s)a_{\sigma,-\tau}-(1-r)a_{-\sigma,\tau}+\frac{1}{2}(1-r)(1-s)\label{eq:qubitgenehto}
\end{align}
for all $\sigma,\,\tau\in\{+,-\}$. Thus, there is a channel $\Phi:\mc T(\C^2)\to\mc T(\C^2\otimes\C^2)$ such that $\Phi^*(R_\sigma\otimes S_\tau)=G_{\sigma,\tau}$ for all $\sigma,\,\tau\in\{+,-\}$ if and only if Inequality \eqref{eq:qubitgenehto} holds for all $\sigma,\,\tau\in\{+,-\}$. If $\ms G$ is unbiased, i.e.,\ $a_{\sigma,\tau}=1/4$ for all $\sigma,\,\tau\in\{+,-\}$, this condition simplifies into
\begin{equation}\label{eq:helpompiehto}
\|(r+s)\vec{a}_{\sigma,\tau}-(1-s)\vec{a}_{\sigma,-\tau}-(1-r)\vec{a}_{-\sigma,\tau}\|\leq2rs
\end{equation}
for all $\sigma,\,\tau\in\{+,-\}$.

Let us look at the special case of an unbiased $\ms G$ where $\vec{a}_{++}=(g/\sqrt{3})(1,1,1)$, $\vec{a}_{+-}=(g/\sqrt{3})(1,-1,-1)$, and $\vec{a}_{-+}=(g/\sqrt{3})(-1,-1,1)$ where $0\leq g\leq1$. This corresponds to an informationally complete $\ms G=:\ms G^g$ whenever $g>0$ and $\ms G^1$ is additionally of rank 1. The informationally completeness of $\ms G^g$ ($g>0$) can be verified easily by showing that the the operators $G^g_{\sigma,\tau}$ linearly span the whole of $\mc L(\C^2)$; this can be done, e.g.,\ by showing that $\id_2$, $\sigma_x$, $\sigma_y$, and $\sigma_z$ can be written as linear combinations of these operators. It is easily verified that the left-hand side of Inequality \eqref{eq:helpompiehto} does not depend on $\sigma,\,\tau$. Indeed, we only have the inequality $g\sqrt{1+r^2+s^2}\leq\sqrt{3}rs$ which can be rewritten as
$$
1+\frac{3}{g^2}\leq\bigg(3\frac{r^2}{g^2}-1\bigg)\bigg(3\frac{s^2}{g^2}-1\bigg).
$$
In the special case $g=1$ (where $\ms G^1$ is a rank-1 informationally complete POVM), this is possible if and only if $r=1=s$, i.e.,\ $\ms R$ and $\ms S$ are sharp. This means that this rank-1 POVM can be generated through broadcasting from unbiased qubit observables if and only if these observables are sharp.

In fact, any binary qubit POVM can be fitted in the framework of Corollary \ref{cor:fuzzyPVM}. Indeed, given a qubit POVM $\ms R=(R_+,R_-)$, we find $a\in[0,1]$ and $\vec{a}\in\R^3$, $\|\vec{a}\|\leq\min\{a,1-a\}$ such that $R_+=a\id_2+\vec{a}\cdot\vec{\sigma}$ and $R_-=(1-a)\id_2-\vec{a}\cdot\vec{\sigma}$. There is a (unique) number $u\in[-1,1]$ such that $a=(1/2)(1+u)$ and, when we make the assumption (which we are free to make) that $a\geq 1-a$, then $u\in[0,1]$ and $\|\vec{a}\|\leq 1-a=(1/2)(1-u)\leq 1/2$. Thus, there is $\lambda\in[0,1]$ such that $\|\vec{a}\|=\lambda/2$. Moreover,
$$
\frac{1}{2}\lambda=\|\vec{a}\|\leq\frac{1}{2}(1-u)\quad\Leftrightarrow\quad u\leq 1-\lambda,
$$
meaning that we may define $p_\pm:=(1/2)(1-\lambda\pm u)/(1-\lambda)\in[0,1]$ with $p_++p_-=1$. Let us define $\hat{a}:=\|\vec{a}\|^{-1}\vec{a}$ whenever $\vec{a}\neq0$ and, if $\vec{a}=0$, we can let $\hat{a}$ be any unit vector in $\R^3$. It now easily follows that $\ms R=\lambda\ms R^\sharp+(1-\lambda)\ms T$ where $\ms R^\sharp=(R^\sharp_+,R^\sharp_-)$ with $R^\sharp_\pm=(1/2)(\id_2\pm\hat{a}\cdot\vec{\sigma})$ and $\ms T=(T_+,T_-)$ with $T_\pm=p_\pm\id_2$. Let $\ms S=(S_+,S_-)$ be another binary qubit POVM with $S_+=b\id_2+\vec{b}\cdot\vec{\sigma}$ with $\|\vec{b}\|\leq\min\{b,1-b\}$. Expressing $\ms S$ as above as a mixture of a binary qubit PVM and (biased) noise and using Corollary \ref{cor:fuzzyPVM}, we find that a qubit POVM $\ms G=(G_{++},G_{+-},G_{-+},G_{--})$ can be generated through broadcasting from $\ms R$ and $\ms S$ if and only if
\begin{align*}
&G_{\sigma,\tau}+\frac{1}{4}\big(1-\tau+2(\tau b-\|\vec{b}\|)\big)\big(1-\sigma+2(\sigma a-\|\vec{a}\|)\big)\id_2\\
\geq&\frac{1}{2}\big(1-\tau+2(\tau b-\|\vec{b}\|)\big)G^{(1)}_\sigma+\frac{1}{2}\big(1-\sigma+2(\sigma a-\|\vec{a}\|)\big)G^{(2)}_\tau
\end{align*}
for all $\sigma,\,\tau\in\{+,-\}$. Conditions for generating joint measurements for qubit POVMs and for non-disturbance in this setting has been studied more in detail in \cite{Heinosaari2016}.

\section{Examples on broadcasting channels for joint measurements}\label{sec:examples}

The results thus far have concentrated on the question of existence of a broadcasting channel for a pair of POVMs and their intended (approximate) joint POVM. We now approach the problem from another angle: how to `engineer' a broadcasting channel for a pair of POVMs so that the POVM resulting from the broadcasting procedure (see, e.g.,\ Figure \ref{fig:kloonimittaus}) is an acceptable (approximate) joint POVM for the initial two POVMs. Since this question is difficult to give a unified theoretical backing, we study some examples on relevant quantum observables, e.g.,\ position and momentum.

In Remark \ref{rem:measandprep}, we saw that (approximate) joint measurements for discrete PVMs can be always generated through broadcasting by a measure-and-prepare channel as long as the joint POVM is supported on the support of the product PVM (the condition of Theorem \ref{theor:PVMbroadcast}). However, usually the question is how to actually broadcast the initial state so that the subsequent local measurements suitably well approximate a joint measurement of the two PVMs, i.e.,\ the problem is that the intended joint measurement $\ms G$ is not initially known whence $\ms G$ cannot be used to define the measure-and-prepare channel. The question of how to suitably define the broadcasting channel without this crucial knowledge is a tough one in general. We next look at a couple of examples where we are able to engineer suitable broadcasting channels which realize natural approximate joint POVMs for position and momentum, first finite, then continuous.

\subsection{Approximate joint measurements for finite position and momentum through broadcasting}\label{subsec:finite}

Let us fix $D\in\N$ and denote by $\Z_D:=\Z/(D\Z)$ the $D$-element cyclic group, where the cyclic sum and subtraction will be denoted simply by $+$ and $-$, respectively. We fix an orthonormal basis $\{\fii_m\}_{m\in\Z_D}\subset\C^D$ and define the basis $\{\psi_n\}_{n\in\Z_D}\subset\C^D$ Fourier-coupled to the former one, i.e., $\psi_n=D^{-1/2}\sum_{m\in\Z_D}\<m,n\>\fii_m$ for all $n\in\Z_D$ where $\<m,n\>:=e^{i2\pi mn/D}$. Our target PVMs are now the rank-1 $\ms Q:=(|\fii_m\>\<\fii_m|)_{m\in\Z_D}$ and $\ms P:=(|\psi_n\>\<\psi_n|)_{n\in\Z_D}$; we could call these as the finite position and momentum, respectively.

We define the unitaries $U_k$ and $V_{\ell}$ for all $k,\,\ell\in\Z_D$ through $U_k\fii_m=\fii_{m+k}$ and $V_{\ell}\fii_m=\<m,\ell\>\fii_m$ for all $m\in\Z_D$. Finally, we define $W_{k,\ell}:=e^{i\pi k,\ell/D}U_kV_{\ell}$ for all $k,\,\ell\in\Z_D$. The natural approximate joint POVMs for $\ms Q$ and $\ms P$ are again of the form $\ms G^\sigma=(G^\sigma_{m,n})_{m,n\in\Z_D}$,
\begin{equation}\label{eq:covPhase}
G^\sigma_{m,n}=\frac{1}{D}W_{m,n}\sigma W_{m,n}^*,\qquad m,n\in\Z_D,
\end{equation}
where $\sigma\geq0$ is a trace-1 operator. Let us fix such an operator $\sigma$ for the duration of this example. We will now construct a broadcasting channel $\Phi:\mc T(\C^D)\to\mc T(\C^D\otimes\C^D)$ realizing $\ms G^\sigma$ through broadcasting, i.e.,\ $\Phi^*(|\fii_m\>\<\fii_m|\otimes|\psi_n\>\<\psi_n|)=G^\sigma_{m,n}$ for all $m,\,n\in\Z_D$.

It is easy to see that
\begin{equation}\label{eq:CUnitary}
U:=\sum_{k\in\Z_D}U_k\otimes|\fii_k\>\<\fii_k|=\sum_{\ell\in\Z_D}|\psi_\ell\>\<\psi_{\ell}|\otimes V_{-\ell}.
\end{equation}
This unitary can be seen as a kind of controlled transformation. Indeed, in the qubit case $D=2$, $U$ is simply the controlled-NOT unitary. Also define the (unitary) parity operator $\mc P$ through $\mc P\fii_m=\fii_{-m}$ for all $m\in\Z_D$. It is easy to check that $\mc P\psi_n=\psi_{-n}$ for all $n\in\Z_D$ as well. Let us also define the transpose $R\mapsto R^T$ w.r.t. the basis $\{\fii_m\}_{m\in\Z_D}$. We now define the broadcasting channel $\Phi$ through
$$
\Phi(\rho)=U(\mc P\sigma^T\mc P\otimes\rho)U^*,\qquad\rho\in\mc T(\C^D).
$$
This channel could be interpreted as joining the input state $\rho$ first with an ancilla in the state $\mc P\sigma^T\mc P$ and then applying the controlled transformation $U$ on the compound system. Using the two equivalent formulations of the controlled transformation $U$, we obtain
\begin{align}
\Phi^*(R\otimes S)=&\sum_{k,\ell\in\Z_D}\<\fii_k|S\fii_{\ell}\>\tr{\mc P\sigma^T\mc PU_k^*RU_{\ell}}|\fii_k\>\<\fii_{\ell}|\label{eq:muoto1}\\
=&\sum_{k,\ell\in\Z_D}\<\psi_k|R\psi_{\ell}\>\<\psi_k|\sigma\psi_{\ell}\>V_kSV_{\ell}^*\label{eq:muoto2}
\end{align}
for all $R,\,S\in\mc L(\C^D)$. Using Equations \eqref{eq:muoto1} and \eqref{eq:muoto2} respectively, we obtain
\begin{align}
\Phi^*(|\fii_m\>\<\fii_m|\otimes|\psi_n\>\<\psi_n|)=&\frac{1}{D}\sum_{k,\ell\in\Z_D}\<k-\ell,n\>\<\fii_{k-m}|\sigma\fii_{\ell-m}\>|\fii_k\>\<\fii_{\ell}|\label{eq:G1}\\
=&\frac{1}{D}\sum_{k,\ell\in\Z_D}\<m,\ell-k\>\<\psi_k|\sigma\psi_{\ell}\>|\psi_{n+k}\>\<\psi_{n+\ell}|\label{eq:G2}
\end{align}
for all $m,\,n\in\Z_D$. It is found through direct calculations that $\Phi^*(|\fii_m\>\<\fii_m|\otimes|\psi_n\>\<\psi_n|)=G^\sigma_{m,n}$ for all $m,\,n\in\Z_D$. Moreover, from Equation \eqref{eq:G1} it is easy to calculate $\sum_{n\in\Z_D}G^\sigma_{m,n}=\sum_{k\in\Z_D}\<\fii_{k-m}|\sigma\fii_{k-m}\>|\fii_k\>\<\fii_k|$ for all $m\in\Z_D$, i.e.,\ the first margin of $\ms G^\sigma$ can be seen as a convolution of the sharp $\ms Q$ with the $\ms Q$-distribution of $\mc P\sigma\mc P$. Similarly, from Equation \eqref{eq:G2}, we obtain $\sum_{m\in\Z_D}G^\sigma_{m,n}=\sum_{k\in\Z_D}\<\psi_{k-n}|\sigma\psi_{k-n}\>|\psi_k\>\<\psi_k|$ for all $n\in\Z_D$, i.e.,\ the second margin of $\ms G^\sigma$ is the convolution of $\ms P$ with the $\ms P$-distribution of $\mc P\sigma\mc P$.

Let us look at the above scheme in the case where we replace the sharp $\ms Q$ and $\ms P$ with the fuzzy $\ms Q^\lambda:=\lambda\ms Q+(1-\lambda)\ms T$ and $\ms P^\mu:=\mu\ms P+(1-\mu)\ms T$ defined as in Corollary \ref{cor:fuzzyPVM} where $\ms T=(T_m)_{m\in\Z_D}$, $T_m=(1/D)\id_D$ for all $m\in\Z_D$. For any $\sigma\in\mc T(\C^D)$, we denote $\sigma^{\ms Q}:=\sum_{m\in\Z_D}\<\fii_m|\sigma\fii_m\>|\fii_m\>\<\fii_m|$ and $\sigma^{\ms P}:=\sum_{n\in\Z_D}\<\psi_n|\sigma\psi_n\>|\psi_n\>\<\psi_n|$. It is easy to check that the margins of $\ms G^\sigma$ are given by $G^{\sigma\,(1)}_m=U_m\sigma^{\ms Q}U_m^*$ and $G^{\sigma\,(2)}_n=V_n\sigma^{\ms P}V_n^*$ for all $m,\,n\in\Z_D$. Using this, we immediately see from Corollary \ref{cor:fuzzyPVM} and Inequality \ref{eq:quditcond} therein that, for $\lambda,\,\mu\in(0,1]$ and trace-1 positive $\sigma$, $\ms G^\sigma$ can be generated through broadcasting from the POVMs $\ms Q^\lambda$ and $\ms P^\mu$ if and only if
\begin{equation}\label{eq:covkivacond}
\sigma-(1-\mu)\sigma^{\ms Q}-(1-\lambda)\sigma^{\ms P}+(1-\lambda)(1-\mu)\frac{1}{D}\id\geq0.
\end{equation}
Let us point out that, if $\lambda<1$ and $\mu<1$, a positive trace-1 $\sigma$ satisfies Inequality \eqref{eq:covkivacond} only if it is of full rank (whence $\ms G^\sigma$ is of full rank), i.e., $\ms G^\sigma$ can be generated through broadcasting from the POVMs $\ms Q^\lambda$ and $\ms P^\mu$ only if $\sigma$ is of full rank. Let us assume that $\lambda<1$ and $\mu<1$ and that trace-1 positive $\sigma$ satisfies Inequality \eqref{eq:covkivacond}. Passing Inequality \eqref{eq:covkivacond} through the completely positive map $\rho\mapsto\rho^{\ms Q}$ (and noting that $(\rho^{\ms Q})^{\ms Q}=\rho^{\ms Q}$ and $(\rho^{\ms P})^{\ms Q}=\tr{\rho}(1/D)\id$ for all $\rho\in\mc T(\C^D)$), we easily find $\sigma^{\ms Q}\geq(1-\lambda)(1/D)\id$. Similarly, passing Inequality \eqref{eq:covkivacond} through the completely positive map $\rho\mapsto\rho^{\ms P}$, we get $\sigma^{\ms P}\geq(1-\mu)(1/D)\id$. Joining these observations with Inequality \eqref{eq:covkivacond}, we find
$$
\sigma\geq(1-\mu)\sigma^{\ms Q}+(1-\lambda)\sigma^{\ms P}-(1-\lambda)(1-\mu)\frac{1}{D}\id\geq(1-\lambda)(1-\mu)\frac{1}{D}\id\neq0,
$$
i.e.,\ $\sigma$ is lower-bounded by a full-rank operator proving the claim.

We next show that, whenever the necessary and sufficient condition of Inequality \eqref{eq:covkivacond} is satisfied, $\ms G^\sigma$ can be reached from the POVMs $\ms Q^\lambda$ and $\ms P^\mu$ using a broadcasting channel of the form used earlier in this subsection. Let $\sigma$ be a trace-1 positive operator so that Inequality \eqref{eq:covkivacond} holds and define the positive trace-1 operator $\tilde{\sigma}:=(\lambda\mu)^{-1}\big(\sigma-(1-\mu)\sigma^{\ms Q}-(1-\lambda)\sigma^{\ms P}+(1-\lambda)(1-\mu)\frac{1}{D}\id\big)$. We now define the channel $\Phi:\mc T(\C^D)\to\mc T(\C^D\otimes\C^D)$ through
$$
\Phi(\rho)=U(\mc P\tilde{\sigma}^T\mc P\otimes\rho)U^*,\qquad\rho\in\mc T(\C^D).
$$
Through direct calculation, one finds that $\Phi^*(Q^\lambda_m\otimes P^\mu_n)=G^{\sigma'}_{m,n}$ for all $m,\,n\in\Z_D$ where
$$
\sigma':=\lambda\mu\tilde{\sigma}+\lambda(1-\mu)\tilde{\sigma}^{\ms Q}+(1-\lambda)\mu\tilde{\sigma}^{\ms P}+(1-\lambda)(1-\mu)\frac{1}{D}\id=\sigma
$$
where the final equality is easy to check. Thus, we actually reach $\ms G^\sigma$ in this setting with the channel $\Phi$ if and only if the minimal condition \eqref{eq:covkivacond} holds.

\subsection{Approximate joint measurements for the canonical Weyl pair through broadcasting}

We next study the continuous counterpart of the above. The system we are studying is a particle moving in one dimension or a single-mode photon field. Thus, the Hilbert space of the system is $\hil:=L^2(\R)$ in the position representation. The canonical position observable is associated with the PVM $\ms Q:\mc B(\R)\to\mc L(\hil)$ defined through $\big(\ms Q(A)\fii\big)(x)=\chi_A(x)\fii(x)$ for all $A\in\mc B(\R)$, $\fii\in\hil$, and $x\in\R$. The function $\chi_A$ here is the characteristic or indicator function of the set $A$. The canonical momentum is given by the PVM $\ms P:\mc B(\R)\to\mc L(\hil)$ which coincides with the Fourier-transformed $\ms Q$, i.e.,\ $\ms P(B)=\mc F^*\ms Q(B)\mc F$ for all $B\in\mc B(\R)$ where $\mc F\in\mc U(\hil)$ is the Fourier-Plancherel operator,
$$
(\mc F\fii)(p)=\frac{1}{\sqrt{2\pi}}\int_\R e^{-ixp}\fii(x)\,dx,\qquad\fii\in L^1(\R)\cap L^2(\R),\quad p\in\R.
$$
As is well known, $\ms Q$ and $\ms P$ are not jointly measurable, meaning that any attempt at measuring $\ms Q$ and $\ms P$ jointly yields only an approximate joint POVM. Measuring $\ms Q$ and $\ms P$ in sequence has long been studied, e.g.,\ under the framework of the Arthurs-Kelly model \cite{ArKe65}. We next describe a set of approximate joint POVMs for $\ms Q$ and $\ms P$ which is often used. In order to do this, we define the projective Weyl representation $W:\R^2\to\mc U(\hil)$, $W(q,p)=e^{-iqP+ipQ}$ for all $q,\,p\in\R$, where $Q$ is the first moment operator of $\ms Q$ (the canonical position operator) and $P$ is the first moment operator of $\ms P$ (the canonical momentum operator). For any positive trace-class operator $\sigma\in\mc T(\hil)$ of trace 1, we may define the {\it covariant phase space POVM} $\ms G_\sigma:\mc B(\R^2)\to\mc L(\hil)$ through
$$
\ms G_\sigma(C)=\frac{1}{2\pi}\int_C W(\vec{z})\sigma W(\vec{z})^*\,d\vec{z},\qquad C\in\mc B(\R^2).
$$
In fact, a POVM $\ms G:\mc B(\R^2)\to\mc L(\hil)$ is $W$-covariant, i.e.,\ $W(q,p)\ms G(C)W(q,p)^*=\ms G\big(C+(q,p)\big)$ for all $(q,p)\in\R^2$ and $C\in\mc B(\R^2)$ if and only if $\ms G=\ms G_\sigma$ for some positive $\sigma\in\mc T(\hil)$ of trace 1 \cite[Theorem 2]{KiLaYl2006}. We now fix a positive $\sigma\in\mc T(\hil)$ of trace 1 and describe a broadcasting channel which realizes $\ms G_\sigma$ through broadcasting $\ms Q$ and $\ms P$. This is possible since clearly $\ms G_\sigma\ll\ms Q\otimes\ms P$; see Theorem \ref{theor:PVMbroadcast}.

Let us define
\begin{equation}\label{eq:jatkuvaU}
U:=e^{-iP\otimes Q}=\int_\R e^{-iqP}\otimes\ms Q(dq).
\end{equation}
This is the usual interaction unitary appearing in the standard measurement model for the position (where the probe system also described by $L^2(\R)$ appears first in the tensor product). Furthermore, we denote $\mc P:=\mc F^2$, i.e.,\ $\mc P$ is the parity operator given by $(\mc P\fii)(x)=\fii(-x)$ for all $\fii\in\hil$ and $x\in\R$. Let us give $\sigma$ the spectral resolution $\sum_i s_i|\eta_i\>\<\eta_i|$ where $\{\eta_i\}_i\subset\hil$ is an orthonormal system and $t_i\geq0$ are numbers summing up to 1. We may now set up the channel $\Phi:\mc T(\hil)\to\mc T(\hil\otimes\hil)$ through
$$
\Phi(\rho)=U(\mc P\sigma'\mc P\otimes\rho)U^*,\qquad\rho\in\mc T(\hil),
$$
where $\sigma'=\sum_i s_i|\eta_i^*\>\<\eta_i^*|$ where, in turn, $\eta_i^*(x)=\overline{\eta_i(x)}$ for all $i$ and $x\in\R$. Physically this channel corresponds to the coupling of the system with an ancilla also represented by $\hil=L^2(\R)$ in the state $\mc P\sigma'\mc P$ using the standard $\ms Q$-measurement coupling $U$ (with coupling constant 1); see the discussion on the standard measurement model (first appearing in \cite{nipa}) in Section 10.4 of \cite{kirja}. To obtain a joint POVM candidate for $\ms Q$ and $\ms P$, we read $\ms Q$ of the ancilla (or probe) and carry out an exact $\ms P$-measurement on the initial system. Through straightforward calculation using the integral formula of Equation \eqref{eq:jatkuvaU}, we find that
$$
\tr{\rho\Phi^*\big(\ms Q(A)\otimes\ms P(B)\big)}=\sum_i\int_A\tr{\rho K_i(x)^*\ms P(B)K_i(x)}\,dx
$$
where $K_i(x)=\eta_i(Q-x)^*=\int_\R\overline{\eta_i(y-x)}\,d\ms Q(y)$ for all $i$ and $x\in\R$. Let us define the POVM $\ms G:\mc B(\R^2)\to\mc L(\hil)$ through $\ms G(C)=\Phi^*\big((\ms Q\otimes\ms P)(C)\big)$ for all $C\in\mc B(\R^2)$. It is easily verified that $W(q,p)K_i(x)=K_i(x+q)W(q,p)$ for all $(q,p)\in\R^2$ and $x\in\R$. Using this and the well-known (and easily verifiable) fact that $W(q,p)\ms P(B)W(q,p)^*=\ms P(B+p)$ for all $(q,p)\in\R^2$ and $B\in\mc B(\R)$, we see that $W(q,p)\ms G(A\times B)W(q,p)^*=\ms G\big((A+q)\times(B+p)\big)$ for all $(q,p)\in\R^2$ and $A,\,B\in\mc B(\R)$. From this it easily follows that $\ms G$ is $W$-covariant, implying that there is a positive $\tilde{\sigma}\in\mc T(\hil)$ of trace 1 such that $\ms G=\ms G_{\tilde{\sigma}}$.

\begin{center}
\begin{figure}
\begin{overpic}[scale=0.7,unit=1mm]{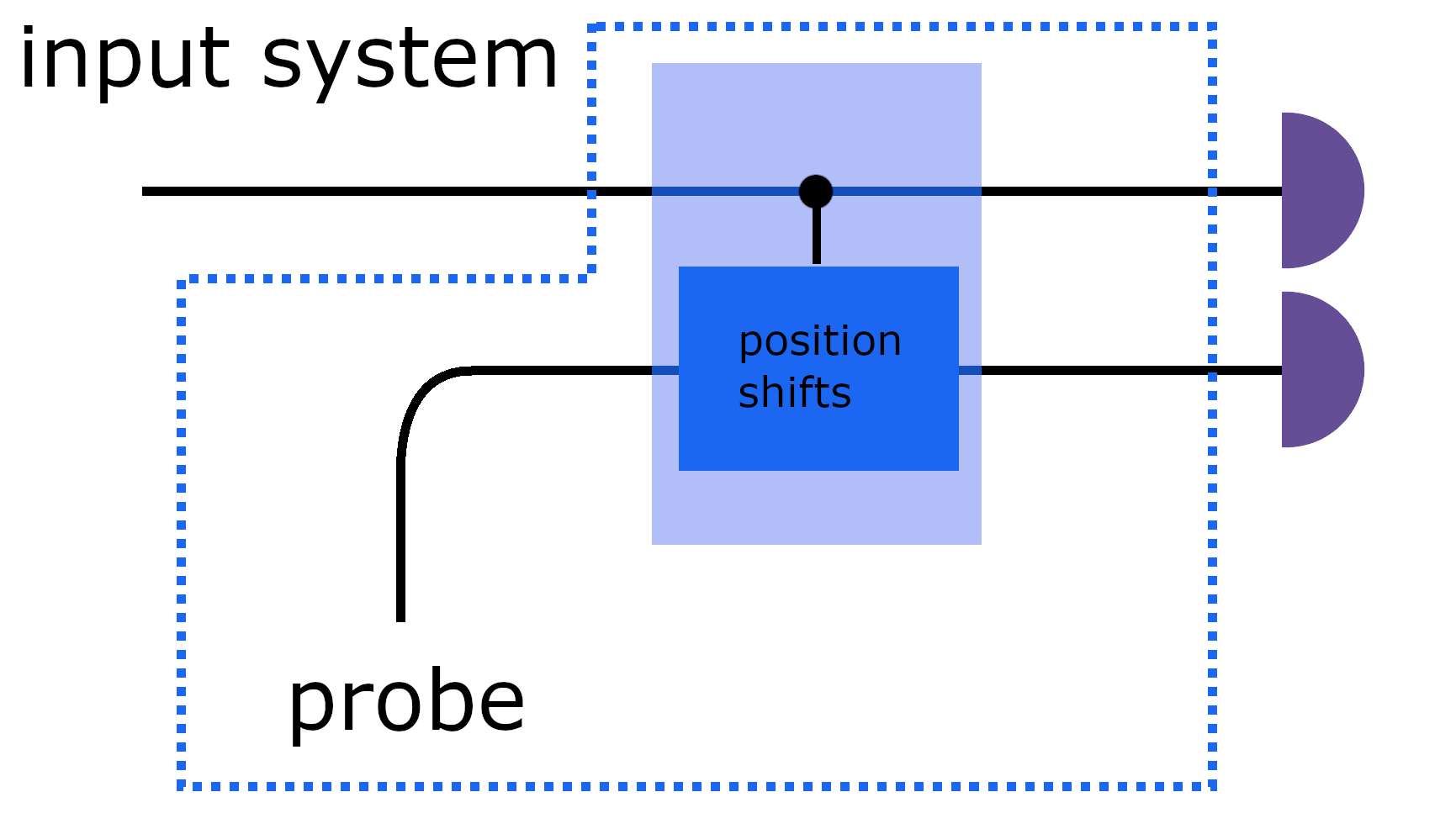}
\put(11.5,15){\begin{Large}
$\mc P\sigma'\mc P$
\end{Large}}
\put(2,45){\begin{huge}
$\rho$
\end{huge}}
\put(50,47){\begin{Huge}
$U$
\end{Huge}}
\put(75,8){\begin{Huge}
$\Phi$
\end{Huge}}
\put(95,30){\begin{Huge}
$\ms Q$
\end{Huge}}
\put(95,43){\begin{Huge}
$\ms P$
\end{Huge}}
\end{overpic}
\caption{\label{fig:coupling} Illustration of the broadcasting channel $\Phi$ (within the dashed line): Both the system and the probe are described by $L^2(\R)$ and they are jointly acted on by the unitary $U$ which couples the position of the initial system with position shifts (moves the pointer) of the probe system. This followed by reading the position of the probe (or of the pointer) corresponds to the standard position measurement. In addition to this we make a precise momentum measurement on the post-measurement system to realize the phase space observable $\ms G_\sigma$.}
\end{figure}
\end{center}

To show that $\tilde{\sigma}=\sigma$, let us directly calculate the form of $\ms G$. For this, let us make some definitions: for all $i$ and $x,\,y\in\R$, let us define $f_{i,x},\,F_{i,x,y}\in L^2(\R)$ through $f_{i,x}(z)=\sqrt{s_i}\overline{\eta_i(z-x)}$ and $F_{i,x,y}(z)=\overline{(\mc F f_{i,x})(y-z)}$. For $f\in L^2(\R)$, we also denote $\hat{f}:=\mc F f$ for brevity. Using the fact that $\mc F(\fii\cdot\psi)=(2\pi)^{-1/2}\hat{\fii}*\hat{\psi}$, where $\fii\cdot\psi$ denotes the pointwise product of the functions $\fii$ and $\psi$ and $\hat{\fii}*\hat{\psi}$ is the convolution of the Fourier transforms of $\fii$ and $\psi$, we have, for all $A,\,B\in\mc B(\R)$ and $\fii\in\hil$,
\begin{align*}
\<\fii|\ms G(A\times B)\fii\>=&\sum_i\int_A\<K_i(q)\fii|\ms P(B)K_i(q)\fii\>\,dq=\sum_i\int_A\<f_{i,q}\cdot\fii|\ms P(B)(f_{i,q}\cdot\fii)\>\,dq\\
=&\sum_i\int_A\<\mc F(f_{i,q}\cdot\fii)|\ms Q(B)\mc F(f_{i,q}\cdot\fii)\>\,dq=\frac{1}{2\pi}\sum_i\int_A\<\hat{f}_{i,q}*\hat{\fii}|\ms Q(B)(\hat{f}_{i,q}*\hat{\fii})\>\,dq\\
=&\frac{1}{2\pi}\sum_i\int_A\int_B|(\hat{f}_{i,q}*\hat{\fii})(p)|^2\,dp\,dq\\
=&\frac{1}{2\pi}\sum_i\int_A\int_B\int_\R\int_\R\overline{\hat{f}_{i,q}(p-z)\hat{\fii}(z)}\hat{f}_{i,q}(p-w)\hat{\fii}(w)\,dz\,dw\,dp\,dq\\
=&\frac{1}{2\pi}\sum_i\int_A\int_B\<\hat{\fii}|F_{i,q,p}\>\<F_{i,q,p}|\hat{\fii}\>\,dp\,dq\\
=&\frac{1}{2\pi}\sum_i\int_A\int_B\<\fii|\mc F^*F_{i,q,p}\>\<\mc F^*F_{i,q,p}|\fii\>\,dp\,dq.
\end{align*}
Using the fact that $(\mc F^*f^*)(z)=\overline{(\mc F^*f)(-z)}$ where $f^*$ is defined through $f^*(z)=\overline{f(z)}$ and that $\mc F\mc P=\mc P\mc F$, we have
\begin{align*}
(\mc F^*F_{i,q,p})(z)=&\overline{(\mc F^*e^{-ipP}\mc P\mc Ff_{i,q})(-z)}=\overline{(e^{ipQ}\mc Pf_{i,q})(-z)}\\
=&e^{ipz}\overline{f_{i,x}(z)}=\sqrt{s_i}e^{ipz}\eta_i(z-q)=\sqrt{s_i}e^{iqp/2}\big(W(q,p)\eta_i)(z)
\end{align*}
for all $i$ and $q,\,p,\,z\in\R$ where the final form can be easily derived through a direct calculation using Baker-Campbell-Hausdorff identities. Combining this with the preceding calculation, we finally have, for all $A,\,B\in\mc B(\R)$ and $\fii\in\hil$,
$$
\<\fii|\ms G(A\times B)\fii\>=\frac{1}{2\pi}\sum_i s_i\int_A\int_B\<\fii|W(q,p)\eta_i\>\<W(q,p)\eta_i|\fii\>\,dp\,dq=\<\fii|\ms G_\sigma(A\times B)\fii\>,
$$
showing that, indeed, $\ms G=\ms G_\sigma$, which we originally set out to prove.

\section{Symmetric broadcasting}\label{sec:symm}

We next look into the case were the two POVMs we are broadcasting and the target POVM are all symmetric. We describe symmetry here through covariance under projective unitary representations of a group. We concentrate on compact groups because we get simple conditions on broadcastability in this setting. In fact, our main results deal with symmetry w.r.t. finite groups. Let us first recall some terminology: Let $\mc G$ be a group with the neutral element $e$ and $\hil$ be a Hilbert space. A map $U:\mc G\to\mc U(\hil)$ is a {\it projective unitary representation} if $U(e)=\id_\hil$ and there is a map $m:\mc G\times\mc G\to\T$ (where $\T$ is the set of complex numbers of modulus 1) which is a {\it multiplier}, i.e.,\ $m(e,g)=1=m(g,e)$ for all $g\in\mc G$ and
$$
m(g,h)m(gh,k)=m(g,hk)m(h,k),\qquad g,\,h,\,k\in\mc G,
$$
such that $U(gh)=m(g,h)U(g)U(h)$ for all $g,\,h\in\mc G$. We typically concentrate on locally compact groups (or measurable groups more generally) where we tacitly assume that the multiplier $m$ is also measurable. It is well-known that there is a many-to-one correspondence between projective unitary representations and group homomorphisms $\mc G\ni g\mapsto\beta_g\in{\rm Aut}\big(\mc L(\hil)\big)$ of $\mc G$ into the group of automorphisms of $\mc L(\hil)$ where the homomorphism or {\it action} corresponding to a projective unitary representation $U$ is given by $\beta_g(R)=U(g)RU(g)^*$ for all $g\in\mc G$ and $R\in\mc L(\hil)$; see Sections VII.1 and VII.2 of \cite{Varadarajan}. Thus, a projective unitary representation $U$ is a general description of dynamics or symmetries of states where the state transforms under a group element $g\in\mc G$ as $\rho\mapsto U(g)\rho U(g)^*$.

Let now $\mc G$ be a compact group with the normalized left Haar measure $\mu:\mc B(\mc G)\to[0,1]$, $\mu(\mc G)=1$. Let $(X,\mc A)$ and $(Y,\mc B)$ be measurable spaces modelling the value spaces of the POVMs to be broadcast. We assume that these are $\mc G$-spaces. This means that, e.g.,\ for $(X,\mc A)$, there is a $\big(\mc B(\mc G)\otimes\mc A,\mc A\big)$-measurable map $\mc G\times X\ni(g,x)\mapsto gx\in X$ such that $ex=x$ for all $x\in X$ and $g(hx)=(gh)x=:ghx$ for all $g,\,h\in\mc G$ and $x\in X$. For a projective unitary representation $U:\mc G\to\mc U(\hil)$, we say that a POVM $\ms M:\mc A\to\mc L(\hil)$ is {\it $(U,X,\mc A)$-covariant} if
$$
U(g)\ms M(A)U(g)^*=\ms M(gA)
$$
for all $g\in\mc G$ and $A\in\mc A$. From now on, we fix projective unitary representations $U,\,V:\mc G\to\mc U(\hil)$, a $(U,X,\mc A)$-covariant POVM $\ms M$, and a $(V,Y,\mc B)$-covariant POVM $\ms N$. Let us fix another projective unitary representation $W:\mc G\to\mc U(\hil)$. We view $(X\times Y,\mc A\otimes\mc B)$ as a $\mc G$-space where $g(x,y):=(gx,gy)$ for all $g\in\mc G$ and $(x,y)\in X\times Y$. Our target POVM is now $\ms G$ which we assume to be $(W,X\times Y,\mc A\otimes\mc B)$-covariant.

To describe the relevant broadcasting schemes, let us briefly describe covariance in channels. For projective unitary representations $W_1:\mc G\to\mc U(\hil)$ and $W_2:\mc G\to\mc U(\mc K)$, we say that a channel $\Phi:\mc T(\hil)\to\mc T(\mc K)$ is {\it $(W_1,W_2)$-covariant} if
$$
\Phi\big(W_1(g)\rho W_1(g)^*\big)=W_2(g)\Phi(\rho)W_2(g)^*
$$
for all $g\in\mc G$ and $\rho\in\mc T(\hil)$. We next show that if the above $(W,X\times Y,\mc A\otimes\mc B)$-covariant POVM $\ms G$ can be generated from the $(U,X,\mc A)$-covariant $\ms M$ and the $(V,Y,\mc B)$-covariant $\ms N$ through broadcasting, then the broadcasting channel can be assumed to be $(W,U\otimes V)$-covariant where $U\otimes V:\mc G\to\mc U(\hil\otimes\hil)$, $(U\otimes V)(g)=U(g)\otimes V(g)$ for all $g\in\mc G$. Indeed, if $\Psi^*(\ms M\otimes\ms N)=\ms G$ for some channel $\Psi$, we can define another channel $\Phi:\mc T(\hil)\to\mc T(\hil\otimes\hil)$ through
$$
\tr{\Phi(\rho)S}=\int_{\mc G}\tr{\Psi\big(W(g)^*\rho W(g)\big)\big(U(g)\otimes V(g)\big)^*S\big(U(g)\otimes V(g)\big)}\,d\mu(g)
$$
for all $\rho\in\mc T(\hil)$ and $S\in\mc L(\hil\otimes\hil)$. Let us first show that $\Phi$ is $(W,U\otimes V)$-covariant. Let $\rho\in\mc T(\hil)$, $S\in\mc L(\hil\otimes\hil)$, and $g\in\mc G$. We have
\begin{align*}
&\tr{\Phi\big(W(g)\rho W(g)^*\big)S}\\
=&\int_{\mc G}\tr{\Psi\big(W(g')^*W(g)\rho W(g)^*W(g')\big)\big(U(g')\otimes V(g')\big)^*S\big(U(g')\otimes V(g')\big)}\,d\mu(g')\\
=&\int_{\mc G}\tr{\Psi\big(W(g^{-1}g')^*\rho W(g^{-1}g')\big)\big(U(g')\otimes V(g')\big)^*S\big(U(g')\otimes V(g')\big)}\,d\mu(g')\\
=&\int_{\mc G}\tr{\Psi\big(W(g')^*\rho W(g')\big)\big(U(gg')\otimes V(gg')\big)^*S\big(U(gg')\otimes V(gg')\big)}\,d\mu(g')\\
=&\tr{\Phi(\rho)\big(U(g)\otimes V(g)\big)^*S\big(U(g)\otimes V(g)\big)},
\end{align*}
showing that $\Phi$ is $(W,U\otimes V)$-covariant. Note that we have used the properties of the left Haar measure in the third equality above. Let us next show that $\Phi^*(\ms M\otimes\ms N)=\ms G$. Let $\rho\in\mc T(\hil)$, $A\in\mc A$, and $B\in\mc B$. We have
\begin{align*}
\tr{\rho\Phi^*\big(\ms M(A)\otimes\ms N(B)\big)}=&\int_{\mc G}\tr{\Psi\big(W(g)^*\rho W(g)\big)\big(U(g)^*\ms M(A)U(g)\otimes V(g)^*\ms N(B)V(g)\big)}\,d\mu(g)\\
=&\int_{\mc G}\tr{\rho W(g)\Psi^*\big(\ms M(g^{-1}A)\otimes\ms N(g^{-1}B)\big)W(g)^*}\,d\mu(g)\\
=&\int_{\mc G}\tr{\rho W(g)\ms G(g^{-1}A\times g^{-1}B)W(g)^*}\,d\mu(g)\\
=&\int_{\mc G}\tr{\rho\ms G(A\times B)}\,d\mu(g)=\mu(\mc G)\tr{\rho\ms G(A\times B)}=\tr{\rho\ms G(A\times B)},
\end{align*}
proving the claim.

In the typical situation, $(X,\mc A)$ and $(Y,\mc B)$ are transitive w.r.t. $\mc G$. This means that, e.g.,\ for $(X,\mc A)$, for all $x,x'\in X$ there is $g\in\mc G$ such that $gx=x'$. Fixing a point $x_0\in X$ and the stability subgroup $H:=\{h\in\mc G\,|\,hx_0=x_0\}$, it is now easy to see that, as sets, $X=\mc G/H$ where the coset space $\mc G/H$ is a $\mc G$-space in the sense that $g(g'H)=gg'H$ for all $g,\,g'\in\mc G$. As long as $X$ is a locally compact and second-countable topological space and $\mc A=\mc B(X)$ is standard Borel, this identification is actually a homeomorphism \cite[Theorem 5.11]{Varadarajan}. This is why we assume from now on that there are closed subgroups $H,\,K\leq\mc G$ such that $X=\mc G/H$, $Y=\mc G/K$, $\mc A=\mc B(\mc G/H)$, and $\mc B=\mc B(\mc G/K)$. In this setting, we call $(U,X,\mc A)$-covariance simply as $(U,\mc G/H)$-covariance and similarly for $(V,Y,\mc B)$. Moreover, we assume that $\mc G$ is finite and study the discrete (finite) POVMs $\ms M=(M_x)_{x\in\mc G/H}$ and $\ms N=(N_y)_{y\in\mc G/K}$ where, e.g.,\ $M_x:=\ms M(\{x\})$ for all $x\in\mc G/H$. It is easy to see that, due to covariance, $\ms M$ and $\ms N$ are completely characterized by the operators $M_H$ and $N_H$ through $M_{gH}=U_gM_H U_g^*$ and $N_{gK}=V_gN_H V_g^*$ for all $g\in\mc G$. Note that we are using lower indices for the arguments of representations, multipliers, etc.\ in this finite case.

Our final simplification allows us to write very simple necessary and sufficient conditions for broadcastability. We assume that $\mc G=\mc X\times\mc Y$ where $\mc X$ and $\mc Y$ are finite groups and $H=\{e_1\}\times\mc Y$ and $K=\mc X\times\{e_2\}$ where $e_1\in\mc X$ and $e_2\in\mc Y$ are the neutral elements. Thus, $\mc G/H\simeq\mc X$ and $\mc G/K\simeq\mc Y$ and $\mc G/H\times\mc G/K\simeq\mc G$. The latter fact means that the value set of the target POVMs is identifiable with the entire group $\mc G$ which is trivially a transitive $\mc G$-space. To sum up, for $(x,y)\in\mc G=\mc X\times\mc Y$, $x'\in\mc X$, and $y'\in\mc Y$, $(x,y)x'=xx'$ and $(x,y)y'=yy'$. Finally, a $(W,\mc G)$-covariant POVM $\ms G=(G_g)_{g\in\mc G}$ can be generated through broadcasting from $\ms M$ and $\ms N$ if and only if there is a $(W,U\otimes V)$-covariant channel $\Phi$ such that $\Phi^*(M_H\otimes N_K)=G_e$. To see this, let $\Phi:\mc T(\hil)\to\mc T(\hil\otimes\hil)$ be a channel such that $\Phi^*(\ms M\otimes\ms N)=\ms G$. As we have already shown, we can assume that $\Phi$ is $(W,U\otimes V)$-covariant. It immediately follows that $\Phi^*(M_H\otimes N_K)=G_{H\times K}=G_e$ using the identifications established earlier. On the other hand, if $\Phi^*(M_H\otimes N_K)=G_e$ for a $(W,U\otimes V)$-covariant channel $\Phi$, we have, for all $x\in\mc X\simeq\mc G/H$ and $y\in\mc Y\simeq\mc G/K$,
\begin{align*}
\Phi^*(M_x\otimes N_y)=&\Phi^*(U_{x,y}M_HU_{x,y}^*\otimes V_{x,y}N_KV_{x,y}^*)=W_{x,y}\Phi^*(M_H\otimes N_K)W_{x,y}^*\\
=&W_{x,y}G_{e_1,e_2}W_{x,y}^*=G_{x,y},
\end{align*}
showing that $\ms G$ can indeed be generated through broadcasting from $\ms M$ and $\ms N$. We will utilize this result in the case of finite covariant phase space POVMs where we obtain particularly simple conditions from these observations.

\subsection{Covariant broadcasting of generalized finite position and momentum}

We use the notations of Subsection \ref{subsec:finite} and consider the situation like that above where $\hil=\C^D$, $\mc G=\Z_D^2$, $\mc X=\Z_D=\mc Y$, and $U=V=W$ is the Weyl representation. Denoting $\omega(k,\ell;m,n):=(2\pi/D)(kn-\ell m)$, the multiplier of $W$ is of the form $m_{k,\ell;m,n}=e^{\frac{i}{2}\omega(k,\ell;m,n)}$ and we have the canonical commutation relations (CCR)
\begin{align}\label{eq:CCR}
W_{k,\ell}W_{m,n}=e^{-i\omega(k,\ell;m,n)}W_{m,n}W_{k,\ell},\qquad W_{-k,-\ell}=W_{k,\ell}^*
\end{align}
for all $(k,\ell),\,(m,n)\in\Z_D^2$.

According to Theorem 4.6 of \cite{Haapasalo2019}, a channel $\Phi:\mc T(\hil)\to\mc T(\hil\otimes\hil)$ is $(W,W\otimes W)$-covariant if and only if there are $f_{k,\ell;m,n}\in\C$ for $(k,\ell),\,(m,n)\in\Z_D^2$ such that $f_{0,0;0,0}=1$, the map
$$
\Z_D^4\times\Z_D^4\ni(k,\ell,m,n;k',\ell',m',n')\mapsto e^{-\frac{i}{2}\big(\omega(k,\ell;m',n')+\omega(m,n;k',\ell')\big)}f_{k'-k,\ell'-\ell;m'-m,n'-n}\in\C
$$
is a positive kernel (i.e.,\ the multi-index matrix whose entries are those on the right-hand side above is positive semi-definite), and $\Phi^*(W_{k,\ell}\otimes W_{m,n})=f_{k,\ell;m,n}W_{k+m,\ell+n}$ for all $(k,\ell),\,(m,n)\in\Z_D^2$. Note that there is a unique $\tau\in\mc S(\C^D)$ such that $f_{0,\ell;m,0}=\tr{W_{\ell,m}^*\tau}$ for all $\ell,\,m\in\Z_D$, i.e.,\ $(\ell,m)\mapsto f_{0,\ell;m,0}$ is the characteristic function of $\tau$; in the sequel, we call this as the state corresponding to the covariant channel $\Phi$. On the other hand, for any $\tau\in\mc S(\C^D)$, there is a $(W,W\otimes W)$-covariant channel $\Phi$ such that the corresponding numbers $f_{k,\ell;m,n}$ are such that $f_{0,\ell;m,0}=\tr{W_{\ell,m}^*\tau}$ for all $\ell,\,m\in\Z_D$ and, otherwise, $f_{k,\ell;m,n=0}$. Indeed, it is straight-forward to check that the channel $\Phi$ defined through
\begin{equation}\label{eq:tauChan}
\Phi(\rho)=\sum_{r,s\in\Z_D}\tr{\rho G^\tau_{r,s}}\,|\fii_r\>\<\fii_r|\otimes|\psi_s\>\<\psi_s|,\qquad\rho\in\mc T(\C^D),
\end{equation}
satisfies this condition. Recall the definition of the POVM $\ms G^\tau=(G^\tau_{r,s})_{(r,s)\in\Z_D^2}$ in Equation \eqref{eq:covPhase} and the position basis $\{\fii_k\}_{k\in\Z_D}$ and the momentum basis $\{\psi_\ell\}_{\ell\in\Z_D}$ of Subsection \ref{subsec:finite}.

We are interested in broadcasting local generalized position $\ms M=(M_k)_{k\in\Z_D}$ and momentum $\ms N=(N_\ell)_{\ell\in\Z_D}$. This means that $W_{m,n}M_k W_{m,n}^*=M_{k+m}$ and $W_{m,n}N_\ell W_{m,n}^*=N_{\ell+n}$ for all $k,\,\ell,\,m,\,n\in\Z_D$. As $\ms M$ thus commutes with $V_\ell$ for all $\ell\in\Z_D$, all $M_k$ diagonalize in the position basis. Similarly, all $N_\ell$ diagonalize in the momentum basis. Especially, $M_0=\sum_{k\in\Z_D}\mu_k|\fii_k\>\<\fii_k|$ and $N_0=\sum_{\ell\in\Z_D}\nu_\ell|\psi_\ell\>\<\psi_\ell|$ for some probability distributions $(\mu_k)_{k\in\Z_D}$ and $(\nu_\ell)_{\ell\in\Z_D}$. Our target POVMs are $(W,\Z_D^2)$-covariant, i.e.,\ POVMs of the form $\ms G=(G_{k,\ell})_{(k,\ell)\in\Z_D^2}$ such that $W_{m,n}G_{k,\ell}W_{m,n}^*=G_{k+m,\ell+n}$ for all $k,\,\ell,\,m,\,n\in\Z_D$. These are all exactly of the form $\ms G^\sigma$ for some $\sigma\in\mc S(\C^D)$. We are now able to write down simple necessary and sufficient conditions for the broadcastability of these POVMs into a $(W,\Z_D^2)$-covariant POVM.

\begin{proposition}
Let $\ms M=(M_k)_{k\in\Z_D}$ be a generalized position and $\ms N=(N_\ell)_{\ell\in\Z_D}$ be a generalized momentum POVM defined by probability distributions $(\mu_k)_{k\in\Z_D}$ and $(\nu_\ell)_{\ell\in\Z_D}$ through $M_0=\sum_{k\in\Z_D}\mu_k|\fii_k\>\<\fii_k|$ and $N_0=\sum_{\ell\in\Z_D}\nu_\ell|\psi_\ell\>\<\psi_\ell|$. The POVM $\ms G^\sigma$ can be generated through broadcasting from $\ms M$ and $\ms N$ if and only if there is a state $\tau$ such that
\begin{equation}\label{eq:sigmatau}
\sigma=\sum_{k,\ell\in\Z_D}\mu_k\nu_\ell\,W_{k,\ell}\tau W_{k,\ell}^*.
\end{equation}
\end{proposition}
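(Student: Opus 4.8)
The plan is to collapse the statement, by means of the two covariance reductions established in running text just before this proposition, to a single identity for $(W,W\otimes W)$-covariant channels, and then to read that identity off against the explicit channel \eqref{eq:tauChan}. Since $\ms M$, $\ms N$ and the target $\ms G^\sigma$ are all $W$-covariant, with the value space identified with $\mc G=\Z_D^2=\mc X\times\mc Y$, $H=\{0\}\times\Z_D$, and $K=\Z_D\times\{0\}$ so that $M_H=M_0$, $N_K=N_0$ and $G_e=G^\sigma_{0,0}$, broadcastability of $\ms G^\sigma$ is equivalent to the existence of a $(W,W\otimes W)$-covariant channel $\Phi$ with $\Phi^*(M_0\otimes N_0)=G^\sigma_{0,0}$. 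By \eqref{eq:covPhase} we have $G^\sigma_{0,0}=\frac1D W_{0,0}\sigma W_{0,0}^*=\frac1D\sigma$, so the whole proposition reduces to: a $(W,W\otimes W)$-covariant channel $\Phi$ with $\Phi^*(M_0\otimes N_0)=\frac1D\sigma$ exists if and only if $\sigma$ has the form \eqref{eq:sigmatau}.

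For the converse (the easy direction) I would, given a state $\tau$ with $\sigma=\sum_{k,\ell}\mu_k\nu_\ell W_{k,\ell}\tau W_{k,\ell}^*$, take the channel $\Phi$ of \eqref{eq:tauChan} built from $\tau$, which is already known to be $(W,W\otimes W)$-covariant. Its dual is diagonal in the position/momentum product basis, so $\<\fii_r|M_0\fii_r\>=\mu_r$ and $\<\psi_s|N_0\psi_s\>=\nu_s$ fall straight out of $M_0=\sum_k\mu_k|\fii_k\>\<\fii_k|$ and $N_0=\sum_\ell\nu_\ell|\psi_\ell\>\<\psi_\ell|$, giving $\Phi^*(M_0\otimes N_0)=\sum_{r,s}\mu_r\nu_s G^\tau_{r,s}=\frac1D\sum_{r,s}\mu_r\nu_s W_{r,s}\tau W_{r,s}^*=\frac1D\sigma$. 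The covariance reduction then promotes this to $\Phi^*(M_k\otimes N_\ell)=G^\sigma_{k,\ell}$ for all $k,\ell$, so $\ms G^\sigma$ is generated through broadcasting.

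For the forward direction I would begin with an arbitrary $(W,W\otimes W)$-covariant $\Phi$ satisfying $\Phi^*(M_0\otimes N_0)=\frac1D\sigma$ and let $\tau$ be its characteristic state supplied by Theorem 4.6 of \cite{Haapasalo2019}, i.e.,\ the unique state fixing the coefficients $f_{0,\ell;m,0}$. The decisive observation is that $M_0$ and $N_0$ are ``pure'' in the Weyl picture: expanding their spectral projections shows that $M_0$ is a linear combination of the $W_{0,\ell}$ alone and $N_0$ a combination of the $W_{m,0}$ alone, the coefficients being finite Fourier transforms of $(\mu_k)$ and $(\nu_\ell)$. Hence $M_0\otimes N_0$ lies in the span of the $W_{0,\ell}\otimes W_{m,0}$, and since $\Phi^*(W_{0,\ell}\otimes W_{m,0})=f_{0,\ell;m,0}W_{m,\ell}$ depends only on the coefficients $f_{0,\ell;m,0}$, i.e.,\ only on $\tau$, the value of $\Phi^*(M_0\otimes N_0)$ is unchanged if $\Phi$ is replaced by the canonical channel $\Phi_\tau$ of \eqref{eq:tauChan} built from the same $\tau$; all coefficients $f_{k,\ell;m,n}$ with $k$ or $n$ nonzero are simply invisible to $M_0\otimes N_0$. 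Applying the converse computation to $\Phi_\tau$ gives $\frac1D\sigma=\Phi^*(M_0\otimes N_0)=\Phi_\tau^*(M_0\otimes N_0)=\frac1D\sum_{k,\ell}\mu_k\nu_\ell W_{k,\ell}\tau W_{k,\ell}^*$, which is exactly \eqref{eq:sigmatau}.

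I expect the main obstacle to be the bookkeeping in this ``only the characteristic state matters'' step: one must write out the Weyl expansions of $M_0$ and $N_0$, verify that they are supported on the pure-momentum and pure-position Weyl operators respectively, and keep careful track of the commutation phases $e^{\pm i\omega(\cdot\,;\cdot)}$ so that the passage from a general covariant $\Phi$ to the explicit $\Phi_\tau$ is an exact equality and not merely an equality up to a position/momentum symmetry. Should the replacement trick prove awkward, the forward direction can instead be finished by a direct finite Fourier inversion, using $\tau=\frac1D\sum_{a,b}\tr{W_{a,b}^*\tau}W_{a,b}$ together with $W_{k,\ell}W_{a,b}W_{k,\ell}^*=e^{-i\omega(k,\ell;a,b)}W_{a,b}$ to check that the weighted Weyl twirl $\sum_{k,\ell}\mu_k\nu_\ell W_{k,\ell}\tau W_{k,\ell}^*$ carries precisely the Weyl coefficients produced by $\Phi^*(M_0\otimes N_0)$; this is the same computation written without the auxiliary channel, and it is where the convolution structure of the margins noted in Subsection \ref{subsec:finite} reappears.
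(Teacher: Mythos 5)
Your reduction to $(W,W\otimes W)$-covariant channels, your converse direction, and the identification of the single identity $\Phi^*(M_0\otimes N_0)=\tfrac1D\sigma$ all coincide with the paper's proof. The gap is in the forward direction, and it is exactly the danger you flagged but did not resolve: the passage from $\Phi$ to the channel $\Phi_\tau$ of \eqref{eq:tauChan} built from the characteristic state $\tau$ of $\Phi$ is \emph{not} an exact equality but only an equality up to the position/momentum swap. Expanding $G^\tau_{r,s}=\tfrac1D W_{r,s}\tau W_{r,s}^*$ in Weyl operators and summing the phases gives
\[
\Phi_\tau^*(W_{0,\ell}\otimes W_{m,0})=\tr{W_{m,\ell}^*\tau}\,W_{m,\ell},
\]
with the trace taken against $W_{m,\ell}$, whereas the characteristic state of $\Phi$ is fixed by $\Phi^*(W_{0,\ell}\otimes W_{m,0})=\tr{W_{\ell,m}^*\tau}\,W_{m,\ell}$, with the indices transposed. (The running text before the proposition does assert the $(\ell,m)$-indexing for the channel \eqref{eq:tauChan}; you apparently took that at face value, but it carries an index swap and is inconsistent with the paper's own proof of the proposition.) Hence $\Phi^*$ and $\Phi_\tau^*$ agree on the span of the $W_{0,\ell}\otimes W_{m,0}$ only when $\tr{W_{\ell,m}^*\tau}=\tr{W_{m,\ell}^*\tau}$ for all $\ell,m$. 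A concrete failure: let $\Phi$ be the channel \eqref{eq:tauChan} built from $|\psi_0\>\<\psi_0|$. Then $\Phi^*(M_0\otimes N_0)=\tfrac1D N_0$, so $\sigma=N_0$ and \eqref{eq:sigmatau} holds with $\tau=|\psi_0\>\<\psi_0|$; but the characteristic state of $\Phi$ is $|\fii_0\>\<\fii_0|$, and with that state the right-hand side of \eqref{eq:sigmatau} equals $M_0$, which differs from $N_0$ unless both $(\mu_k)$ and $(\nu_\ell)$ are uniform. So the identity your forward direction ends with is false in general.

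The state that actually satisfies \eqref{eq:sigmatau} is $\mc F^*\tau^T\mc F$, the Fourier--transpose conjugate of the characteristic state, and producing this involution is precisely the content of the paper's longer route: it expands $M_0$ and $N_0$ in the Weyl basis as you do, but then carries out the finite Fourier inversion with the commutation phases, and $\tau_0\mapsto\mc F^*\tau_0^T\mc F$ emerges as the net effect of that bookkeeping; it cannot be shortcut away. Since the proposition only asserts existence of \emph{some} state, your strategy is reparable: build the comparison channel from $\tilde\tau:=\mc F^*\tau^T\mc F$ rather than from $\tau$; then $\Phi_{\tilde\tau}^*(W_{0,\ell}\otimes W_{m,0})=\tr{W_{m,\ell}^*\tilde\tau}\,W_{m,\ell}=\tr{W_{\ell,m}^*\tau}\,W_{m,\ell}=\Phi^*(W_{0,\ell}\otimes W_{m,0})$, the replacement becomes exact on the span containing $M_0\otimes N_0$, and your computation delivers \eqref{eq:sigmatau} with the state $\tilde\tau$. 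Your fallback (direct Fourier inversion) is the paper's proof, but you only sketch it; as written, the decisive step of your forward direction rests on an identity that fails.
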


\begin{proof}
Let us assume that there is a channel $\Phi$ such that $\Phi^*(\ms M\otimes\ms N)=\ms G^\sigma$. As we have seen, we may assume that $\Phi$ is $(W,W\otimes W)$-covariant. Let $\tau_0$ be the associated state, i.e.,
$$
\Phi^*(W_{0,\ell}\otimes W_{m,0})=\tr{W_{\ell,m}^*\tau_0}W_{m,\ell},\qquad\ell,\,m\in\Z_D.
$$
According to our earlier discussion, $\Phi^*(\ms M\otimes\ms N)=\ms G^\sigma$ is equivalent with $\Phi^*(M_0\otimes N_0)=G^\sigma_{0,0}=D^{-1}\sigma$. Using the easily verifiable fact that $\{D^{-1/2}W_{k,\ell}\}_{(k,\ell)\in\Z_D^2}$ is an orthonormal basis of the Hilbert-Schmidt space and that $\tr{W_{k,\ell}^*M_0}=\delta_{k,0}\sum_s\mu_s\overline{\<\ell,s\>}$ and $\tr{W_{m,n}^*N_0}=\delta_{n,0}\sum_r\nu_r\<m,r\>$, this means
\begin{align*}
\frac{1}{D}\sigma=&\Phi^*(M_0\otimes N_0)=\frac{1}{D^2}\sum_{k,\ell,m,n\in\Z_D}\tr{W_{k,\ell}^*M_0}\tr{W_{m,n}^*N_0}\Phi^*(W_{k,\ell}\otimes W_{m,n})\\
=&\frac{1}{D^2}\sum_{\ell,m,r,s\in\Z_D}\mu_s\nu_r\underbrace{\overline{\<\ell,s\>}\<m,r\>}_{=e^{-i\omega(\ell,m;r,s)}}\Phi^*(W_{0,\ell}\otimes W_{m,0})\\
=&\frac{1}{D^2}\sum_{\ell,m,r,s\in\Z_D}\mu_s\nu_r\tr{W_{\ell,m}^*\tau_0}e^{-i\omega(\ell,m;r,s)}W_{m,\ell}.
\end{align*}
Defining the product distribution $\mu\times\nu=(\mu_k\nu_\ell)_{(k,\ell)\in\Z_D^2}$ and its Fourier-transform $\widehat{\mu\times\nu}$ through
$$
(\widehat{\mu\times\nu})_{\ell,m}=\frac{1}{D}\sum_{r,s\in\Z_D}e^{i\omega(\ell,m;r,s)}\mu_r\nu_s=\frac{1}{D}\sum_{r,s\in\Z_D}e^{-i\omega(m,\ell;r,s)}\mu_s\nu_r,
$$
we now see that
$$
\sigma=\sum_{\ell,m\in\Z_D}(\widehat{\mu\times\nu})_{m,\ell}\tr{W_{\ell,m}^*\tau_0}W_{m,\ell}.
$$

Let us denote by $R\mapsto R^T$ the transpose w.r.t.\ the position basis of $\C^D$. It is easy to check that $\mc F^*W_{k,\ell}^T\mc F=W_{\ell,k}$. Moreover, $\mc F^T=\mc F$ showing that the map $R\mapsto\mc F^*R^T\mc F$ is an involution. Using this, we have
\begin{equation}\label{eq:pseudotranspose}
\mc F^*\sigma^T\mc F=\sum_{\ell,m\in\Z_D}(\widehat{\mu\times\nu})_{m,\ell}\tr{W_{\ell,m}^*\tau_0}W_{\ell,m}.
\end{equation}
For any $\alpha_{r,s}\in\C$ ($r,\,s\in\Z_D$) and $\tau\in\mc S(\C^D)$, it is easy to check that, for $\tau':=\sum_{r,s}\alpha_{r,s}W_{r,s}\tau W_{r,s}$, we have $\tr{W_{\ell,m}^*\tau'}=D\hat{\alpha}_{\ell,m}\tr{W_{\ell,m}^*\tau}$ for all $\ell,\,m\in\Z_D$ where the Fourier-transform $\hat{\alpha}$ is defined in the same way as that for the product distribution $\mu\times\nu$ above. This follows easily from the CCR conditions \eqref{eq:CCR}. Comparing this observation to Equation \eqref{eq:pseudotranspose}, we have $\hat{\alpha}_{\ell,m}=(\widehat{\mu\times\nu})_{m,\ell}$ for all $\ell,\,m\in\Z_D$. Using the (easily verifiable) inversion formula
$$
\alpha_{r,s}=\frac{1}{D}\sum_{\ell,m\in\Z_D}e^{-i\omega(\ell,m;r,s)}\hat{\alpha}_{\ell,m},\qquad r,\,s\in\Z_D,
$$
we have, for all $r,\,s\in\Z_D$,
\begin{align*}
\alpha_{r,s}=&\frac{1}{D}\sum_{\ell,m\in\Z_D}e^{-i\omega(\ell,m;r,s)}(\widehat{\mu\times\nu})_{m,\ell}=\frac{1}{D^2}\sum_{j,k,\ell,m\in\Z_D}e^{-i\omega(\ell,m;r,s)}e^{i\omega(m,\ell;j,k)}\mu_j\nu_k\\
=&\frac{1}{D^2}\sum_{j,k,\ell,m\in\Z_D}e^{-i\omega(\ell,m;r+k,s+j)}\mu_j\nu_k=\mu_{-s}\nu_{-r}.
\end{align*}
Thus, $\mc F^*\sigma^T\mc F=\sum_{r,s}\mu_{-s}\nu_{-r}W_{r,s}\tau_0 W_{r,s}^*$, implying
\begin{align*}
\sigma=&\sum_{r,s\in\Z_D}\mu_{-s}\nu_{-r}\big(\mc FW_{r,s}\tau_0W_{r,s}^*\mc F^*\big)^T=\sum_{r,s\in\Z_D}\mu_{-s}\nu_{-r}\mc F^{*\,T}W_{r,s}^{*\,T}\tau_0^TW_{r,s}^T\mc F^T\\
=&\sum_{r,s\in\Z_D}\mu_{-s}\nu_{-r}\underbrace{\mc F^*W_{r,s}^{*\,T}\mc F}_{=W_{s,r}^*}\mc F^*\tau_0^T\mc F\underbrace{\mc F^*W_{r,s}^T\mc F}_{=W_{s,r}}=\sum_{r,s\in\Z_D}\mu_{-s}\nu_{-r}W_{s,r}^*\mc F^*\tau_0^T\mc FW_{s,r}\\
=&\sum_{r,s\in\Z_D}\mu_{-r}\nu_{-s}W_{r,s}^*\mc F^*\tau_0^T\mc FW_{r,s}=\sum_{r,s\in\Z_D}\mu_r\nu_sW_{r,s}\mc F^*\tau_0^T\mc FW_{r,s}^*.
\end{align*}
Defining $\tau:=\mc F^*\tau_0^T\mc F$, we obtain Equation \eqref{eq:sigmatau}.

Assume now that Equation \eqref{eq:sigmatau} holds with some state $\tau$. Using $\tau$, define the channel $\Phi$ as in Equation \eqref{eq:tauChan}. We now have
\begin{align*}
\Phi^*(M_0\otimes N_0)=&\sum_{r,s\in\Z_D}\<\fii_r|M_0\fii_r\>\<\psi_s|N_0\psi_s\>\,G^\tau_{r,s}=\sum_{r,s\in\Z_D}\mu_r\nu_s\,G^\tau_{r,s}\\
=&\frac{1}{D}\sum_{r,s\in\Z_D}\mu_r\nu_s\,W_{r,s}\tau W_{r,s}^*=\frac{1}{D}\sigma=G^\sigma_{0,0},
\end{align*}
and since $\Phi$ is $(W,W\otimes W)$-covariant, we have $\Phi^*(\ms M\otimes\ms N)=\ms G^\sigma$.
\end{proof}

\section{POVMs reached through broadcasting, local measurements, and post-processing}\label{sec:BLMPP}

We now turn our attention to a scenario where we, as earlier, first broadcast the input state with a broadcasting channel $\Phi$, then carry out {\it fixed} local measurements $\ms M_0$ and $\ms N_0$ on the `clones', and finally classically manipulate, i.e.,\ post-process, the product observable $\ms M_0\otimes\ms N_0$. Let us formalize the concept of post-processing: For any general POVM $\ms M:\mc A\to\mc L(\hil)$ (with value space $(X,\mc A)$ and Hilbert space $\hil$) any target value space $(Z,\mc C)$, we say that $\beta:\mc C\times X\to\R$ is a {\it $\ms M$-weak Markov kernel} if
\begin{itemize}
\item for all $C\in\mc C$, $0\leq\beta(C|x)\leq1$ for $\ms M$-a.a.\ $x\in X$,
\item $\beta(\emptyset|x)=0$ and $\beta(Z|x)=1$ for $\ms M$-a.a.\ $x\in X$, and
\item for any disjoint sequence $C_1,\,C_2,\ldots\in\mc C$, $\beta\big(\cup_{i=1}^\infty C_i\big|x\big)=\sum_{i=1}^\infty\beta(C_i|x)$ for $\ms M$-a.a.\ $x\in X$.
\end{itemize}
Moreover, we define the post-processed POVM $\beta[\ms M]:\mc C\to\mc L(\hil)$ through
$$
\beta[\ms M](C)=\int_X\beta(C|x)\,d\ms M(x),\qquad C\in\mc C.
$$

Let us assume that the value space of $\ms M_0$ is $(X,\mc A)$ and that of $\ms N_0$ is $(Y,\mc B)$. Moreover, let the Hilbert space of $\ms M_0$ be $\mc K_1$ and that of $\ms N_0$ be $\mc K_2$. Let us pick a target value space (which will be the value space of the target POVM generated in this setting) $(Z,\mc C)$; typically $Z=X'\times Y'$ and $\mc C=\mc A'\otimes\mc B'$ where $(X',\mc A')$ and $(Y',\mc B')$ are some measurable spaces, but we do not have to make this specification for now. Post-processing is now represented by a $\ms M_0\otimes\ms N_0$-weak Markov kernel $\beta(\cdot|\cdot):\mc C\times(X\times Y)\to\R$, so that the POVM generated by the above scheme is $\ms G=\beta[\Phi^*\circ(\ms M_0\otimes\ms N_0)]$, i.e., for all $C\in\mc C$,
$$
\ms G(C)=\int_{X\times Y}\beta(C|x,y)\Phi^*\big(\ms M_0(dx)\otimes\ms N_0(dy)\big).
$$
We call the above scheme of generating POVMs $\ms G$ as the {\it broadcasting, local measurements, and post-processing (BLMPP) protocol} which is illustrated in Figure \ref{fig:blmpp}. This protocol uses the fixed pair $(\ms M_0,\ms N_0)$ as its resource from which different (joint) measurements is derived using varying the broadcasting channel $\Phi$, the target value space $(Z,\mc C)$, and the post-processing $\beta$.

\begin{center}
\begin{figure}
\begin{overpic}[scale=0.35,unit=1mm]{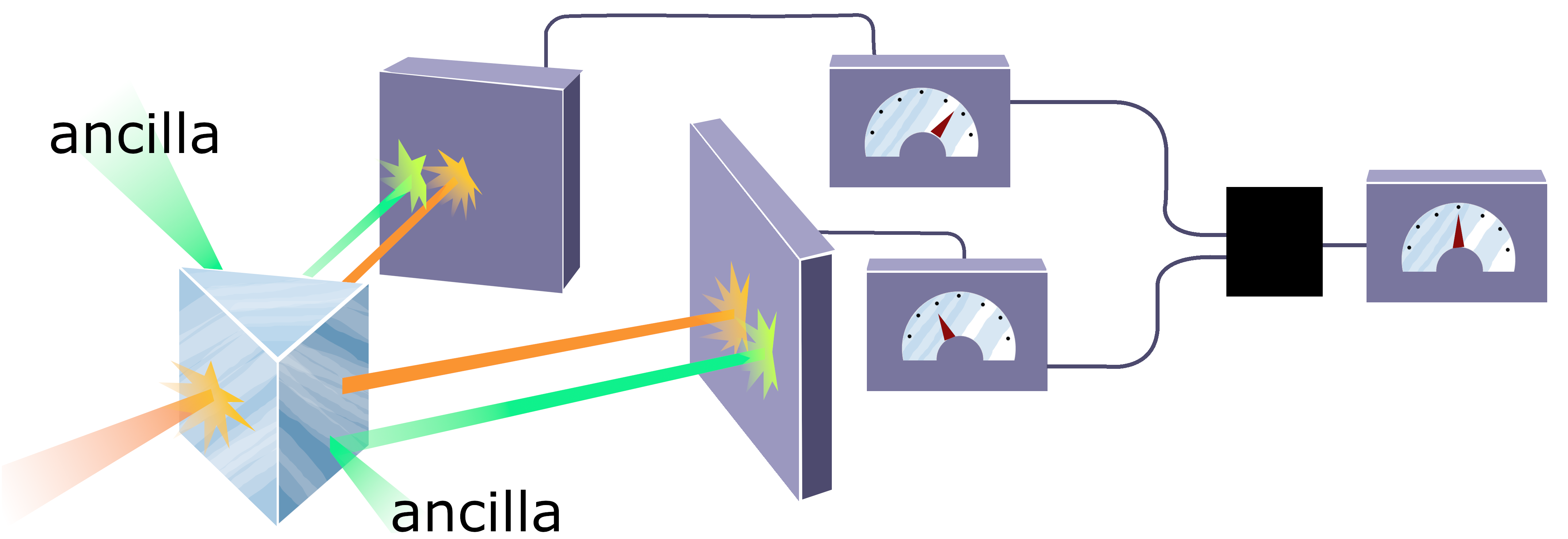}
\put(0,3){\begin{Huge}
$\rho$
\end{Huge}}
\put(13,15){\begin{Huge}
$\Phi$
\end{Huge}}
\put(81,38){\begin{Large}
$\ms M_0$
\end{Large}}
\put(84,9){\begin{Large}
$\ms N_0$
\end{Large}}
\put(100,22){\begin{Large}
\textcolor{white}{$\beta$}
\end{Large}}
\put(113,31){\begin{Huge}
$\ms G$
\end{Huge}}
\end{overpic}
\caption{\label{fig:blmpp} An illustration of the BLMPP protocol: The input state $\rho$ is broadcast with the channel $\Phi$ (here, again, the beam-splitting crystal). The resource of this protocol is the pair $(\ms M_0,\ms N_0)$ of POVMs and to match dimensions, the dimensions of the arms of the broadcasting channel $\Phi$ typically differ from the input dimension; this is highlighted by the presence of ancillary systems. After the local measurements of $\ms M_0$ and $\ms N_0$, the classical post-processing, represented by a (weak) Markov kernel $\beta$ is carried out, realizing the POVM $\ms G$.}
\end{figure}
\end{center}

\begin{definition}\label{def:BLMPP}
Fix Hilbert spaces $\mc K_1$ and $\mc K_2$, value spaces $(X,\mc A)$ and $(Y,\mc B)$, and POVMs $\ms M_0:\mc A\to\mc L(\mc K_1)$ and $\ms N_0:\mc B\to\mc L(\mc K_2)$. For any Hilbert space $\hil$ and any measurable space $(Z,\mc C)$, we denote by $\mf G_{\hil,Z,\mc C}(\ms M_0,\ms N_0)$ the set of POVMs $\beta[\Phi^*\circ(\ms M_0\otimes\ms N_0)]$ where $\Phi:\mc T(\hil)\to\mc T(\mc K_1\otimes\mc K_2)$ is any channel and $\beta(\cdot|\cdot):\mc C\times(X\times Y)\to\R$ is any $\ms M_0\otimes\ms N_0$-weak Markov kernel. Moreover, we denote $\mf G_\hil(\ms M_0,\ms N_0):=\mf G_{\hil,\R,\mc B(\R)}(\ms M_0,\ms N_0)$ for all Hilbert spaces $\hil$. Finally, denoting by $[n]$ the set $\{1,\ldots,n\}$ whenever $n\in\N$, we define
$$
\mf G(\ms M_0,\ms N_0):=\mf G_{\ell^2_\N}(\ms M_0,\ms N_0)\cup\bigcup_{n=1}^\infty\mf G_{\ell^2_{[n]}}(\ms M_0,\ms N_0).
$$
\end{definition}

We could have defined $\mf G(\ms M_0,\ms N_0)$ as the {\it class} of all POVMs reached through BLMPP using $(\ms M_0,\ms N_0)$ as a resource whatever the target Hilbert space $\hil$ or the target value space $(Z,\mc C)$. However, to make things easier, we now restrict ourselves to target POVMs whose Hilbert space is separable (i.e.\ isomorphic to $\ell^2_{[n]}$ for some $n\in\N$ or to $\ell^2_\N$) and whose value space is $\big(\R,\mc B(\R)\big)$. Especially, our definition caters for the discrete POVMs in a separable Hilbert space. Moreover, from now on, we assume that all the Hilbert spaces, i.e.,\ the fixed Hilbert spaces $\mc K_1$ and $\mc K_2$ and the target Hilbert spaces $\hil$, are separable and the value spaces $(X,\mc A)$ and $(Y,\mc B)$ are standard Borel, i.e., they are $\sigma$-isomorphic to a Borel-measurable subset of a Polish space equipped with the restriction of the Borel $\sigma$-algebra on that subset. Actually, we may view any standard Borel measurable space as a measurable subset of $\R$. This explains our special notational treatment of the space $\big(\R,\mc B(\R)\big)$ in the definition above; we basically treat all POVMs as real POVMs from now on in this section (be they discrete or continuous). Moreover, in this setting, we may assume that the weak Markov kernels are actually ordinary Markov kernels, i.e.,\ the `for almost all' phrases can be replaced with `for all' phrases in the bullet points defining weak Markov kernels. This means that, for (standard Borel) measurable spaces $(X,\mc A)$ and $(Z,\mc C)$, a Markov kernel $\beta(\cdot|\cdot):\mc C\times X\to[0,1]$ is such that $\beta(C|\cdot):X\to[0,1]$ is $\mc A$-measurable for all $C\in\mc C$ and $\beta(\cdot|x):\mc C\to[0,1]$ is a probability measure for all $x\in X$. For (standard Borel) measurable spaces $(X,\mc A)$, $(Y,\mc B)$, and $(Z,\mc C)$ and Markov kernels $\alpha(\cdot|\cdot):\mc B\times X\to[0,1]$ and $\beta(\cdot|\cdot):\mc C\times Y\to[0,1]$, we may define the third Markov kernel $(\beta*\alpha)(\cdot|\cdot):\mc C\times X\to[0,1]$ through
$$
(\beta*\alpha)(C|x)=\int_Y\beta(C|y)\alpha(dy|x),\qquad C\in\mc C,\quad x\in X.
$$
We next introduce some notational conventions.

\begin{definition}
Suppose that $\hil$ is a (separable) Hilbert space and $(X,\mc A)$ is a (standard Borel) measurable space and that $\ms M:\mc A\to\mc L(\hil)$ is a POVM. For another (separable) Hilbert space $\hil'$ and a POVM $\ms N:\mc A\to\mc L(\hil')$ we denote $\ms N\leq_{\rm prae}\ms M$ if there is a channel $\Phi:\mc T(\hil')\to\mc T(\hil)$ such that $\ms N=\Phi^*\circ\ms M$. We call $\ms N$ as a {\it pre-processing of $\ms M$}. For another (standard Borel) measurable space $(Z,\mc C)$ and a POVM $\ms N':\mc C\to\mc L(\hil)$, we denote $\ms N'\leq_{\rm post}\ms M$ if there is a Markov kernel $\beta(\cdot|\cdot):\mc C\times X\to[0,1]$ such that $\ms N'=\beta[\ms M]$. We call $\ms N'$ as a {\it post-processing of $\ms M$}. Let $\hil'$ and $(Z,\mc C)$ be as above. We say that a POVM $\tilde{\ms N}:\mc C\to\mc L(\hil')$ is a {\it hybrid processing of $\ms M$} if there is a channel $\Phi:\mc T(\hil')\to\mc T(\hil)$ and a Markov kernel $\beta(\cdot|\cdot):\mc C\times X\to[0,1]$ such that $\tilde{\ms N}=\beta[\Phi^*\circ\ms M]$, and we denote $\tilde{\ms N}\leq_{\rm h}\ms M$.
\end{definition}

We have already explained the physical meaning of post-processing. Pre-processing means that, before measuring the POVM $\ms M$, we transform the pre-measurement system originally described by the Hilbert space $\hil'$ with a channel $\Phi:\mc T(\hil')\to\mc T(\hil)$. When we next measure $\ms M$, the measurement statistics are given by $\mc A\ni A\mapsto\tr{\Phi(\rho)\ms M(A)}=\tr{\rho\Phi^*\big(\ms M(A)\big)}$ for all input states $\rho\in\mc S(\hil')$, i.e.,\ we actually measure the preprocessing $\Phi^*\circ\ms M$. Hybrid processing is naturally a combination of pre- and post-processing. A natural extension to hybrid processing would be convex combinations of hybrid processing schemes, but we concentrate, for now, on simple hybrid processings. Using the definitions made thus far, we can prove the following elementary observations.

\begin{proposition}\label{prop:Gset}
Let $\ms M$, $\ms N$, $\ms M'$, and $\ms N'$ be any (real; continuous or discrete) POVMs operating in any (separable) Hilbert spaces. The following implications hold:
$$
\begin{array}{rclrclcrcl}
\ms M'&\leq_{\rm prae}&\ms M,&\ms N'&\leq_{\rm prae}&\ms N&\Rightarrow&\mf G(\ms M',\ms N')&\subseteq&\mf G(\ms M,\ms N),\\
\ms M'&\leq_{\rm post}&\ms M,&\ms N'&\leq_{\rm post}&\ms N&\Rightarrow&\mf G(\ms M',\ms N')&\subseteq&\mf G(\ms M,\ms N),\\
\ms M'&\leq_{\rm h}&\ms M,&\ms N'&\leq_{\rm h}&\ms N&\Rightarrow&\mf G(\ms M',\ms N')&\subseteq&\mf G(\ms M,\ms N).
\end{array}
$$
\end{proposition}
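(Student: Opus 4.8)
The plan is to prove only the third (hybrid) implication, since the first two are its special cases: if $\ms M'\leq_{\rm prae}\ms M$ then $\ms M'\leq_{\rm h}\ms M$ by taking the post-processing to be the identity Markov kernel $\alpha(A|x)=\chi_A(x)$ (whence $\alpha[\ms L]=\ms L$), and if $\ms M'\leq_{\rm post}\ms M$ then $\ms M'\leq_{\rm h}\ms M$ by taking the pre-processing channel to be the identity; the same applies to the second argument. So it suffices to assume $\ms M'=\alpha_1[\Psi_1^*\circ\ms M]$ and $\ms N'=\alpha_2[\Psi_2^*\circ\ms N]$ for channels $\Psi_1,\,\Psi_2$ and Markov kernels $\alpha_1,\,\alpha_2$, and to show $\mf G_\hil(\ms M',\ms N')\subseteq\mf G_\hil(\ms M,\ms N)$ for each fixed separable target Hilbert space $\hil$; taking the union over $\hil$ then gives the stated inclusion of the $\mf G$'s.

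I would fix a generic $\ms G=\beta[\Phi^*\circ(\ms M'\otimes\ms N')]\in\mf G_\hil(\ms M',\ms N')$, with $\Phi:\mc T(\hil)\to\mc T(\mc K_1'\otimes\mc K_2')$ a channel (here $\mc K_i'$ is the Hilbert space of the processed observable) and $\beta$ a Markov kernel into $\big(\R,\mc B(\R)\big)$. The central step is to rewrite $\Phi^*\circ(\ms M'\otimes\ms N')$ as a single post-processing of $\tilde\Phi^*\circ(\ms M\otimes\ms N)$ for one channel $\tilde\Phi$ with domain $\hil$. For this I would first establish $\ms M'\otimes\ms N'=(\alpha_1\times\alpha_2)\big[(\Psi_1\otimes\Psi_2)^*\circ(\ms M\otimes\ms N)\big]$, where $\alpha_1\times\alpha_2$ is the product Markov kernel fixed by $(\alpha_1\times\alpha_2)(A'\times B'|x,y)=\alpha_1(A'|x)\alpha_2(B'|y)$. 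This splits into a pre-processing part, $(\Psi_1^*\circ\ms M)\otimes(\Psi_2^*\circ\ms N)=(\Psi_1\otimes\Psi_2)^*\circ(\ms M\otimes\ms N)$, which is immediate on product sets from the definitions of the tensor product of maps and of POVMs, and a post-processing part, a Fubini-type identity saying that the tensor product of two post-processed POVMs equals the product-kernel post-processing of the tensor product POVM.

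Next I would push $\Phi^*$ through the outer post-processing. Since $\Phi^*$ is linear and normal, it commutes with the weak integral defining a post-processing, so $\Phi^*\circ\big((\alpha_1\times\alpha_2)[\ms L]\big)=(\alpha_1\times\alpha_2)[\Phi^*\circ\ms L]$ for every POVM $\ms L$; applying this with $\ms L=(\Psi_1\otimes\Psi_2)^*\circ(\ms M\otimes\ms N)$ and putting $\tilde\Phi:=(\Psi_1\otimes\Psi_2)\circ\Phi$ yields $\Phi^*\circ(\ms M'\otimes\ms N')=(\alpha_1\times\alpha_2)[\tilde\Phi^*\circ(\ms M\otimes\ms N)]$. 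The map $\tilde\Phi:\mc T(\hil)\to\mc T(\mc K_1\otimes\mc K_2)$ is a channel, being the composition of $\Phi$ with the tensor product channel $\Psi_1\otimes\Psi_2$, as tensor products and compositions of channels are again channels. Collapsing the two layers of post-processing via the Markov-kernel composition of the excerpt then gives $\ms G=\beta\big[(\alpha_1\times\alpha_2)[\tilde\Phi^*\circ(\ms M\otimes\ms N)]\big]=\big(\beta*(\alpha_1\times\alpha_2)\big)[\tilde\Phi^*\circ(\ms M\otimes\ms N)]$, and since $\beta*(\alpha_1\times\alpha_2)$ is again a Markov kernel into $\big(\R,\mc B(\R)\big)$, this exhibits $\ms G$ as an element of $\mf G_\hil(\ms M,\ms N)$, as desired.

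The channel bookkeeping and the commutation of $\Phi^*$ with post-processing are routine consequences of normality and linearity. The main obstacle is the Fubini-type identity $\ms M'\otimes\ms N'=(\alpha_1\times\alpha_2)[(\Psi_1^*\circ\ms M)\otimes(\Psi_2^*\circ\ms N)]$: one must verify it on product sets $A'\times B'$, where it reduces to the weak relation $\int_{X\times Y}f(x)g(y)\,d(\ms M\otimes\ms N)(x,y)=\big(\int_X f\,d\ms M\big)\otimes\big(\int_Y g\,d\ms N\big)$ for product POVMs, confirm that $\alpha_1\times\alpha_2$ extends to a genuine Markov kernel on the product $\sigma$-algebra $\mc A'\otimes\mc B'$, and then propagate the identity from product sets to all of $\mc A'\otimes\mc B'$ by a monotone-class/uniqueness argument, invoking the standard Borel hypotheses to secure the needed regularity.
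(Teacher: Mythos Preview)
Your proposal is correct and follows essentially the same route as the paper: reduce the first two implications to the hybrid case via the identity kernel or identity channel, build the product Markov kernel $\alpha_1\times\alpha_2$ (the paper's $\beta_{1,2}$, whose well-definedness it verifies by a Dynkin class argument), and then absorb the local pre-processings into a single broadcasting channel $\tilde\Phi=(\Psi_1\otimes\Psi_2)\circ\Phi$ while composing the post-processings into $\beta*(\alpha_1\times\alpha_2)$. The only cosmetic difference is that you factor the computation through the intermediate identity $\ms M'\otimes\ms N'=(\alpha_1\times\alpha_2)\big[(\Psi_1\otimes\Psi_2)^*\circ(\ms M\otimes\ms N)\big]$ and then push $\Phi^*$ through, whereas the paper does this in a single chain of equalities.
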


\begin{proof}
We only prove the final implication since the two other implications follow from the third one. This is due to the fact that, if $\ms M'\leq_{\rm prae}\ms M$ or $\ms M'\leq_{\rm post}\ms M$, then $\ms M'\leq_{\rm h}\ms M$ (and similarly for $\ms N$ and $\ms N'$). To see this, let us first assume that $\ms M'\leq_{\rm prae}\ms M$ where $\ms M$ operates in the Hilbert space $\mc K$ and $\ms M'$ in $\mc K'$. Let $\Phi:\mc T(\mc K')\to\mc T(\mc K)$ be a channel such that $\ms M'=\Phi^*\circ\ms M$. Define the Markov kernel $\delta(\cdot|\cdot):\mc B(\R)\times\mc R\to[0,1]$ through $\delta(A|x)=\chi_A(x)$ for all $A\in\mc B(\R)$ and $x\in\R$. We now have, for all $A\in\mc B(\R)$,
$$
\delta[\Phi^*\circ\ms M](A)=\int_\R\delta(A|x)(\Phi^*\circ\ms M)(dx)=\int_A(\Phi^*\circ\ms M)(dx)=(\Phi^*\circ\ms M)(A)=\ms M'(A),
$$
i.e.,\ $\delta[\Phi^*\circ\ms M]=\ms M'$, implying that $\ms M'\leq_{\rm h}\ms M$. Assume next that $\ms M'\leq_{\rm post}\ms M$ where both $\ms M$ and $\ms M'$ both operate in the same Hilbert space $\mc K$. Denote the identity map within $\mc T(\mc K)$ by ${\rm id}$. Since there is a Markov kernel $\beta(\cdot
|\cdot):\mc B(\R)\times\R\to[0,1]$ such that $\beta[\ms M]=\ms M'$, we have $\ms M'=\beta[\ms M]=\beta[{\rm id}^*\circ\ms M]$, implying that $\ms M'\leq_{\rm h}\ms M$. Thus, we may concentrate on the final implication.

Let us now assume the left-hand side of the third implication and that $\ms M$, $\ms N$, $\ms M'$, and $\ms N'$ operate, respectively, in the Hilbert space $\mc K_1$, $\mc K_2$, $\mc K'_1$, and $\mc K'_2$. Let $\Phi_i:\mc T(\mc K'_i)\to\mc T(\mc K_i)$, $i=1,\,2$ be channels and $\beta_i:\mc B(\R)\times\R\to[0,1]$, $i=1,\,2$, be Markov kernels such that $\ms M'=\beta_1[\Phi_1^*\circ\ms M]$ and $\ms N'=\beta_2[\Phi_2^*\circ\ms N]$. Define the map $\beta_{1,2}(\cdot|\cdot):\mc B(\R^2)\times\R^2\to[0,1]$ so that $\beta_{1,2}(\cdot|x,y)$ is the product measure of $\beta_1(\cdot|x)$ and $\beta_2(\cdot|y)$ for all $x,\,y\in\R$. Denote the set of sets $C\in\mc B(\R^2)$ such that $\beta_{1,2}(C|\cdot):\R^2\to[0,1]$ is measurable by $\mc C$. One easily sees that $\mc C$ is a Dynkin class which contains the product sets $A\times B$ where $A,\,B\in\mc B(\R)$ and, according to the Dynkin class theorem, $\mc B(\R^2)\subseteq\mc C$, i.e.,\ $\mc C=\mc B(\R^2)$ and $\beta_{1,2}$ is a Markov kernel. Let $\hil$ be a separable Hilbert space and $\ms G:\mc B(\R)\to\mc L(\hil)$ be a POVM within $\mf G(\ms M',\ms N')$ so that there is a channel $\Phi:\mc T(\hil)\to\mc T(\mc K'_1\otimes\mc K'_2)$ and a Markov kernel $\beta(\cdot|\cdot):\mc B(\R)\times\R^2\to[0,1]$ such that, for all $C\in\mc B(\R)$,
\begin{align*}
\ms G(C)=&\beta[\Phi^*\circ(\ms M'\otimes\ms N')](C)=\int_{\R^2}\beta(C|x,y)\Phi^*\big(\ms M'(dx)\otimes\ms N'(dy)\big)\\
=&\int_{\R^2}\int_{\R^2}\beta(C|x,y)\beta_1(dx|z)\beta_2(dy|w)\Phi^*\Big(\Phi_1^*\big(\ms M(dz)\big)\otimes\Phi_2^*\big(\ms N(dy)\big)\Big)\\
=&\int_{\R^2}(\beta*\beta_{1,2})(C|z,w)\big((\Phi_1\otimes\Phi_2)\circ\Phi\big)^*\big(\ms M(dz)\otimes\ms N(dw)\big)\\
=&(\beta*\beta_{1,2})\big[\big((\Phi_1\otimes\Phi_2)\circ\Phi\big)^*\circ(\ms M\otimes\ms N)\big](C),
\end{align*}
implying that $\ms G\in\mf G(\ms M,\ms N)$.
\end{proof}

Let $(X,\mc A)$ be a measurable space, $\hil$ be a Hilbert space, and $\ms M:\mc A\to\mc L(\hil)$ be a POVM. We say that a triple $(\mc M,\ms P,V)$ consisting of a Hilbert space $\mc M$, a PVM $\ms P:\mc A\to\mc L(\mc M)$, and an isometry $V:\hil\to\mc M$ is a {\it Na\u{\i}mark dilation} for $\ms M$ if $\ms M(A)=V^*\ms P(A)V$ for all $A\in\mc A$. A POVM is said to be of {\it rank 1} if it has a Na\u{\i}mark dilation of the form $(L^2_\mu,\ms P_\mu,V)$ where $\mu:\mc A\to\R$ is some positive measure and $\ms P_\mu:\mc A\to\mc L(L^2_\mu)$ is the canonical PVM in $L^2_\mu$, i.e.,\ $\big(\ms P_\mu(A)f\big)(x)=\chi_A(x)f(x)$ for all $A\in\mc A$, $f\in L^2_\mu$, and $x\in X$. The next result tells us that rank-1 PVMs have the highest resource potential for BLMPP.

\begin{theorem}\label{theor:BLMPPmax}
Let $\mc K_1$ and $\mc K_2$ be separable Hilbert spaces and $\ms M_0:\mc B(\R)\to\mc L(\mc K_1)$ and $\ms N_0:\mc B(\R)\to\mc L(\mc K_2)$ be POVMs. There are separable Hilbert spaces $\mc M$ and $\mc N$ and rank-1 PVMs $\ms P_0:\mc B(\R)\to\mc L(\mc M)$ and $\ms Q_0:\mc B(\R)\to\mc L(\mc N)$ such that $\ms M_0\leq_{\rm h}\ms P_0$ and $\ms N_0\leq_{\rm h}\ms Q_0$ and, consequently, $\mf G(\ms M_0,\ms N_0)\subseteq\mf G(\ms P_0,\ms Q_0)$.
\end{theorem}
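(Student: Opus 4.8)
The plan is to reduce everything to a single statement about one POVM and then invoke Proposition~\ref{prop:Gset}. Concretely, I would first prove the auxiliary claim that \emph{every} POVM $\ms M_0:\mc B(\R)\to\mc L(\mc K_1)$ on a separable Hilbert space is a hybrid processing of some rank-1 PVM $\ms P_0:\mc B(\R)\to\mc L(\mc M)$, i.e.\ $\ms M_0\leq_{\rm h}\ms P_0$ with $\mc M$ separable. Applying this claim separately to $\ms M_0$ (producing $\ms P_0$ on $\mc M$) and to $\ms N_0$ (producing $\ms Q_0$ on $\mc N$) yields $\ms M_0\leq_{\rm h}\ms P_0$ and $\ms N_0\leq_{\rm h}\ms Q_0$, whereupon the inclusion $\mf G(\ms M_0,\ms N_0)\subseteq\mf G(\ms P_0,\ms Q_0)$ is precisely the third implication of Proposition~\ref{prop:Gset}. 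Thus all the work sits in the single-POVM claim.

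For that claim I would factor $\leq_{\rm h}$ through a pre-processing followed by a post-processing. A Na\"{\i}mark dilation $(\mc M',\ms P',V)$ of $\ms M_0$ gives a PVM $\ms P'$ on a separable $\mc M'$ with $\ms M_0(A)=V^*\ms P'(A)V$; since $V$ is an isometry, $\Psi(\rho):=V\rho V^*$ is a channel with $\Psi^*(R)=V^*RV$, so $\ms M_0=\Psi^*\circ\ms P'$, that is $\ms M_0\leq_{\rm prae}\ms P'$. The substantive step is to exhibit $\ms P'$ as a post-processing of a rank-1 PVM. Here I would enlarge the range of $\ms P'$ to a maximal abelian von Neumann subalgebra of $\mc L(\mc M')$; equivalently, using Hahn--Hellinger multiplicity theory, I would write $\ms P'$ up to unitary equivalence as $\bigoplus_n\ms P_{\mu_n}\otimes\id$ on $\bigoplus_n L^2_{\mu_n}\otimes\C^n$. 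On the refined value space $Z:=\bigsqcup_n(\R\times\{1,\dots,n\})$ (standard Borel, hence identifiable with a Borel subset of $\R$, as the paper remarks) I would define a PVM $\ms P_0$ acting in block $n$ by $A\times\{j\}\mapsto\ms P_{\mu_n}(A)\otimes\kb{e^{(n)}_j}{e^{(n)}_j}$ for a fixed orthonormal basis $\{e^{(n)}_j\}_j$ of $\C^n$. Its range contains the block projections and, within block $n$, generates the maximal abelian algebra $L^\infty_{\mu_n}\otimes D_n$ (with $D_n$ the diagonal algebra in the chosen basis); hence altogether it generates a maximal abelian subalgebra of $\mc L(\mc M')$, so $\ms P_0$ is multiplicity-free and therefore rank-1 in the sense of the paper.

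The PVM $\ms P'$ is then recovered by the completely explicit deterministic Markov kernel that forgets the multiplicity label, $\beta\big(A\,\big|\,(x,n,j)\big)=\chi_A(x)$: indeed $\int_Z\beta(A\,|\,z)\,\ms P_0(dz)=\sum_n\ms P_{\mu_n}(A)\otimes\id=\ms P'(A)$, so $\ms P'=\beta[\ms P_0]$, i.e.\ $\ms P'\leq_{\rm post}\ms P_0$. Composing the two steps and pulling the normal map $\Psi^*$ through the (weakly defined) integral gives $\ms M_0(A)=\Psi^*\big(\int_Z\beta(A|z)\,\ms P_0(dz)\big)=\int_Z\beta(A|z)\,\Psi^*\big(\ms P_0(dz)\big)=\beta[\Psi^*\circ\ms P_0](A)$, which is exactly $\ms M_0\leq_{\rm h}\ms P_0$. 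The main obstacle is the middle step, namely that a \emph{general} PVM post-processes from a rank-1 PVM; this is where the multiplicity/maximal-abelian structure is indispensable, and the delicate point is matching the paper's $L^2_\mu$-Na\"{\i}mark definition of ``rank-1'' with multiplicity-freeness (range generating a maximal abelian algebra). By contrast, arranging the refined value space to be standard Borel and transporting the whole construction to $\big(\R,\mc B(\R)\big)$ is routine.
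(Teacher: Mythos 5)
Your proposal is correct and follows essentially the same route as the paper's proof: a Na\u{\i}mark dilation realized as a pre-processing channel $\rho\mapsto V\rho V^*$, refinement of the dilating PVM by multiplicity labels into a rank-1 PVM, the deterministic label-forgetting Markov kernel as the post-processing, and Proposition \ref{prop:Gset} to obtain the inclusion $\mf G(\ms M_0,\ms N_0)\subseteq\mf G(\ms P_0,\ms Q_0)$. The only cosmetic difference is that you package the multiplicity structure via the Hahn--Hellinger decomposition $\bigoplus_n L^2_{\mu_n}\otimes\C^n$, whereas the paper uses the direct-integral form $\int^\oplus_\R\mc M(x)\,d\mu(x)$ of the minimal dilation with a measurable field of bases (the complete refinement of \cite{Pellonpaa2014}); these are equivalent presentations of the same spectral data, and your identification of rank-1 PVMs with multiplicity-free ones is indeed valid.
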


\begin{proof}
According to \cite{HyPeYl2007}, there are $\sigma$-finite measures $\mu,\,\nu:\mc B(\R)\to[0,\infty]$ such that $\ms M_0$ and, respectively, $\ms N_0$ have (minimal) Na\u{\i}mark dilations $(\mc M,\ms P,V)$ and, respectively, $(\mc N,\ms Q,W)$ where
$$
\mc M=\int^\oplus_\R\mc M(x)\,d\mu(x),\qquad\mc N=\int^\oplus_\R\mc N(y)\,d\nu(y)
$$
where, in turn, $x\mapsto\mc M(x)$ and $y\mapsto\mc N(y)$ are measurable fields of Hilbert spaces. Let us fix measurable fields of bases
$$
\R\ni x\mapsto\{e_k(x)\}_{k=1}^\infty\subset\mc M(x),\qquad\R\ni y\mapsto\{f_\ell(y)\}_{\ell=1}^\infty\subset\mc N(y)
$$
defining the above fields of Hilbert spaces. This means that, for all $x,\,y\in\R$, denoting the dimension of $\mc M(x)$ by $m(x)\in\N\cup\{\infty\}$ and that of $\mc N(y)$ by $n(y)\in\N\cup\{\infty\}$, $\{e_k(x)\}_{k=1}^{m(x)}$ is an orthonormal basis of $\mc M(x)$, $e_k(x)=0$ whenever $k>m(x)$, $\{f_\ell(y)\}_{\ell=1}^{n(y)}$ is an orthonormal basis of $\mc N(y)$, and $f_\ell(y)=0$ whenever $\ell>n(y)$. Above, on one hand, the set of indices $k=1,\ldots,\,0$ and, on the other hand, the set of indices $k>\infty$ are understood as empty (respectively, in the cases when, e.g.,\ $m(x)=0$ and $m(x)=\infty$). Moreover,
$$
\ms P(A)=\int^\oplus_\R\chi_A(x)\id_{\mc M(x)}\,d\mu(x),\qquad\ms Q(B)=\int^\oplus_\R\chi_B(y)\id_{\mc N(y)}\,d\nu(y)
$$
for all $A,\,B\in\mc B(\R)$. Following the complete refinement concept introduced in \cite{Pellonpaa2014}, let us define the rank-1 PVMs $\ms P_0:\mc B(\R\times\N)\to\mc L(\mc M)$ and $\ms Q_0:\mc B(\R\times\N)\to\mc L(\mc N)$ such that
$$
\ms P_0(A\times\{k\})=\int^\oplus_\R\chi_A(x)|e_k(x)\>\<e_k(x)|\,d\mu(x),\qquad\ms Q_0(B\times\{\ell\})=\int^\oplus_\R\chi_B(y)|f_\ell(y)\>\<f_\ell(y)|\,d\nu(y)
$$
for all $A,\,B\in\mc B(\R)$ and $k,\,\ell\in\N$; note that these PVMs are not actually defined on $\mc B(\R)$ but, as their value spaces are standard Borel, they can be turned into real PVMs with simple relabeling.

Define the channels $\Phi_V:\mc T(\mc K_1)\to\mc T(\mc M)$ and $\Phi_W:\mc T(\mc K_2)\to\mc T(\mc N)$ through $\Phi_V(\rho)=V\rho V^*$ and $\Phi_W(\sigma)=W\sigma W^*$ for all $\rho\in\mc T(\mc K_1)$ and $\sigma\in\mc T(\mc K_2)$. Furthermore, define the Markov kernels $\alpha(\cdot|\cdot):\mc B(\R)\times(\R\times\N)\to[0,1]$ and $\beta(\cdot|\cdot):\mc B(\R)\times(\R\times\N)\to[0,1]$ through $\alpha(C|x,k)=\chi_C(x)$ and $\beta(D|y,\ell)=\chi_D(y)$ for all $C,\,D\in\mc B(\R)$, $x,\,y\in\R$, and $k,\,\ell\in\N$. Using the fact that $\sum_{k=1}^\infty\ms P_0(A\times\{k\})=\ms P(A)$ for all $A\in\mc B(\R)$, we now have, for all $C\in\mc B(\R)$,
\begin{align*}
\alpha[\Phi_V^*\circ\ms P_0](C)=&\sum_{k=1}^\infty\int_\R\alpha(C|x,k)V^*\ms P_0(dx\times\{k\})V=\int_\R\chi_C(x)V^*\sum_{k=1}^\infty\ms P_0(dx\times\{k\})V\\
=&\int_C V^*\ms P(dx)V=\int_C\ms M_0(dx)=\ms M_0(C).
\end{align*}
Similarly, $\ms N_0=\beta[\Phi^*_W\circ\ms Q_0]$, so that $\ms M_0\leq_{\rm h}\ms P_0$ and $\ms N_0\leq_{\rm h}\ms Q_0$. The final claim follows now from Proposition \ref{prop:Gset}.
\end{proof}

\subsection{BLMPP protocol for approximate joint POVMs of discrete PVMs with an emphasis on MUBs}\label{subsec:BLMPPMUB}

In this subsection we study, as an example, the BLMPP resource potential of a pair of rank-1 PVMs in finite dimensions where the target observables are approximate joint POVMs for PVMs. Note that, according to Theorem \ref{theor:BLMPPmax}, this pair has a maximal resource potential, i.e.,\ the class of (joint) POVMs reached from it through BLMPP is maximal.

We study PVMs with values $\{1,\ldots,M\}$ and $\{1,\ldots,N\}$ where $M,\,N\in\N$. We treat these sets as $\Z_M$ and $\Z_N$, respectively; this notational convention will soon prove useful. Our target marginal POVMs are PVMs $\ms P=(P_k)_{k\in\Z_M}$ and $\ms Q=(Q_\ell)_{\ell\in\Z_N}$ in some Hilbert space $\hil$. As a resource in the BLMPP protocol we will use the pair $(\ms Q^M,\ms Q^N)$ where, for $L\in\{M,N\}$, $\ms Q^L=(|e^L_k\>\<e^L_k|)_{k\in\Z_L}$ is the rank-1 PVM in $\C^L$ determined by a fixed orthonormal basis $\{e^L_k\}_{k\in\Z_L}\subset\C^L$. Moreover, we define, for all $k\in\Z_L$ ($L\in\{M,N\}$) the unitary $U^L_k=\sum_{i\in\Z_L}|e^L_{i+k}\>\<e^L_i|\in\mc U(\C^L)$. Motivated by the finite standard measurement interaction unitary of Subsection \ref{subsec:finite}, we define the unitaries
$$
U:=\sum_{k\in\Z_M}P_k\otimes U^M_k\otimes\id_N,\qquad V:=\sum_{\ell\in\Z_N}Q_\ell\otimes\id_M\otimes U^N_\ell
$$
on $\hil\otimes\C^M\otimes\C^N$. For any $\sigma\in\mc S(\C^M\otimes\C^N)$, we now define the channel $\Phi_\sigma:\mc T(\hil)\to\mc T(\C^M\otimes\C^N)$ through
$$
\Phi_\sigma(\rho)={\rm tr}_1\big[VU(\rho\otimes\sigma)U^*V^*\big],\qquad\rho\in\mc T(\hil).
$$
This channel can be seen to correspond to a procedure where the target system is coupled to a bipartite ancillary system in the initial state $\sigma$ followed by two consecutive modified standard measurement interactions. This time, we do not post-process, meaning that the joint POVM generated by BLMPP through $\Phi_\sigma$ and using $(\ms Q^M,\ms Q^N)$ as its resource is $\ms G=(G_{k,\ell})_{(k,\ell)\in\Z_M\times\Z_N}$,
\begin{align*}
G_{i,j}=&\Phi_\sigma^*(|e^M_i\>\<e^M_i|\otimes|e^N_j\>\<e^N_j|)\\
=&\sum_{k,k'\in\Z_M}\sum_{\ell\in\Z_N}\tr{\sigma(|e^M_{i-k}\>\<e^M_{i-k'}|\otimes|e^N_{j-\ell}\>\<e^N_{j-\ell}|)}P_k Q_\ell P_{k'}
\end{align*}
for all $i\in\Z_M$ and $j\in\Z_N$ where the final form is obtained through straightforward calculation.

Denoting the first margin of $\sigma$ as $\sigma_1$, one easily sees that the first margin $\ms G^{(1)}=(G^{(1)}_i)_{i\in\Z_M}$ of the joint POVM $\ms G$ is given by
$$
G^{(1)}_i=\sum_{j\in\Z_N}G_{i,j}=\sum_{k\in\Z_M}\<e^M_{i-k}|\sigma_1 e^M_{i-k}\>P_k
$$
for all $i\in\Z_M$, i.e.,\ $\ms G^{(1)}$ is the convolution $p^{\ms Q^M}_{\sigma_1}*\ms P$ of $\ms P$ by the $\ms Q^M$-distribution in $\sigma_1$. The second margin $\ms G^{(2)}=(G^{(2)}_j)_{j\in\Z_N}$ is not that simple: To write down the general form of this margin, let us fix a spectral decomposition $\sigma_1=\sum_{m\in\Z_M}s_m|\eta_m\>\<\eta_m|$ and define the operators
$$
K_{m,n}:=\sqrt{s_m}\sum_{k\in\Z_M}\<e^M_{n-k}|\eta_m\>P_k,\qquad m,\,n\in\Z_M.
$$
Using these definitions and notations, one easily finds that
$$
G^{(2)}_j=\sum_{i\in\Z_M}G_{i,j}=\sum_{m,n\in\Z_M}K_{m,n}^*(p^{\ms Q^N}_{\sigma_2}*\ms Q)_j K_{m,n}
$$
for all $j\in\Z_N$ where $(p^{\ms Q^N}_{\sigma_2}*\ms Q)_j=\sum_{\ell\in\Z_N}\<e^N_{j-\ell}|\sigma_2 e^N_{j-\ell}\>Q_\ell$ and $\sigma_2$ is the second margin of $\sigma$.

Let us now assume that $\sigma_1=|h^M\>\<h^M|$ where $h^M=\sqrt{\sqrt{M}/\big(2(\sqrt{M}+1)\big)}(e^M_0+f^M_0)$ where, in turn, $f^M_0=M^{-1/2}(e^M_0+\cdots+e^M_{M-1})$. It now follows that $K_{m,n}=\delta_{m,0}K_n$ where $K_n=\sqrt{\sqrt{M}/\big(2(\sqrt{M}+1)\big)}(P_n+M^{-1/2}\id_M)$. Using this, one easily arrives at
\begin{align}
\ms G^{(1)}=&\frac{\sqrt{M}+2}{2(\sqrt{M}+1)}\ms P+\frac{\sqrt{M}}{2(\sqrt{M}+1)}\ms T\label{eq:uusiG1}\\
\ms G^{(2)}=&\frac{\sqrt{M}+2}{2(\sqrt{M}+1)}p^{\ms Q^N}_{\sigma_2}*\ms Q+\frac{\sqrt{M}}{2(\sqrt{M}+1)}\ms Q^{\ms P},\label{eq:uusiG2}
\end{align}
where $\ms T=(T_i)_{i\in\Z_M}$ is the trivial observable determined by $T_i=(1/M)\id_\hil$ for all $i\in\Z_M$ and $\ms Q^{\ms P}=(Q^{\ms P}_j)_{j\in\Z_N}$ is defined through $Q^{\ms P}_j=\sum_{i\in\Z_M}P_i Q_j P_i$ for all $j\in\Z_N$. If $\sigma_2=|e^N_0\>\<e^N_0|$, then $p^{\ms Q^N}_{\sigma_2}$ is supported by $\{0\}$ and Equations \eqref{eq:uusiG1} and \eqref{eq:uusiG2} imply that $\ms G^{(1)}$ is a mixture of $\ms P$ and $\ms T$, where the target PVM $\ms P$ appears with the weight $w_M:=(\sqrt{M}+2)/\big(2(\sqrt{M}+1)\big)$, and $\ms G^{(2)}$ is a mixture of $\ms Q$ and $\ms Q^{\ms P}$ where $\ms Q$ appears also with the weight $w_M$.

Let us lastly consider the case where $\dim{\hil}=:D<\infty$ and $\ms P$ and $\ms Q$ are maximally incompatible in the sense that they are associated to mutually unbiased bases (MUBs). This means that there are orthonormal bases $\{\fii_i\}_{i\in\Z_D},\,\{\psi_j\}_{j\in\Z_D}$ such that $|\<\fii_i|\psi_j\>|=D^{-1/2}$ for all $i,\,j\in\Z_D$ such that $i\neq j$ and $\ms P=(|\fii_i\>\<\fii_i|)_{i\in\Z_D}$ and $\ms Q=(|\psi_j\>\<\psi_j|)_{j\in\Z_D}$. Now $M=D=N$ and we still assume that $\sigma_1=|h^D\>\<h^D|$ and $\sigma_2=|e^D_0\>\<e^D_0|$ (whence $\sigma=\sigma_1\otimes\sigma_2$). Using the MUB condition, $\ms Q^{\ms P}=\ms T$ is the constant trivial observable, so that Equations \eqref{eq:uusiG1} and \eqref{eq:uusiG2} imply
\begin{align*}
\ms G^{(1)}=&\frac{\sqrt{D}+2}{2(\sqrt{D}+1)}\ms P+\frac{\sqrt{D}}{2(\sqrt{D}+1)}\ms T=\frac{1}{2}\Big(1+\frac{1}{\sqrt{D}}\Big)\ms P+\frac{1}{2}\Big(1-\frac{1}{\sqrt{D}}\Big)\ms M\\
\ms G^{(2)}=&\frac{\sqrt{D}+2}{2(\sqrt{D}+1)}\ms Q+\frac{\sqrt{D}}{2(\sqrt{D}+1)}\ms T=\frac{1}{2}\Big(1+\frac{1}{\sqrt{D}}\Big)\ms Q+\frac{1}{2}\Big(1-\frac{1}{\sqrt{D}}\Big)\ms N
\end{align*}
where $\ms M=(M_i)_{i\in\Z_D}$ is defined through $M_i=(D-1)^{-1}(\id_D-|\fii_i\>\<\fii_i|)$ for all $i\in\Z_D$ and $\ms N=(N_j)_{j\in\Z_D}$ is defined through $N_j=(D-1)^{-1}(\id_D-|\psi_j\>\<\psi_j|)$ for all $j\in\Z_D$. Thus, $\ms G$ provides the optimal approximate joint measurement for $\ms P$ and $\ms Q$ since its marginals correspond to mixing the least amount of noise to the targets $\ms P$ and $\ms Q$ so that the mixtures become jointly measurable; see the proof of Theorem 4 of \cite{Haapasalo2015} and Section III.G of \cite{DeFaKa2019}.

\section{Conclusions}

We have investigated using broadcasting channels for generating (approximate) joint measurements for POVMs. For two PVMs $\ms P$ and $\ms Q$, this strategy is successful in essentially all cases when the value spaces are discrete. However in the continuous case, we need to assume that the target POVM is absolutely continuous w.r.t.\ the product PVM $\ms P\otimes\ms Q$. We also derive a no-go result: the canonical joint measurement of a continuous PVM with itself cannot be generated through broadcasting. In the case of fuzzy POVMs, the situation is more difficult, but we are able to derive necessary and sufficient conditions for broadcastability when the POVMs to be jointly measured are mixtures of PVMs with coloured trivial noise. We investigate the realization of covariant phase space observables through broadcasting where the local measurements are the canonical position and momentum observables. We have also investigated what restrictions symmetry requirements place on broadcastability. Finally, we have introduced and studied scenarios where we are allowed to broadcast the quantum system, then do local measurements, and finally classically post-processing the outcome data. In this setting, we have identified rank-1 PVMs as the most resourceful local measurements.

Despite the no-go result for canonical joint measurements of continuous PVMs, it remains a possibility that such continuous joint measurements could be generated approximately through broadcasting with a vanishing error. However, the suitable approximations and their quantification remain to be identified. Moreover, our most explicit and easiest to check conditions for broadcastability of local fuzzy POVMs are only valid for mixtures of PVMs and classical noise; quite a limited set of local measurements. Finding similar simple conditions for more general fuzzy POVMs remains an interesting problem. In our current formulation of the BLMPP protocol, we allow any broadcasting channels and local measurements. However, broadcasting is a rather expensive resource. In realistic laboratory settings, we might have access to only restricted sets of broadcasting channels and local measurements. It is thus crucial to find the most fruitful local measurements for the BLMPP protocol in realistic restricted settings. It seems likely that, in general, rank-I PVMs might no longer be the best local measurements one could use, but this requires further study.

\section*{Acknowledgements}

\noindent{}This work is funded by the National Research Foundation, Prime Minister's Office, Singapore and the Ministry of Education, Singapore under the Research Centres of Excellence programme.

\end{document}